\let\citationorig\citation
\def\citation#1{\citationorig{#1}\@for\@tempa:=#1\do{\@ifundefined{cit@\@tempa}{\global\@namedef{cit@\@tempa}{}}{}}}
\let\bibitemorig\bibitem
\def\bibitem#1{\@ifundefined{cit@#1}{\typeout{LaTeX Warning: Unused bibitem `#1'}}{}\bibitemorig{#1}}
\let\old@setaddresses\@setaddresses
\def\@setaddresses{\bigskip{\parindent 0pt\let\scshape\relax\let\ttfamily\relax\old@setaddresses}}
\def\periodsf{\spacefactor 3000 \space}
\newtheorem{theorem}{Theorem}[section]
\newtheorem{lemma}[theorem]{Lemma}
\newtheorem{proposition}[theorem]{Proposition}
\theoremstyle{definition}
\newenvironment{thmcontinued}{\begingroup\renewcommand\thmhead[3]{}\@thm{\th@plain\thm@headpunct{}\thm@headsep\z@}{}{}}{\@endtheorem\endgroup}
\newcommand\nobreakpar{\par\nobreak\@afterheading}
\renewenvironment{enumerate}{\begin{enumorig}[label=\textup{(\roman*)}, noitemsep, topsep=3pt plus 3pt, leftmargin=*, widest=iii]}{\end{enumorig}}
\newenvironment{enumeratea}{\begin{enumorig}[label=\textup{(\alph*)}, noitemsep, topsep=3pt plus 3pt, leftmargin=*]}{\end{enumorig}}
\newenvironment{enumeratet}{\leavevmode\nobreakpar\begin{enumorig}[series=theorem, label=\textup{(\arabic*)}, noitemsep, topsep=0pt, labelsep=4pt, leftmargin=*]}{\end{enumorig}}
\newenvironment{enumeratetcontinued}[1]{\begin{enumorig}[resume=theorem, label=\textup{(\arabic*)~(#1)}, ref=\textup{(\arabic*)}, noitemsep, topsep=0pt, labelsep=4pt, leftmargin=*]}{\end{enumorig}}
\newenvironment{enumeratex}[1]{\begin{enumorig}[label=\textup{(#1\arabic*)}, noitemsep, topsep=3pt plus 3pt, leftmargin=*]}{\end{enumorig}}
\newenvironment{enumeratefixed}[1]{\begin{enumorig}[label=\textup{(#1)}, noitemsep, topsep=3pt plus 3pt, leftmargin=*]}{\end{enumorig}}
\renewenvironment{itemize}{\begin{itemorig}[label=\textbullet, noitemsep, topsep=3pt plus 3pt, leftmargin=1.1em]}{\end{itemorig}}
\DeclareMathOperator{\dom}{dom}
\newcommand{\superimpose}[2]{\ooalign{$#1\@firstoftwo#2$\cr\hfil$#1\@secondoftwo#2$\hfil\cr}}
\let\leq\leqslant
\let\geq\geqslant
\let\setminus\smallsetminus
\let\coloneq\colonequals
\let\Theta\varTheta
\let\Omega\varOmega
\def\includes{\mathrel{\mathpalette\superimpose{{\supset}{\circ}}}}
\let\overlaps\between
\def\setN{\mathbb{N}}
\def\setR{\mathbb{R}}
\def\setZ{\mathbb{Z}}
\def\calC{\mathcal{C}}
\def\calF{\mathcal{F}}
\def\calG{\mathcal{G}}
\def\calI{\mathcal{I}}
\def\gamABS{\mathsf{CABS}}
\def\gamCOCO{\mathsf{COCO}}
\def\gamG{\mathsf{G}}
\def\gamINT{\mathsf{INT}}
\def\gamIOV{\mathsf{CIOV}}
\def\leftside{\ell}
\def\rightside{r}
\title[On-line approach to off-line coloring problems]{On-line approach to off-line coloring problems\\on graphs with geometric representations}
\author{Tomasz Krawczyk\and Bartosz Walczak}
\address{Department of Theoretical Computer Science, Faculty of Mathematics and Computer Science, Jagiellonian University, Kraków, Poland}
\email{\href{mailto:krawczyk@tcs.uj.edu.pl}{krawczyk@tcs.uj.edu.pl}, \href{mailto:walczak@tcs.uj.edu.pl}{walczak@tcs.uj.edu.pl}}
\thanks{A journal version of this paper appeared in \href{http://doi.org/10.1007/s00493-016-3414-x}{\emph{Combinatorica}, in press}.}
\thanks{A preliminary version of this paper appeared as: \href{http://doi.org/10.1007/978-3-662-43948-7_61}{Coloring relatives of interval overlap graphs via on-line games, in: Javier Esparza, Pierre Fraigniaud, Thore Husfeldt, and Elias Koutsoupias (eds.), \emph{41st International Colloquium on Automata, Languages, and Programming (ICALP 2014)}, part~I, vol.~8572 of \emph{Lecture Notes Comput.\ Sci.}, pp.~738--750, Springer, Berlin, 2014}.}
\thanks{Tomasz Krawczyk and Bartosz Walczak were partially supported by National Science Center of Poland grant 2011/03/B/ST6/01367.
Bartosz Walczak was partially supported by Swiss National Science Foundation grant 200020-144531.}
\begin{document}

\begin{abstract}
The main goal of this paper is to formalize and explore a connection between chromatic properties of graphs with geometric representations and competitive analysis of on-line algorithms, which became apparent after the recent construction of triangle-free geometric intersection graphs with arbitrarily large chromatic number due to Pawlik et~al.
We show that on-line graph coloring problems give rise to classes of \emph{game graphs} with a natural geometric interpretation.
We use this concept to estimate the chromatic number of graphs with geometric representations by finding, for appropriate simpler graphs, on-line coloring algorithms using few colors or proving that no such algorithms exist.

We derive upper and lower bounds on the maximum chromatic number that rectangle overlap graphs, subtree overlap graphs, and interval filament graphs (all of which generalize interval overlap graphs) can have when their clique number is bounded.
The bounds are absolute for interval filament graphs and asymptotic of the form $(\log\log n)^{\smash{f(\omega)}}$ for rectangle and subtree overlap graphs, where $f(\omega)$ is a polynomial function of the clique number and $n$ is the number of vertices.
In particular, we provide the first construction of geometric intersection graphs with bounded clique number and with chromatic number asymptotically greater than $\log\log n$.

We also introduce a concept of $K_k$-free colorings and show that for some geometric representations, $K_3$-free chromatic number can be bounded in terms of clique number although the ordinary ($K_2$-free) chromatic number cannot.
Such a result for segment intersection graphs would imply a well-known conjecture that $k$-quasi-planar geometric graphs have linearly many edges.
\end{abstract}

\maketitle

\section{Introduction}

Graphs represented by geometric objects have been attracting researchers for many reasons, ranging from purely aesthetic to practical ones.
A problem which has been extensively studied for this kind of graphs is proper coloring: given a family of objects, one wants to color them with few colors so that any two objects generating an edge of the graph obtain distinct colors.
The off-line variant of the problem, in which the entire graph to be colored is known in advance, finds practical applications in areas like channel assignment, map labeling, and VLSI design.
The on-line variant, in which the graph is being revealed piece by piece and the coloring agent must make irrevocable decisions without knowledge of the entire graph, is a common model for many scheduling problems.
A natural connection between the two variants, which is discussed in this paper, allows us to establish new bounds on the chromatic number in various classes of graphs by analyzing the on-line problem in much simpler classes of graphs.

We write $\chi$, $\omega$ and $n$ to denote the chromatic number, the clique number (maximum size of a clique), and the number of vertices of a graph under consideration, respectively.
If $\chi=\omega$ holds for a graph $G$ and all its induced subgraphs, then $G$ is \emph{perfect}.
A class of graphs $\calG$ is \emph{$\chi$-bounded} or \emph{near-perfect} if there is a function $f\colon\setN\to\setN$ such that every graph in $\calG$ satisfies $\chi\leq f(\omega)$.
All graphs that we consider are finite.

\subsection*{Geometric intersection and overlap graphs}

Any finite family of sets $\calF$ gives rise to two graphs with vertex set $\calF$: the \emph{intersection graph}, whose edges connect pairs of intersecting members of $\calF$, and the \emph{overlap graph}, whose edges connect pairs of members of $\calF$ that \emph{overlap}, that is, intersect but are not nested.
In this paper, we do not want to distinguish isomorphic graphs, and hence we call a graph $G$ an \emph{intersection}/\emph{overlap graph} of $\calF$ if there is a bijective mapping $\mu\colon V(G)\to\calF$ such that $uv\in E(G)$ if and only if $\mu(u)$ and $\mu(v)$ intersect/overlap.
Depending on the context, we call the mapping $\mu$ or the family $\calF$ an \emph{intersection}/\emph{overlap model} or \emph{representation} of $G$.
Ranging over all representations of a particular kind, for example, by sets with a specific geometric shape, we obtain various classes of intersection and overlap graphs.
Prototypical examples are \emph{interval graphs} and \emph{interval overlap graphs}, which are intersection and overlap graphs, respectively, of closed intervals in $\setR$.
Interval overlap graphs are the same as \emph{circle graphs}---intersection graphs of chords of a circle.

Interval graphs are well known to be perfect.
Interval overlap graphs are no longer perfect, but they are near-perfect: Gyárfás \cite{Gya85} proved that every interval overlap graph satisfies $\chi=O(\omega^24^\omega)$, which was improved to $\chi=O(\omega^22^\omega)$ by Kostochka \cite{Kos88}, and further to $\chi=O(2^\omega)$ by Kostochka and Kratochvíl \cite{KK97} (specifically, they proved $\chi\leq 50\cdot 2^\omega-32\omega-64$, which was later improved to $\chi\leq 21\cdot 2^\omega-24\omega-24$ by Černý \cite{Cer07}).
Currently the best lower bound on the maximum chromatic number of an interval overlap graph with clique number $\omega$ is $\Omega(\omega\log\omega)$, due to Kostochka \cite{Kos88}.
The exponential gap between the best known upper and lower bounds remains open for over 30 years.
For triangle-free interval overlap graphs, the bound is $\chi\leq 5$ \cite{Kos88}, and it is tight \cite{Age96}.

An overlap model is \emph{clean} if it has no three sets such that two overlapping ones both contain the third one.
An overlap graph is \emph{clean} if it has a clean overlap model.
Clean overlap graphs are often much easier to color than general (non-clean) ones.
For example, Kostochka and Milans \cite{KM12} proved that clean interval overlap graphs satisfy $\chi\leq 2\omega-1$.
In this paper, several upper bounds on the chromatic number are proved first for clean overlap graphs and then (with weaker bounds) for general overlap graphs.

Intervals in $\setR$ are naturally generalized by axis-parallel rectangles in $\setR^2$ and by subtrees of a tree, which give rise to the following classes of graphs:
\begin{itemize}
\item \emph{chordal graphs}---intersection graphs of subtrees of a tree, originally defined as graphs with no induced cycles of length greater than $3$, see \cite{Gav74},
\item \emph{subtree overlap graphs}---overlap graphs of subtrees of a tree, introduced in \cite{Gav00},
\item \emph{rectangle graphs}---intersection graphs of axis-parallel rectangles in the plane,
\item \emph{rectangle overlap graphs}---overlap graphs of axis-parallel rectangles in the plane.
\end{itemize}

Chordal graphs are perfect.
Rectangle graphs are near-perfect: Asplund and Grün\-baum \cite{AG60} proved that every rectangle graph satisfies $\chi=O(\omega^2)$ (specifically, they proved $\chi\leq 4\omega^2-3\omega$, which was later improved to $\chi\leq 3\omega^2-2\omega-1$ by Hendler \cite{Hen98}).
Kostochka \cite{Kos04} claimed existence of rectangle graphs with chromatic number $3\omega$, and no better construction is known.

Rectangle overlap graphs are no longer near-perfect: Pawlik et~al.\ \cite{PKK+13} presented a construction of triangle-free rectangle overlap graphs with chromatic number $\Theta(\log\log n)$.
This construction works also for a variety of other geometric intersection graphs \cite{PKK+13,PKK+14} and is used in all known counterexamples to a conjecture of Scott on graphs with an excluded induced subdivision \cite{CELOdM16}.
Actually, it produces graphs that we call \emph{interval overlap game graphs}, which form a subclass of rectangle overlap graphs, segment intersection graphs, and subtree overlap graphs.
This implies that subtree overlap graphs are not near-perfect either.
Interval overlap game graphs play an important role in this paper, but their definition requires some preparation, so it is postponed until Section \ref{sec:rectangle-subtree}.
It is proved in \cite{KPW15} that triangle-free rectangle overlap graphs have chromatic number $O(\log\log n)$, which matches the above-mentioned lower bound.
It is worth noting that intersection graphs of axis-parallel boxes in $\setR^3$ are not near-perfect either: Burling \cite{Bur65} constructed such graphs with no triangles and with chromatic number $\Theta(\log\log n)$.
We reprove Burling's result in Section \ref{sec:examples}.

\emph{Interval filament graphs} are intersection graphs of \emph{interval filaments}, which are continuous non-negative functions defined on closed intervals with value zero on the endpoints.
Interval filament graphs were introduced in \cite{Gav00} as a generalization of interval overlap graphs, polygon-circle graphs, chordal graphs and co-comparability graphs.
Every interval filament graph is a subtree overlap graph \cite{ES07}, and the overlap graph of any collection of subtrees of a tree $T$ intersecting a common path in $T$ is an interval filament graph \cite{ES07}.
We comment more on this in Section \ref{sec:rectangle-subtree}.
An interval filament graph is \emph{domain-non-overlapping} if it has an intersection representation by interval filaments whose domains are pairwise non-overlapping intervals.

\emph{Outerstring graphs} are intersection graphs of curves in a halfplane with one endpoint on the boundary of the halfplane.
Every interval filament graph is an outerstring graph.

\emph{String graphs} are intersection graphs of arbitrary curves in the plane.
Every graph of any class considered above is a string graph.
For example, a rectangle overlap graph can be represented as an intersection graph of boundaries of rectangles, and a subtree overlap graph defined by subtrees of a tree $T$ can be represented as the intersection graph of closed curves encompassing these subtrees in a planar drawing of $T$.
The best known upper bound on the chromatic number of string graphs is $(\log n)^{\smash{O(\log\omega)}}$, due to Fox and Pach \cite{FP14}.

The following diagram illustrates the inclusions between most of the classes defined above:

\begin{center}
\begin{tikzpicture}[xscale=2.5,yscale=1.2]
\node (intov) at (0,0) {interval overlap graphs${}={}$circle graphs};
\node (game) at (-1.2,1) {interval overlap game graphs};
\node (intfil) at (1.2,1) {interval filament graphs};
\node (rectov) at (-2,2) {rectangle overlap graphs};
\node (subov) at (0,2) {subtree overlap graphs};
\node (outer) at (2,2) {outerstring graphs};
\node (string) at (0,3) {string graphs};
\path (intov) edge (game) edge (intfil);
\path (game) edge (rectov) edge (subov);
\path (intfil) edge (subov) edge (outer);
\path (string) edge (rectov) edge (subov) edge (outer);
\end{tikzpicture}
\end{center}

\subsection*{Results}

Here is the summary of the results of this paper.
In what follows, we write $O_\omega$ and $\Theta_\omega$ to denote asymptotics with $\omega$ fixed as a constant.

\begin{theorem}
\label{thm:filament}
\begin{enumeratet}
\item\label{item:filament-upper} Every interval filament graph satisfies\/ $\chi\leq g(\omega)\binom{\omega+1}{2}$, where\/ $g(\omega)$ denotes the upper bound on the chromatic number of interval overlap graphs with clique number\/ $\omega$.
\item\label{item:non-overlapping-filament-upper} Every domain-non-overlapping interval filament graph satisfies\/ $\chi\leq\binom{\omega+1}{2}$.
\item\label{item:non-overlapping-filament-lower} There are domain-non-overlapping interval filament graphs with\/ $\chi=\binom{\omega+1}{2}$.
\end{enumeratet}
\end{theorem}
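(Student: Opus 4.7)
My plan is to address the three parts in the order (ii), (i), (iii): (ii) is the technical heart, (i) follows from (ii) together with a known theorem on interval overlap graphs, and (iii) requires a tailor-made construction.

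For (ii) I proceed by induction on $\omega$. The base case $\omega=1$ is immediate since a clique of size one forces no two filaments to intersect. For the inductive step I order the filaments by decreasing domain length (ties broken arbitrarily) and color greedily. When processing a filament $f$, its previously colored neighborhood $N^-(f)$ splits into same-domain predecessors $S^-(f)$ and strictly-larger-domain predecessors $A^-(f)$ that intersect $f$. Since any two filaments sharing a domain must meet at the endpoints where both vanish, $S^-(f)\cup\{f\}$ is a clique, whence $|S^-(f)|\leq\omega-1$. The set $A^-(f)$ is itself a domain-non-overlapping interval filament family, and any clique inside $A^-(f)$ can be extended by $f$, so its clique number is at most $\omega-1$. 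By the inductive hypothesis $A^-(f)$ is $\binom{\omega}{2}$-colorable, which combined with $|S^-(f)|\leq\omega-1$ gives at most $\binom{\omega}{2}+(\omega-1)=\binom{\omega+1}{2}-1$ distinct colors on $N^-(f)$, leaving a free color in a palette of size $\binom{\omega+1}{2}$ for $f$.

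The delicate step, which I expect to require the most care, is that the restriction of the greedy on the full graph~$G$ to $A^-(f)$ does not coincide with the greedy on $A^-(f)$ in isolation and could use more than $\chi(A^-(f))$ colors on $A^-(f)$ because of constraints from filaments outside $A^-(f)$. To resolve this I would strengthen the inductive invariant, replacing the global~$\omega$ by a local parameter $\kappa(h)$ defined as the maximum clique size in the already-colored ``above-closure'' of~$h$, and show inductively that every filament~$h$ receives a color at most $\binom{\kappa(h)+1}{2}$. An equivalent route, in the spirit of the paper's main theme, is to recast the statement as an on-line coloring game in which filaments are revealed in decreasing-domain order and to exhibit an on-line algorithm using at most $\binom{\omega+1}{2}$ colors, which then transfers automatically to the off-line bound.

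For (i) I combine (ii) with the Kostochka--Kratochv\'il theorem that interval overlap graphs satisfy $\chi=O(2^\omega)$. The key observation is that two interval filaments whose domains overlap as intervals must intersect as curves: at the endpoint of the overlap where one filament vanishes and the other is nonnegative, the difference function has the correct sign, so the intermediate value theorem produces an intersection point. Hence the interval overlap graph of the domains is a subgraph of the filament graph and has clique number at most~$\omega$. A Kostochka--Kratochv\'il coloring of the domain overlap graph with $O(2^\omega)$ colors partitions the filaments into that many sub-families with pairwise non-overlapping domains, each of which is $\binom{\omega+1}{2}$-colorable by (ii); multiplying yields the stated bound.

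For (iii) I construct, by induction on~$\omega$, a family $G_\omega$ of interval filaments with pairwise non-overlapping domains, clique number~$\omega$, and chromatic number $\binom{\omega+1}{2}$. Starting from $G_1$, a single filament, I build $G_\omega$ by nesting suitably many copies of $G_{\omega-1}$ inside a common parent domain and attaching $\omega$ new ``capping'' filaments chosen to force $\omega$ additional colors in any proper coloring, mirroring an on-line adversary argument; the lower bound on~$\chi$ then follows by a pigeonhole across the colors appearing in the copies. The main obstacle in (iii) is verifying simultaneously that no $K_{\omega+1}$ is created and that the prescribed intersection pattern is realizable by interval filaments with non-overlapping domains, which I would confirm by an explicit geometric construction using the intermediate value theorem.
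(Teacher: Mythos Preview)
Your treatment of part (i) is exactly the paper's argument (its Lemma~4.1): the interval overlap graph of the domains is a subgraph of the filament graph, so a Kostochka--Kratochv\'{\i}l coloring partitions the filaments into $O(2^{\omega})$ domain-non-overlapping families, each handled by (ii).

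For part (ii), however, your direct greedy argument does not go through, and the patch you propose does not close the gap. You correctly observe that knowing $\chi(A^-(f))\le\binom{\omega}{2}$ says nothing about how many colors the global greedy has spent on $A^-(f)$. Your suggested invariant, that each $h$ receives a color at most $\binom{\kappa(h)+1}{2}$ where $\kappa(h)$ is the maximum clique size in the above-closure of $h$, does not self-propagate: for $g\in A^-(f)$ you only get $\kappa(g)\le\kappa(f)$, not $\kappa(g)\le\kappa(f)-1$, because a maximum clique through $g$ among its ancestors need not extend by $f$ (those ancestors need not intersect $f$). Hence the invariant merely says every $g\in A^-(f)$ has a color in $\{1,\dots,\binom{\kappa(f)+1}{2}\}$, which does not prevent all of those colors from appearing on $A^-(f)$. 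The paper takes the route you mention as an ``equivalent'' alternative: it shows that domain-non-overlapping interval filament graphs are precisely the game graphs of the on-line game $\gamCOCO(k)$ (co-comparability graphs of width $\le k$ presented in the up-growing order), and then invokes Felsner's theorem that the value of $\gamCOCO(k)$ is exactly $\binom{k+1}{2}$. The missing ingredient in your sketch is this citation; Felsner's on-line algorithm is \emph{not} First-Fit, and it is not clear that First-Fit (your greedy) attains $\binom{k+1}{2}$ in this setting.

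For part (iii) the paper again uses the game-graph correspondence together with the lower-bound half of Felsner's theorem: Presenter has a finite strategy forcing $\binom{k+1}{2}$ colors in $\gamCOCO(k)$, and the resulting game graph is realized as a domain-non-overlapping interval filament graph via the function-diagram representation of posets due to Golumbic--Rotem--Urrutia and Lov\'asz. Your inductive nesting construction is in the right spirit but, as written, the step ``attach $\omega$ new capping filaments chosen to force $\omega$ additional colors'' is doing all the work and is not justified; in particular you have not explained why a proper coloring cannot reuse colors from the nested copies on the caps while keeping $\omega$ unchanged. The cleanest way to make this precise is to run Felsner's adversary and then realize the game tree geometrically, which is what the paper does.
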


\begin{theorem}
\label{thm:subtree}
\begin{enumeratet}
\item\label{item:subtree-upper} Every subtree overlap graph satisfies\/ $\chi=O_\omega((\log\log n)^{\smash{\binom{\omega}{2}}})$.
\item\label{item:clean-subtree-upper} Every clean subtree overlap graph satisfies\/ $\chi=O_\omega((\log\log n)^{\omega-1})$.
\item\label{item:clean-subtree-lower} There are clean subtree overlap graphs with\/ $\chi=\Theta_\omega((\log\log n)^{\omega-1})$.
Consequently, there are string graphs with\/ $\chi=\Theta_\omega((\log\log n)^{\omega-1})$.
\end{enumeratet}
\end{theorem}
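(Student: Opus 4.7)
The three statements fit into the game-graph framework announced in the introduction: off-line chromatic bounds on subtree overlap graphs are controlled by on-line coloring games on much simpler classes, and the matching lower bounds come from the corresponding game graphs. The inductions on $\omega$ in all three parts share a base case provided by Theorem \ref{thm:filament}, since the overlap graph of a family of subtrees of $T$ all meeting a common path of $T$ is an interval filament graph.

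For the lower bound in \eqref{item:clean-subtree-lower}, my plan is to iterate the triangle-free Pawlik et al.\ construction along $\omega-1$ independent ``directions'' of the host tree. The case $\omega=2$ is already an interval overlap game graph of \cite{PKK+13} and, by the inclusion diagram in the introduction, is a clean subtree overlap graph. For $\omega\geq 3$, I would substitute into every vertex of $G_{\omega-1}$ a copy of the $\omega=2$ gadget, attaching each gadget in a fresh subtree region of the host tree so that different substitutions do not interfere. Each substitution step raises the clique number by exactly one (any new clique meets at most one gadget layer) and multiplies the chromatic lower bound by another $\log\log n$, producing $\chi=\Theta_\omega((\log\log n)^{\omega-1})$; cleanness is preserved because each gadget lives in a disjoint region of $T$, so no new overlapping pair can both contain a third subtree. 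The string graph consequence is immediate, since clean subtree overlap graphs are string graphs.

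For the upper bounds \eqref{item:subtree-upper} and \eqref{item:clean-subtree-upper}, I would design an on-line coloring algorithm for the appropriate on-line game and convert it back via the game graph correspondence. Root the host tree $T$ and process the subtrees in an order consistent with a depth-first traversal of $T$, so that each arriving subtree $S$ has its top vertex appearing after those of all subtrees placed above it. For each incoming $S$ I would compute a profile vector recording, for every $k\leq\omega$, which currently active $k$-cliques $S$ overlaps and which it is nested inside. The profile has $\binom{\omega}{2}$ coordinates in general and only $\omega-1$ under cleanness, because cleanness forbids the configuration in which $S$ overlaps one clique member while being contained in another overlapping it. Subtrees sharing a profile form an induced subgraph whose overlap structure is governed by a single path of $T$, hence an interval filament graph colored with $O_\omega(1)$ colors by Theorem \ref{thm:filament}\eqref{item:filament-upper}. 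An induction on $\omega$ bounds the number of profiles by $O_\omega((\log\log n)^{d-1})$ with $d\in\{\binom{\omega}{2},\omega-1\}$, yielding the stated bounds after multiplying by the per-profile cost.

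The main obstacle will be the combinatorial accounting of profiles, in particular verifying that cleanness drops their count by exactly $\binom{\omega-1}{2}$ coordinates; this is where the on-line viewpoint is essential, since the right count is forced by the matching on-line coloring game whose worst case is realized by the construction in \eqref{item:clean-subtree-lower}. A secondary difficulty is the substitution step of that construction: each inner gadget must be inserted into a fresh subtree region without creating either a clean-violating triple or a clique of size $\omega+1$, which requires careful exploitation of the freedom afforded by choosing the host tree.
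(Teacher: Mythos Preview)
Your proposal has genuine gaps in all three parts.

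\textbf{Lower bound \ref{item:clean-subtree-lower}.} The substitution idea cannot work as stated. If you substitute a triangle-free gadget $H$ into each vertex of $G_{\omega-1}$ in the standard sense (lexicographic product), the clique number becomes $2(\omega-1)$, not $\omega$; but if you merely attach each gadget in a fresh region adjacent only to its host vertex, the chromatic number grows additively, not multiplicatively. You need a construction in which the gadget vertices become adjacent to \emph{many} vertices of $G_{\omega-1}$ (to force new colors) yet never extend a clique of $G_{\omega-1}$ by more than one. The paper achieves this not by substitution but by a recursive Presenter strategy in the abstract game $\gamABS(k)$: it builds sets $R_1\includes R_2$, tests how many colors Algorithm used on $R_1\cup R_2$, and if too few, plays a $(k-1)$-level game producing $R_3$ with $R_2\overlaps R_3$. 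The common colors on $R_1$ and $R_2$ are forbidden on $R_3$, which is what makes the color count multiply, while $\omega$ rises by exactly one because $R_2$ is $\overlaps$-independent internally. This interleaving is the missing idea.

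\textbf{Upper bounds \ref{item:subtree-upper}--\ref{item:clean-subtree-upper}.} Two key structural tools are absent. First, you never explain where $\log\log n$ comes from. In the paper it arises from \emph{heavy-light decomposition} of the rooted forest $F$ underlying the game graph: each root-to-leaf path meets at most $b=\lfloor\log_2 n\rfloor+1$ heavy paths, each heavy path induces an interval filament graph colorable with $O_k(1)$ colors by Theorem~\ref{thm:filament}, and only then does the on-line algorithm (primary/secondary classification, First-fit on the primaries, induction on $k$ for the secondaries) use $O_k((\log b)^{k-1})$ colors. Your ``profile vector'' sketch asserts the bound $O_\omega((\log\log n)^{\omega-1})$ on the number of profiles without any mechanism producing $\log\log n$. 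Second, the passage from clean to general is not a matter of counting extra profile coordinates; the paper uses a \emph{$k$-clique breadth-first search} (Section~\ref{sec:reduction-to-clean}) to partition any overlap graph into clean pieces, and the exponent $\binom{\omega}{2}=\sum_{j=2}^{\omega}(j-1)$ is the product over $j$ of the clean bounds at each clique level, not a coordinate count. Your claim that ``subtrees sharing a profile form an induced subgraph governed by a single path of $T$'' is also unjustified and, as stated, false in general.
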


\begin{theorem}
\label{thm:rectangle}
\begin{enumeratet}
\item\label{item:rectangle-upper} Every rectangle overlap graph satisfies\/ $\chi=O_\omega((\log\log n)^{\omega-1})$.
\item\label{item:clean-rectangle-upper} Every clean rectangle overlap graph satisfies\/ $\chi=O_\omega(\log\log n)$.
\end{enumeratet}
\end{theorem}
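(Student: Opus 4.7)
The plan is to apply the on-line to off-line coloring framework developed earlier in the paper, reducing the off-line coloring of a rectangle overlap graph to an on-line coloring problem on a simpler class of geometric graphs (intervals or interval overlap graphs). A competitive on-line algorithm for the simpler class then yields the chromatic bound via the game graph construction.

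For part (ii), given a clean rectangle overlap graph with rectangles $R_v = I_v \times J_v$, I would process the rectangles in an order driven by the x-projections $I_v$, for instance by inclusion with a suitable tie-breaking rule. When a rectangle is revealed, its y-projection $J_v$ together with the relationship of $I_v$ to previously revealed x-projections determines which edges to previously revealed vertices arise. The cleanness assumption rules out the configuration of two overlapping rectangles both containing a third, which translates into the absence of an analogous forbidden triple on y-projections among rectangles whose x-projections share a common point. Consequently, the coloring task reduces to an on-line coloring problem in clean interval overlap graphs, which by an earlier result (an on-line algorithm for the game graph $\gamG(\gamIOV)$ or the analogous interval variant) admits a strategy using $O_\omega(\log\log n)$ colors, giving the claimed bound.

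For part (i), I would induct on $\omega$ to lift the clean bound to the general case. Stratify the rectangles by the depth of the longest chain of x-containments they participate in; such chains have length at most $\omega$, since a chain of nested overlapping rectangles contributes to a clique, so there are at most $\omega - 1$ nontrivial strata. Within each stratum the x-containment structure is trivial, making the restriction behave as a clean subproblem colorable using $O_\omega(\log\log n)$ colors by part (ii). Combining colorings across strata multiplicatively yields the $(\log\log n)^{\omega-1}$ bound. An equivalent formulation peels off one clean layer at a time, each costing a factor of $\log\log n$ and reducing $\omega$ by one.

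The main technical obstacle is twofold. First, the reduction must guarantee that an on-line algorithm processing only y-projections in an x-driven order produces a valid coloring of the full rectangle overlap graph, whose edges depend on both coordinates; this requires matching the on-line adversary's freedom to the structure of rectangle overlap exactly, so that the presenter's moves in the abstract game cover every edge realized by a pair of rectangles. Second, the stratification in part (i) must preserve the clique bound on each stratum and must genuinely produce a clean-like subproblem amenable to part (ii); in particular, one needs to verify that restricting to a single x-containment depth does not reintroduce the bad triple ruled out by cleanness. Once these reductions are carefully justified, the quantitative bounds follow immediately from the on-line coloring results for the appropriate interval game graphs established earlier in the paper.
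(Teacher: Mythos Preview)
Your outline for part~(ii) is in the right spirit but omits the step that actually produces the $\log\log n$ bound rather than a $\log n$ bound. The on-line algorithm in the interval overlap game $\gamIOV(k)$ uses $\Theta_k(\log r)$ colors in $r$ rounds, not $O_k(\log\log n)$; applied directly to a game graph whose underlying rooted forest has paths of length up to $n$, Lemma~\ref{lem:game-upper} only gives $O_k(\log n)$. The paper closes this gap with the heavy--light decomposition: the forest is cut into heavy paths, each heavy path is properly colored with $O_k(1)$ colors (it induces an interval overlap graph), and within a single color class every root-to-leaf path meets at most $b=\lfloor\log_2 n\rfloor+1$ blocks. Only then does the on-line algorithm for $\gamIOV(k,b)$ (Lemma~\ref{lem:overlap-game-coloring}) yield $O_k(\log b)=O_k(\log\log n)$. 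Your sketch treats the $\log\log n$ bound as a black-box on-line result, but in the paper it is precisely this combination of heavy--light decomposition with the $O_k(\log b)$ algorithm that creates the double log. You should also not underestimate the reduction from clean rectangle overlap graphs to interval overlap game graphs: this is Lemma~\ref{lem:rectangle-to-game} (proved in~\cite{KPW15}), which partitions the vertex set into $O_\omega(1)$ classes each inducing an interval overlap game graph, and it is not as simple as ``process by $x$-projection, reveal $y$-projection''.

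Your approach to part~(i) has a genuine error. You claim that chains under $x$-projection containment have length at most $\omega$ because ``a chain of nested overlapping rectangles contributes to a clique''. But nested $x$-projections say nothing about overlap of the rectangles: take $R_i=[-i,i]\times[2i,2i+1]$ for $i=1,\ldots,n$; the $x$-projections form a chain of length $n$, yet the rectangles are pairwise disjoint and $\omega=1$. So the stratification by $x$-containment depth is unbounded and your induction never gets started. The paper's mechanism for reducing to the clean case is entirely different: it is the $k$-clique breadth-first search of Section~\ref{sec:reduction-to-clean} (Theorem~\ref{thm:reduction-to-clean}), which partitions the vertices into levels $L_d$ each inducing a clean subgraph (Lemma~\ref{lem:bfs2}), and guarantees that every $k$-clique hits two consecutive levels (Lemma~\ref{lem:bfs1}). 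Coloring each level with $\alpha_k$ colors and splitting by parity drops the clique number by one on each color class, and the product $\prod_{j=2}^{\omega}2\alpha_j$ with $\alpha_j=O_j(\log\log n)$ gives the exponent $\omega-1$. Your ``peel off one clean layer and reduce $\omega$'' formulation is the right shape, but the device that makes it work is the $k$-clique BFS, not $x$-containment depth.
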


\noindent
The aforementioned result of Pawlik et~al.\ \cite{PKK+13} complements Theorem \ref{thm:rectangle} in line with Theorems \ref{thm:filament} and \ref{thm:subtree}, showing that its statement \ref{item:clean-rectangle-upper} is asymptotically tight:

\begin{thmcontinued}
\begin{enumeratetcontinued}{\cite{PKK+13}}
\item\label{item:clean-rectangle-lower} There are clean rectangle overlap graphs with\/ $\omega=2$ and\/ $\chi=\Theta(\log\log n)$.
\end{enumeratetcontinued}
\end{thmcontinued}

The special case of Theorem \ref{thm:rectangle} \ref{item:rectangle-upper}--\ref{item:clean-rectangle-upper} for $\omega=2$ was proved in \cite{KPW15}.
The Pawlik et~al.\ result \ref{item:clean-rectangle-lower} above implies Theorem \ref{thm:subtree}~\ref{item:clean-subtree-lower} for $\omega=2$, which we comment on in Section \ref{sec:rectangle-subtree}.
Theorem \ref{thm:subtree}~\ref{item:clean-subtree-lower} provides the first construction of string graphs with bounded clique number and with chromatic number asymptotically greater than $\log\log n$.

Theorem \ref{thm:filament}~\ref{item:filament-upper} asserts in particular that the class of interval filament graphs is $\chi$-bounded.
This is also implied by a recent result of Rok and Walczak \cite{RW14} that the class of outerstring graphs is $\chi$-bounded, which is proved using different techniques leading to an enormous bound on the chromatic number.
Here, by contrast, the bound is pretty good.
For instance, it follows that triangle-free interval filament graphs have chromatic number at most $15$.
The fact that the class of interval filament graphs is $\chi$-bounded implies that it is a \emph{proper} subclass of the class of subtree overlap graphs, as the latter is not $\chi$-bounded.
However, we are not aware of any reasonably small graph witnessing proper inclusion between these two classes.
The example resulting from the bounds on the chromatic number (for $\omega=2$) has more than $2^{2^{14}}$ vertices.

A \emph{$K_k$-free coloring} of a graph $G$ is a coloring of the vertices of $G$ such that every color class induces a $K_k$-free subgraph of $G$.
A $K_2$-free coloring is just a proper coloring.
The \emph{$K_k$-free chromatic number}, denoted by $\chi_k$, is the minimum number of colors sufficient for a $K_k$-free coloring of the graph.
Our interest in $K_k$-free colorings comes from an attempt to prove the so-called \emph{quasi-planar graph conjecture}, which is discussed at the end of this section.
The proof of Theorem \ref{thm:rectangle}~\ref{item:clean-rectangle-upper} gives the following as a by-product.

\begin{theorem}
\label{thm:rectangle-K3}
Every clean rectangle overlap graph satisfies\/ $\chi_3=O_\omega(1)$.
\end{theorem}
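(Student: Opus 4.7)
My plan is to view the bound of Theorem~\ref{thm:rectangle}\,\ref{item:clean-rectangle-upper} as a composition of two bounds: a $K_3$-free coloring using $O_\omega(1)$ colors, followed by a proper coloring of each resulting triangle-free subgraph using $O(\log\log n)$ colors. The first factor, standing on its own, is exactly Theorem~\ref{thm:rectangle-K3}.

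I expect the proof of Theorem~\ref{thm:rectangle}\,\ref{item:clean-rectangle-upper} to go by induction on $\omega$, driven by the on-line game framework of this paper. The inductive step should take a clean rectangle overlap graph $G$ with $\omega(G)=\omega$ and, from an on-line coloring strategy using $c_\omega$ colors on an associated interval-type structure, produce a partition of $V(G)$ into at most $c_\omega$ parts, each inducing a clean rectangle overlap graph of clique number at most $\omega-1$. The base case $\omega=2$ then invokes the bound $\chi=O(\log\log n)$ for triangle-free rectangle overlap graphs from~\cite{KPW15}, and the two factors multiply into $O_\omega(\log\log n)$.

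For Theorem~\ref{thm:rectangle-K3}, the plan is simply to iterate the inductive step from $\omega$ down to $2$ and to stop there, assigning a single color to each triangle-free part. This uses at most $\prod_{k=3}^{\omega}c_k$ colors, which depends only on $\omega$, and each color class is triangle-free by construction. The main obstacle is entirely inside the proof of Theorem~\ref{thm:rectangle}\,\ref{item:clean-rectangle-upper}: one must verify that the partition produced by the on-line strategy indeed (i) stays within the class of clean rectangle overlap graphs, so that the induction can be reapplied, and (ii) strictly reduces the clique number. Once these two properties are explicit, Theorem~\ref{thm:rectangle-K3} follows at once by simply declining to invoke the~\cite{KPW15} bound at the bottom of the recursion.
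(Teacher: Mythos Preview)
Your high-level intuition---that the $O_\omega(\log\log n)$ bound for clean rectangle overlap graphs factors as an $O_\omega(1)$ triangle-free coloring followed by an $O(\log\log n)$ proper coloring of each triangle-free piece---turns out to be correct. But the mechanism you propose is not the one the paper uses, and your guessed inductive structure does not exist in the proof of Theorem~\ref{thm:rectangle}\,\ref{item:clean-rectangle-upper}.

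The paper does \emph{not} prove Theorem~\ref{thm:rectangle}\,\ref{item:clean-rectangle-upper} by an induction on $\omega$ that partitions a clean rectangle overlap graph with clique number $\omega$ into $c_\omega$ pieces of clique number $\omega-1$. Instead, it first partitions the graph into $O_\omega(1)$ interval overlap game graphs (Lemma~\ref{lem:rectangle-to-game}), and then colors each game graph in a single shot via an on-line algorithm in the game $\gamIOV(k,b)$ (Lemma~\ref{lem:overlap-game-coloring}). That algorithm is built from four ingredients layered together: a primary/secondary decomposition, a $k$-coloring of the primary vertices satisfying a structural condition (Lemma~\ref{lem:primary-coloring-1}), a First-fit refinement using $O(\log b)$ colors (Lemma~\ref{lem:primary-coloring-2}), a $2$-coloring handling edges between secondary blocks (Lemma~\ref{lem:secondary-coloring}), and a $\binom{k}{2}$-coloring inside each secondary block (Lemma~\ref{lem:overlap-game-secondary}). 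None of these steps produces a partition into pieces with clique number $\omega-1$ to which one could recurse; the $\binom{k}{2}$-coloring of the secondary blocks already yields independent sets, not $(\omega-1)$-clique pieces.

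The actual proof of Theorem~\ref{thm:rectangle-K3} is obtained by observing that the $k$-coloring of Lemma~\ref{lem:primary-coloring-1} is already triangle-free, so the First-fit step (the only source of the $\log b$ factor) can simply be omitted. The remaining three ingredients combine into an on-line triangle-free coloring algorithm in $\gamIOV_3(k)$ using $2k\binom{k}{2}$ colors (Lemma~\ref{lem:overlap-game-K3-coloring}), and Lemmas~\ref{lem:rectangle-to-game} and~\ref{lem:game-upper} finish the job. So the ``composition'' you anticipated is real, but it is the removal of one specific step inside a single on-line algorithm, not the truncation of an induction on $\omega$. Your plan, as written, would stall at the point where you need to exhibit the clique-reducing partition, because the paper never constructs one.
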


On the other hand, Theorem \ref{thm:subtree} \ref{item:clean-subtree-upper}--\ref{item:clean-subtree-lower} implies that for every $k\geq 2$, there are clean subtree overlap graphs (and thus string graphs) with $\omega=k$ and $\chi_k=\Theta_k(\log\log n)$.
To see this, consider any $K_k$-free coloring of a clean subtree overlap graph with $\omega=k$ and $\chi=\Theta_k((\log\log n)^{k-1})$ guaranteed by Theorem \ref{thm:subtree}~\ref{item:clean-subtree-lower}.
Every color class induces a clean subtree overlap graph with $\omega\leq k-1$ and therefore, by Theorem \ref{thm:subtree}~\ref{item:clean-subtree-upper}, with $\chi=O_k((\log\log n)^{k-2})$.
Hence, there must be at least $\Theta_k(\log\log n)$ color classes.

The proofs of the upper bounds in Theorems \ref{thm:filament}--\ref{thm:rectangle-K3} are constructive---they can be used to design polynomial-time coloring algorithms that use the claimed number of colors.
These algorithms require that the input graph is provided together with its geometric representation.
Constructing a representation is at least as hard as deciding whether a representation exists (the recognition problem), which is NP-complete for interval filament graphs \cite{Per07}, and whose complexity is unknown for subtree overlap graphs and rectangle overlap graphs.

\subsection*{Methods}

All our proofs heavily depend on a correspondence between on-line graph coloring problems and off-line colorings of so-called \emph{game graphs}, which originates from considerations in \cite{KPW15,PKK+13} and which we formalize in the next section.
It allows us to reduce problems of estimating the maximum possible chromatic number in classes of geometric intersection graphs to designing coloring algorithms or adversary strategies for the on-line coloring problem in much simpler classes of graphs.
For classes of geometric intersection graphs with bounded clique number and unbounded chromatic number, this is the only approach known to give upper bounds on the chromatic number better than single logarithmic (with respect to $n$).

In Section \ref{sec:examples}, we illustrate the concept of game graphs on two short examples.
First, we construct rectangle graphs with chromatic number $3\omega-2$, which is only less by $2$ than Kostochka's claimed but unpublished lower bound of $3\omega$.
Second, we reproduce Burling's construction of triangle-free intersection graphs of axis-parallel boxes in $\setR^3$ with $\chi=\Theta(\log\log n)$.
Later sections contain the proofs of Theorems \ref{thm:filament}--\ref{thm:rectangle-K3}.

The proof of Theorem \ref{thm:filament} relies on a result of Felsner \cite{Fel97}, which determines precisely the competitiveness of the on-line coloring problem on incomparability graphs of up-growing partial orders.
The proofs of Theorems \ref{thm:subtree} and \ref{thm:rectangle} rely on the coloring algorithm and the adversary strategy for the on-line coloring problem on forests.
A well-known adversary strategy due to Bean \cite{Bea76}, later rediscovered by Gyárfás and Lehel \cite{GL88}, forces any on-line coloring algorithm to use at least $c$ colors on a forest with at most $2^{c-1}$ vertices.
This is tightly matched by the algorithm called First-fit, discussed in Section \ref{sec:coloring}, which colors every $n$-vertex forest on-line using at most $\lfloor\log_2n\rfloor+1$ colors.
A reduction to on-line coloring of forests is a final step in the proofs of Theorem \ref{thm:subtree}~\ref{item:clean-subtree-upper} and Theorem \ref{thm:rectangle}~\ref{item:clean-subtree-upper}.
Bean's adversary strategy underlies the results of \cite{PKK+13,PKK+14}, in particular, Theorem \ref{thm:rectangle}~\ref{item:clean-rectangle-lower}, whereas a generalization of Bean's strategy, which is presented in Section \ref{sec:construction}, underlies Theorem \ref{thm:subtree}~\ref{item:clean-subtree-lower}.

An important ingredient in the proofs of Theorem \ref{thm:subtree}~\ref{item:subtree-upper} and Theorem \ref{thm:rectangle}~\ref{item:rectangle-upper} is a generalized breadth-first search procedure, which we call \emph{$k$-clique breadth-first search} and which may be of independent interest.
It allows us to reduce the respective coloring problem to clean overlap graphs in a similar way as the ordinary breadth-first search does when $\omega=2$ \cite{Gya85,KPW15}.
This is discussed in detail in Section \ref{sec:reduction-to-clean}.

\subsection*{Problems}

The following problem, posed in \cite{PKK+14}, remains open: estimate (asymptotically with respect to $n$) the maximum possible chromatic number for triangle-free segment intersection graphs or, more generally, segment intersection graphs with bounded clique number.
We believe the answer is $O_\omega((\log\log n)^c)$ for some constant $c\geq 1$.
For the analogous problem for string graphs, we believe the answer is $O_\omega((\log\log n)^{\smash{f(\omega)}})$ for some function $f\colon\setN\to\setN$ with $f(\omega)\geq\omega-1$.
The first step of the proof of Theorem \ref{thm:rectangle}~\ref{item:clean-rectangle-upper} is a reduction from clean rectangle overlap graphs to interval overlap game graphs (see Lemma \ref{lem:rectangle-to-game}).
The main challenge in applying the on-line approach to the problems above lies in devising an analogous reduction from segment or string graphs to game graphs of an appropriate on-line graph coloring problem.

An exciting open problem related to geometric intersection graphs concerns the number of edges in $k$-quasi-planar graphs.
A graph drawn in the plane is \emph{$k$-quasi-planar} if no $k$ edges cross each other in the drawing.
Pach, Shahrokhi and Szegedy \cite{PSS96} conjectured that $k$-quasi-planar graphs have $O_k(n)$ edges.
For $k=2$, this asserts the well-known fact that planar graphs have $O(n)$ edges.
The conjecture is also proved for $k=3$ \cite{AAP+97,PRT06} and $k=4$ \cite{Ack09}, but it remains open for $k\geq 5$.
The best known upper bounds on the number of edges in a $k$-quasi-planar graph with $k\geq 5$ are $n(\log n)^{O\smash[t]{(\log k)}}$ in general \cite{FP12,FP14} and $O_k(n\log n)$ if the edges are drawn as straight-line segments \cite{Val98} or $1$-intersecting curves \cite{SW15}.
If we can prove that the intersection graph of the edges of a $k$-quasi-planar graph $G$ satisfies $\chi_3=O_k(1)$ (or $\chi_4=O_k(1)$), then it will follow that $G$ has $O_k(n)$ edges, as each color class in a $K_3$-free ($K_4$-free) coloring of the edges of $G$ is itself a $3$-quasi-planar ($4$-quasi-planar) graph and therefore has $O(n)$ edges.
The construction of triangle-free segment intersection graphs with arbitrarily large chromatic number \cite{PKK+14} implies that such an approach cannot succeed when we ask for a proper coloring of the edges instead of a $K_3$-free ($K_4$-free) coloring.
In view of the remark after Theorem \ref{thm:rectangle-K3}, neither can it succeed for $K_k$-free colorings when the edges of $G$ are allowed to cross arbitrarily many times.
Nevertheless, Theorem \ref{thm:rectangle-K3} suggests a substantial difference between proper and triangle-free colorings of geometric intersection graphs, which makes this approach appealing for $k$-quasi-planar graphs whose edges are drawn as straight-line segments or, more generally, $1$-intersecting curves.

Finally, an interesting challenge is to close the asymptotic gap between the upper bounds of $O_\omega((\log\log n)^{\smash{\binom{\omega}{2}}})$ and $O_\omega((\log\log n)^{\omega-1})$ and the lower bounds of $\Omega_\omega((\log\log n)^{\omega-1})$ and $\Omega(\log\log n)$, respectively, on the maximum chromatic number of subtree and rectangle overlap graphs.
We believe that the lower bounds are correct.
A problem of similar flavor is to prove the analogue of Theorem \ref{thm:rectangle-K3} for rectangle overlap graphs that are not clean.

\section{On-line graph coloring games and game graphs}
\label{sec:games}

The \emph{on-line graph coloring game} is played by two deterministic players: Presenter and Algorithm.
It is played in rounds.
In each round, Presenter adds a new vertex to the graph and declares whether or not it has an edge to each of the vertices presented before.
As a response, in the same round, Algorithm colors this vertex keeping the property that the coloring is proper.
Imposing additional restrictions on Presenter's moves gives rise to many possible variants of the on-line graph coloring game.
Typical kinds of such restrictions look as follows:
\begin{enumerate}
\item The graph $G$ being built by Presenter keeps belonging to a specific class of graphs $\calG$.
It is reasonable to require that the class $\calG$ is hereditary (closed under taking induced subgraphs).
\item In addition to $G$, Presenter builds a mapping $\mu\colon V(G)\to\calC$ called a \emph{representation} of $G$ in some class of objects $\calC$, and the edges of $G$ are defined in terms of $\mu$.
\item In addition to $G$, Presenter builds relations $R_1,\ldots,R_r$ on $V(G)$, and the edges of $G$ are defined in terms of $R_1,\ldots,R_r$.
\item There can be some further restrictions relating $\mu$, $R_1,\ldots,R_r$, and the order in which the vertices are presented.
\end{enumerate}
The final graph to be built by Presenter is not fixed in advance and can depend on the decisions taken by Algorithm when coloring vertices.
However, the decisions of both players are irrevocable: Presenter cannot change the part of the graph, the representation, or the relations after they have been presented, and Algorithm cannot change the colors after they have been assigned.
The goal of Algorithm is to keep using as few colors as possible, while Presenter wants to force Algorithm to use as many colors as possible.
The \emph{value} of such a game is the minimum number $c$ such that Algorithm has a strategy to color any graph that can be presented in the game using at most $c$ colors or, equivalently, the maximum number $c$ such that Presenter has a strategy to force Algorithm to use at least $c$ colors regardless of how Algorithm responds.

We call any variant of the on-line graph coloring game simply an \emph{on-line game}, and any coloring strategy of Algorithm simply an \emph{on-line algorithm}.
We let $\prec$ denote the order in which the vertices are presented.
It is envisioned as going from left to right.

Now, we explain the crucial concept of our paper---game graphs.
Let $\gamG$ be an on-line game with representation $\mu$ in a class $\calC$ and with relations $R_1,\ldots,R_r$.
If a graph $G$, a particular representation $\mu\colon V(G)\to\calC$, and particular relations $R_1,\ldots,R_r$ on $V(G)$ are allowed to be presented in $n$ rounds of the game $\gamG$ in such a way that the vertices are presented in a particular order $\prec$ on $V(G)$, then we call the tuple $G,\mu,R_1,\ldots,R_r,\prec$ an \emph{$n$-round presentation scenario} in $\gamG$.
We define the class of \emph{game graphs} associated with $\gamG$ as follows.
A graph $G$ is a \emph{game graph} of the on-line game $\gamG$ if there exist a rooted forest $F$ on $V(G)$, a mapping $\mu\colon V(G)\to\calC$, and relations $R_1,\ldots,R_r$ on $V(G)$ such that
\begin{enumeratea}
\item\label{item:game-graph-i} for every $v\in V(G)$, the subgraph $G[V(P_v)]$ of $G$ induced on the vertices of the path $P_v$ in $F$ from a root to $v$, the representation $\mu$ restricted to $V(P_v)$, the relations $R_1,\ldots,R_r$ restricted to $V(P_v)$, and the order $\prec$ of vertices along $P_v$ form a valid $|V(P_v)|$-round presentation scenario in $\gamG$,
\item\label{item:game-graph-ii} if $uv\in E(G)$, then $u$ is an ancestor of $v$ or $v$ is an ancestor of $u$ in $F$.
\end{enumeratea}
For any two distinct vertices $u$ and $v$ of a game graph, we let $u\prec v$ denote that $u$ is an ancestor of $v$ in $F$.
Therefore, the order of presentation $\prec$ in the on-line game and the relation $\prec$ in the game graph correspond to each other in the same way as the relations $R_1,\ldots,R_r$ do in the on-line game and in the game graph.
A game graph can be envisioned as a union of several presentation scenarios in which some (but not necessarily all) common prefixes of these scenarios have been identified.

All the games that we will consider are closed under taking induced subgraphs, in the sense that any induced subgraph of any presentation scenario (where the representation, the relations, and the order $\prec$ are restricted to the vertices of the subgraph) is again a valid presentation scenario.
It easily follows from the definition that the game graphs of such games are also closed under taking induced subgraphs.

It follows from \ref{item:game-graph-ii} that $\omega(G)=\max\{\omega(G[V(P_v)])\colon v\in V(G)\}$.
In particular, if one of the restrictions on the game $\gamG$ requires that the presented graph has clique number at most $k$, then all game graphs of $\gamG$ also have clique number at most $k$.

\begin{lemma}
\label{lem:game-upper}
If there is an on-line algorithm using at most\/ $c$ colors in an on-line game\/ $\gamG$, then every game graph of\/ $\gamG$ has chromatic number at most\/ $c$.
\end{lemma}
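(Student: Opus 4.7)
The plan is to use the hypothesized on-line algorithm as a deterministic oracle that colors each vertex of the game graph according to its root-to-vertex path in $F$. First, I would fix an on-line algorithm $A$ for $\gamG$ witnessing the bound of $c$ colors, and fix a game graph $G$ together with its witnessing data: a rooted forest $F$ on $V(G)$, a representation $\mu$, relations $R_1,\ldots,R_r$, and the ancestor order $\prec$.

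For every $v\in V(G)$, let $P_v$ be the path in $F$ from a root to $v$. I would consider the sequence obtained by presenting the vertices of $V(P_v)$ one by one in the order $\prec$, equipped with $\mu$ and $R_1,\ldots,R_r$ restricted to $V(P_v)$. By condition \ref{item:game-graph-i}, this is a valid presentation scenario in $\gamG$, so when $A$ is run against it the response is a proper coloring of $G[V(P_v)]$ using at most $c$ colors. I then define $\phi(v)$ to be the color that $A$ assigns to $v$ in this simulation.

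The key step, and the only delicate point, is well-definedness: I need to know that $A$ treats each vertex consistently across the many simulations in which it appears. If $u$ is an ancestor of $v$ in $F$, then $P_u$ is a prefix of $P_v$, and the presented data along $P_u$ are literally the restriction of the data along $P_v$ to the first $|V(P_u)|$ vertices. Since $A$ is deterministic, the first $|V(P_u)|$ rounds of the run on $P_v$ coincide round-by-round with the run on $P_u$, so $A$ assigns the same color to every common vertex of the two runs; in particular, the color given to $u$ during the run on $P_v$ equals $\phi(u)$. This is the place where the definition of game graph, together with determinism, does all the work.

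With this in hand, properness of $\phi$ is immediate: if $uv\in E(G)$, then by \ref{item:game-graph-ii} one of $u,v$ is an ancestor of the other, say $u\prec v$; the edge $uv$ then belongs to $G[V(P_v)]$, both endpoints are colored by $A$ during the run on $P_v$, and since that run is a proper coloring we get $\phi(v)\neq\phi(u)$. As every simulation uses at most $c$ colors, so does $\phi$, yielding $\chi(G)\leq c$.
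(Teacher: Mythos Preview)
Your proof is correct and essentially identical to the paper's own argument: define the color of each vertex $v$ by running the on-line algorithm on the presentation scenario $G[V(P_v)]$, use the prefix property of root-to-vertex paths together with determinism of the algorithm to show consistency, and use condition \ref{item:game-graph-ii} to reduce properness of $\phi$ to properness on each $G[V(P_v)]$. The paper's proof is slightly terser but follows the same line exactly.
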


\begin{proof}
Intuitively, to color a game graph properly, it is enough to run the on-line algorithm separately on the subgraph induced on each path in $F$ from a root to a leaf.

More formally, let $G$ be a game graph of $\gamG$ with underlying forest $F$, representation $\mu$, and relations $R_1,\ldots,R_r$.
For every $u\in V(G)$, the condition \ref{item:game-graph-i} of the definition of a game graph gives us a presentation scenario of the graph $G[V(P_u)]$.
Color the vertex $u$ in $G$ with the color assigned to $u$ by Algorithm in this scenario.
For every descendant $v$ of $u$ in $F$, the presentation scenario of $G[V(P_u)]$ is the initial part of the presentation scenario of $G[V(P_v)]$ up to the point when $u$ is presented, so Algorithm assigns the same color to $u$ in both scenarios.
Therefore, since Algorithm colors every $G[V(P_v)]$ properly, the coloring of $G$ defined this way is also proper.
\end{proof}

We say that a strategy of Presenter in an on-line game $\gamG$ is \emph{finite} if the total number of presentation scenarios that can occur in the game when Presenter plays according to this strategy, for all possible responses of Algorithm, is finite.

\begin{lemma}
\label{lem:game-lower}
If Presenter has a finite strategy to force Algorithm to use at least\/ $c$ colors in an on-line game\/ $\gamG$, then there exists a game graph of\/ $\gamG$ with chromatic number at least\/ $c$.
Moreover, the number of vertices of this graph is equal to the total number of presentation scenarios that can occur with this strategy.
\end{lemma}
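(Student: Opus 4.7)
The plan is to build the desired game graph $G$ by letting its vertex set be the presentation scenarios themselves and letting the forest $F$ encode the ``prefix'' order among them. Writing $\mathcal{T}$ for the (finite) set of distinct presentation scenarios that can arise in plays where Presenter follows the strategy, I would set $V(G)=\mathcal{T}$. Every scenario $S\in\mathcal{T}$ of length at least two has a unique predecessor---the scenario obtained by deleting its final move---and I would take that predecessor to be the parent of $S$ in $F$, so the one-round scenarios become the roots. Identifying each $S\in V(G)$ with the last vertex $w_S$ that Presenter introduces in $S$, I would define $\mu(S)$, the values of $R_1,\dots,R_r$ at $S$, and the $F$-ancestor edges incident to $S$ to be exactly what Presenter declared about $w_S$ at that moment (and extending $R_1,\dots,R_r$ arbitrarily across $F$-incomparable pairs).

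First I would verify that $G$ is a game graph of $\gamG$. Condition \ref{item:game-graph-ii} is immediate since edges were only placed between $S$ and its $F$-ancestors. For condition \ref{item:game-graph-i} the key observation is that whenever $S'$ is an ancestor of $S$ in $F$, the scenario $S'$ is a prefix of $S$, and so Presenter's declarations about $w_{S'}$ made in $S'$ are simply inherited as the play extends to $S$. Consequently, the path $P_S$ in $F$ from its root to $S$, together with the restriction of the graph structure, $\mu$, $R_1,\dots,R_r$, and $\prec$, literally reproduces the scenario $S$---which is by definition a valid presentation scenario in $\gamG$.

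Next I would prove $\chi(G)\geq c$ by contradiction. Suppose $\phi$ were a proper coloring of $G$ with fewer than $c$ colors. I would turn $\phi$ into an on-line strategy for Algorithm: after round $k$, Algorithm identifies the current scenario $S_k\in\mathcal{T}$ from the history of Presenter's moves observed so far, and plays $\phi(S_k)$ on the newly introduced vertex. This strategy is genuinely on-line because $S_k$ is determined by the history; it is proper because any earlier presented vertex adjacent to the new one equals $w_{S'}$ for some $F$-ancestor $S'$ of $S_k$, and $S'S_k\in E(G)$ forces $\phi(S')\neq\phi(S_k)$. Played against Presenter's strategy, Algorithm would then finish using fewer than $c$ colors, contradicting the hypothesis. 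The ``moreover'' clause is automatic since $|V(G)|=|\mathcal{T}|$ by construction.

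The main subtlety I expect to have to spell out is that $\mathcal{T}$ is prefix-closed and that the attributes attached to $w_{S'}$ are well defined even though $w_{S'}$ reappears as a vertex of every scenario extending $S'$; both facts follow from treating a scenario as the concrete record of moves played so far, so that each declaration is made once and preserved forever after. Everything else---the identification of $F$-paths with presentation scenarios and the reduction of a hypothetical $(c-1)$-coloring to an Algorithm strategy---is essentially bookkeeping, and no genuinely hard step is expected.
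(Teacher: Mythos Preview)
Your proposal is correct and follows essentially the same approach as the paper: build the game graph on the set of presentation scenarios with the prefix forest, identify each scenario with its last presented vertex to define edges, $\mu$, and $R_1,\dots,R_r$, and then derive $\chi(G)\geq c$ by converting a hypothetical $(c-1)$-coloring into an Algorithm strategy. The only cosmetic differences are notational (the paper writes $v(s)$ for your $w_S$) and that the paper leaves implicit the points you flag as subtleties (prefix-closure of the scenario set and well-definedness of the inherited data), so your write-up is if anything slightly more careful.
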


\begin{proof}
Consider a finite strategy of Presenter forcing Algorithm to use at least $k$ colors in $\gamG$.
Let $S$ be the set of presentation scenarios that can occur when Presenter plays according to this strategy.
Hence, $S$ is finite.
Define a forest $F$ on $S$ so that
\begin{itemize}
\item if $s\in S$ is a scenario that presents only one vertex, then $s$ is a root of $F$,
\item otherwise, the parent of $s$ in $F$ is the scenario with one vertex less, describing the situation in the game before the last vertex is presented in the scenario $s$.
\end{itemize}
For a scenario $s\in S$, let $v(s)$ denote the last vertex presented in the scenario $s$.
We define a graph $G$ on $S$ so that $s_1s_2$ is an edge of $G$ if $s_1$ is an ancestor of $s_2$ and $v(s_1)v(s_2)$ is an edge in the graph presented in the scenario $s_2$ or vice versa.
We define relations $R_1,\ldots,R_r$ on $S$ in the same way: $s_1\mathrel{R_i}s_2$ if $s_1$ is an ancestor of $s_2$ and $v(s_1)\mathrel{R_i}v(s_2)$ in the scenario $s_2$ or vice versa.
Finally, for $s\in S$, we define $\mu(s)=\mu(v(s))$ in the scenario $s$.
It clearly follows that the graph $G$ thus obtained is a game graph of $\gamG$ with underlying forest $F$, representation $\mu$, and relations $R_1,\ldots,R_r$.

It remains to prove that $\chi(G)\geq c$.
Suppose to the contrary that there is a proper coloring of $G$ using $c-1$ colors.
Consider the following strategy of Algorithm against Presenter's considered strategy in $\gamG$.
When a new vertex is presented, Algorithm looks at the presentation scenario $s$ of the structure presented so far.
Since Presenter is assumed to play according to the strategy that gives rise to the game graph $G$, the scenario $s$ is a vertex of $G$.
Algorithm colors the new vertex $v(s)$ in the game with the color of $s$ in the assumed coloring of $G$ using $c-1$ colors.
This way, Algorithm uses only $c-1$ colors against Presenter's considered strategy, which contradicts the assumption that this strategy forces Algorithm to use at least $c$ colors.
\end{proof}

Here is how Lemmas \ref{lem:game-upper} and \ref{lem:game-lower} are typically applied.
To provide an upper bound on the chromatic number of graphs of some class $\calG$, we show that each graph in $\calG$ is a game graph of an appropriately chosen on-line game, and we find an on-line algorithm for this game using few colors.
To construct graphs of some class $\calG$ with large chromatic number, we show that every game graph of an appropriately chosen on-line game belongs to $\calG$, and we find a finite strategy of Presenter in this game forcing Algorithm to use many colors.
We use this approach to prove the results of the paper.
First, we reduce Theorems \ref{thm:filament}--\ref{thm:rectangle-K3} to claims about game graphs of appropriately chosen on-line games.
Then, to prove these claims, we devise strategies for Algorithm or Presenter in these games and apply Lemmas \ref{lem:game-upper} and \ref{lem:game-lower} accordingly.

\section{Two simple examples}
\label{sec:examples}

To illustrate the concept developed in the previous section, we prove the following.

\begin{proposition}
\label{prop:rectangle}
There are rectangle graphs with chromatic number\/ $3\omega-2$.
\end{proposition}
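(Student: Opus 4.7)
The plan is to reduce to the on-line game $\gamINT$ in which Presenter builds an interval graph with clique number at most $k$: in each round Presenter reveals a closed interval $\mu(v) \subset \setR$, and the edges of the presented graph are given by intersection of intervals. I would show that every game graph of $\gamINT$ is a rectangle graph with clique number at most $k$, and then invoke the classical theorem of Kierstead and Trotter, which asserts that the on-line chromatic number of interval graphs with clique number $k$ is exactly $3k - 2$. Applying Lemmas \ref{lem:game-upper} and \ref{lem:game-lower} will then yield a rectangle graph with $\chi = 3k - 2$ and $\omega \leq k$; since Presenter can easily force a $k$-clique (by opening with $k$ mutually overlapping intervals), the proposition follows.

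For the reduction, let $G$ be a game graph of $\gamINT$ with underlying forest $F$ and interval representation $\mu$. I would encode $F$ into a laminar family of vertical intervals $\{J_v : v \in V(G)\}$ by assigning pairwise disjoint intervals to the roots of $F$ and, recursively inside each parent's interval, pairwise disjoint subintervals to its children. This guarantees $J_u \cap J_v \neq \emptyset$ iff $u$ and $v$ are comparable in $F$. Setting $R_v = \mu(v) \times J_v$, two rectangles $R_u, R_v$ intersect iff $u, v$ are comparable in $F$ and $\mu(u) \cap \mu(v) \neq \emptyset$; by conditions \ref{item:game-graph-i}--\ref{item:game-graph-ii} of the game graph definition, this is exactly when $uv \in E(G)$. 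Thus $G$ is a rectangle graph, and the bound $\omega(G) \leq k$ follows from the general observation in Section \ref{sec:games} that $\omega(G) = \max_v \omega(G[V(P_v)])$.

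For the lower bound on $\gamINT$, the adversary strategy of Kierstead and Trotter forces any on-line algorithm to use at least $3k - 2$ colors and can be described by a bounded decision tree, hence is finite in the sense required by Lemma \ref{lem:game-lower}. That lemma then produces a game graph (and thus, by the previous step, a rectangle graph) with $\omega \leq k$ and $\chi \geq 3k - 2$. The matching upper bound $\chi \leq 3k - 2$ on this particular graph follows by feeding it into the Kierstead--Trotter on-line algorithm via Lemma \ref{lem:game-upper}, so in the end $\chi = 3\omega - 2$ on the nose.

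The main obstacle is to verify that the Kierstead--Trotter adversary strategy can be phrased as a finite Presenter strategy in the formal game $\gamINT$ with the clique-number cap $k$; the passage from interval game graphs to rectangle graphs via a laminar family of $y$-intervals is then a routine product construction that introduces no additional difficulty.
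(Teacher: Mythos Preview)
Your proposal is correct and follows essentially the same route as the paper: reduce to the game $\gamINT(k)$, invoke Kierstead--Trotter (stated as Theorem~\ref{thm:int-game}) together with Lemma~\ref{lem:game-lower}, and turn the resulting game graph into a rectangle graph by taking $R_v=\mu(v)\times J_v$ where the $J_v$ encode the forest $F$ as a laminar family (the paper obtains the same $J_v$ via DFS entry/exit times). Your extra remarks securing $\omega=k$ and $\chi\leq 3k-2$ via Lemma~\ref{lem:game-upper} are a welcome bit of tidiness that the paper leaves implicit.
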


Let $\calI$ denote the set of all closed intervals in $\setR$.
Consider an on-line game $\gamINT(k)$ on the class of interval graphs with clique number at most $k$ presented with their interval representation.
That is, Presenter builds an interval graph $G$ and a representation $\mu\colon V(G)\to\calI$ so that
\begin{enumerate}
\item $\mu$ is an intersection model of $G$, that is, $uv\in E(G)$ if and only if $\mu(u)\cap\mu(v)\neq\emptyset$,
\item $\omega(G)\leq k$,
\end{enumerate}
and Algorithm properly colors $G$ on-line.
For this game, the definition of a game graph comes down to the following: a graph $G$ is a game graph of $\gamINT(k)$ if there exist a rooted forest $F$ on $V(G)$ and a mapping $\mu\colon V(G)\to\calI$ such that
\begin{enumeratea}
\item for every $v\in V(G)$ and for the path $P_v$ in $F$ from a root to $v$, the following holds:
\begin{enumerate}
\item $\mu$ restricted to $V(P_v)$ is an intersection model of $G[V(P_v)]$,
\item $\omega(G[V(P_v)])\leq k$,
\end{enumerate}
\item if $uv\in E(G)$, then $u$ is an ancestor of $v$ or $v$ is an ancestor of $u$ in $F$.
\end{enumeratea}
Recall that the ancestor-descendant order of $F$ is denoted by $\prec$.
The above can be simplified to the following two conditions, which correspond to the two conditions in the definition of the game $\gamINT(k)$:
\begin{enumerate}
\item $uv\in E(G)$ if and only if $u\prec v$ or $v\prec u$ and $\mu(u)\cap\mu(v)\neq\emptyset$,
\item $\omega(G)\leq k$.
\end{enumerate}

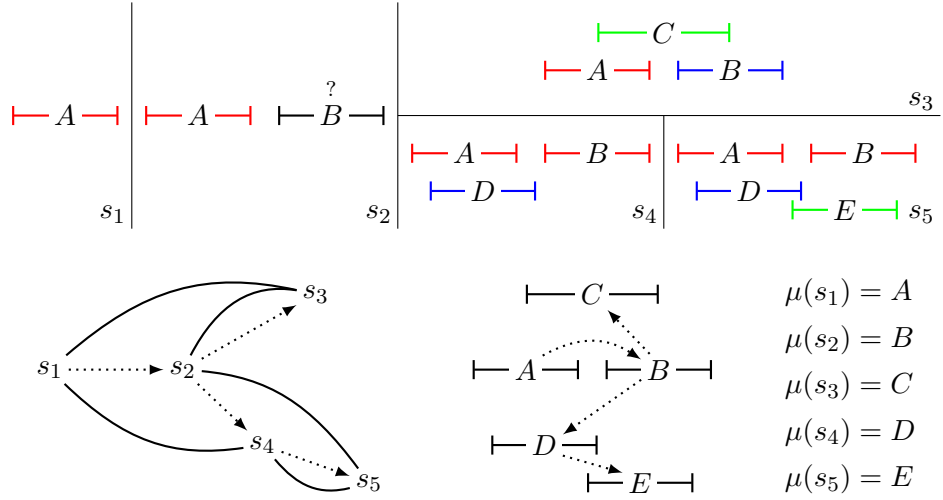
\begin{figure}[t]
\centering
\begin{tikzpicture}[xscale=0.7,yscale=0.5]
  \tikzstyle{every node}=[inner sep=2pt,fill=white]
  \draw[thick,red,|-|] (0.75,0)--(2.75,0);
  \node at (1.75,0) {$A$};
  \node[above left] at (3,-3) {$s_1$};
  \draw (3,-3)--(3,3);
  \begin{scope}[shift={(3,0)}]
    \draw[thick,red,|-|] (0.25,0)--(2.25,0);
    \node at (1.25,0) {$A$};
    \draw[thick,|-|] (2.75,0)--(4.75,0);
    \node at (3.75,0) {$B$};
    \node at (3.75,0.7) {\scriptsize ?};
  \end{scope}
  \node[above left] at (8,-3) {$s_2$};
  \draw (8,-3)--(8,3);
  \begin{scope}[shift={(10.5,1.2)}]
    \draw[thick,red,|-|] (0.25,0)--(2.25,0);
    \node at (1.25,0) {$A$};
    \draw[thick,blue,|-|] (2.75,0)--(4.75,0);
    \node at (3.75,0) {$B$};
    \draw[thick,green,|-|] (1.25,1)--(3.75,1);
    \node at (2.5,1) {$C$};
  \end{scope}
  \node[above left] at (18.2,0) {$s_3$};
  \draw (8,0)--(18.2,0);
  \begin{scope}[shift={(8,-1)}]
    \draw[thick,red,|-|] (0.25,0)--(2.25,0);
    \node at (1.25,0) {$A$};
    \draw[thick,red,|-|] (2.75,0)--(4.75,0);
    \node at (3.75,0) {$B$};
    \draw[thick,blue,|-|] (0.6,-1)--(2.6,-1);
    \node at (1.6,-1) {$D$};
  \end{scope}
  \node[above left] at (13,-3) {$s_4$};
  \draw (13,-3)--(13,0);
  \begin{scope}[shift={(13,-1)}]
    \draw[thick,red,|-|] (0.25,0)--(2.25,0);
    \node at (1.25,0) {$A$};
    \draw[thick,red,|-|] (2.75,0)--(4.75,0);
    \node at (3.75,0) {$B$};
    \draw[thick,blue,|-|] (0.6,-1)--(2.6,-1);
    \node at (1.6,-1) {$D$};
    \draw[thick,green,|-|] (2.4,-1.5)--(4.4,-1.5);
    \node at (3.4,-1.5) {$E$};
  \end{scope}
  \node[above left] at (18.2,-3) {$s_5$};
\end{tikzpicture}\\[3ex]
\begin{tikzpicture}[xscale=0.7,>=latex]
  \tikzstyle{every node}=[inner sep=2pt,fill=white]
  \node (a) at (0,0) {$s_1$};
  \node (b) at (2.5,0) {$s_2$};
  \node (c) at (5,1) {$s_3$};
  \node (d) at (4,-1) {$s_4$};
  \node (e) at (6,-1.5) {$s_5$};
  \path (a) edge[bend left=20, thick] (c);
  \path (b) edge[bend left=30, thick] (c);
  \path (a) edge[bend right=20, thick] (d);
  \path (d) edge[bend right=25, thick] (e);
  \path (b) edge[bend left=20, thick] (e);
  \draw[dotted,thick,->] (a)--(b);
  \draw[dotted,thick,->] (b)--(d);
  \draw[dotted,thick,->] (d)--(e);
  \draw[dotted,thick,->] (b)--(c);
  \begin{scope}[xshift=7.7cm]
  \draw[thick,|-|] (0.25,0)--(2.25,0);
  \node (a) at (1.25,0) {$A$};
  \draw[thick,|-|] (2.75,0)--(4.75,0);
  \node (b) at (3.75,0) {$B$};
  \draw[thick,|-|] (1.25,1)--(3.75,1);
  \node (c) at (2.5,1) {$C$};
  \draw[thick,|-|] (0.6,-1)--(2.6,-1);
  \node (d) at (1.6,-1) {$D$};
  \draw[thick,|-|] (2.4,-1.5)--(4.4,-1.5);
  \node (e) at (3.4,-1.5) {$E$};
  \path (a.40) edge[bend left=23,dotted,thick,->] (b.150);
  \path (b.120) edge[dotted,thick,->] (c);
  \path (b.-150) edge[dotted,thick,->] (d.30);
  \path (d.-30) edge[dotted,thick,->] (e.150);
  \end{scope}
  \node at (15,-0.25) {$\begin{aligned}
    \mu(s_1)&=A\\
    \mu(s_2)&=B\\
    \mu(s_3)&=C\\
    \mu(s_4)&=D\\
    \mu(s_5)&=E
  \end{aligned}$};
\end{tikzpicture}
\caption[]{A strategy of Presenter forcing $3$ colors in the game $\gamINT(2)$.
In the first two rounds, Presenter introduces two disjoint intervals $A$ and $B$.
If they receive different colors, then Presenter forces a third color in the next round by presenting $C$.
If $A$ and $B$ receive the same color, then Presenter forces a third color in the next two rounds by presenting $D$ and $E$.
The five presentation scenarios $s_1,\ldots,s_5$ that can occur form a game graph of $\gamINT(2)$, illustrated as an abstract graph (on the left) and with representation by the intervals $A,B,C,D,E$ (on the right).}
\label{fig:int-game}
\end{figure}

Now, we derive Proposition \ref{prop:rectangle} from a known result on the game $\gamINT(k)$.

\begin{theorem}[Kierstead, Trotter \cite{KT81}]
\label{thm:int-game}
The value of the game\/ $\gamINT(k)$ is\/ $3k-2$.
In particular, Presenter has a finite strategy to force Algorithm to use at least\/ $3k-2$ colors in\/ $\gamINT(k)$.
\end{theorem}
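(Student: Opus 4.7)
The plan is to prove the two halves separately: an upper bound giving Algorithm a strategy using at most $3k-2$ colors, and a matching lower bound giving Presenter a finite strategy forcing $3k-2$ colors. Both proceed by induction on $k$; the base case $k=1$ is trivial because the graph is an independent set.

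For the upper bound, I would implement the rank-based on-line algorithm of Kierstead and Trotter. When a new interval $I$ arrives, let $\calS(I)$ be the set of previously presented intervals that intersect $I$ and whose left endpoint lies weakly to the left of the left endpoint of $I$, and assign $I$ the rank
\[ r(I) = 1 + \max\{r(J):J\in\calS(I)\}, \]
with the convention $\max\emptyset=0$. A short induction shows that an ascending chain of ranks yields a clique containing $I$, so $r(I)\leq\omega(G)\leq k$ throughout. Thus the vertex set is partitioned into rank classes $V_1,\ldots,V_k$. The structural point is that no interval in $V_i$ can be strictly contained in the interior of another interval of $V_i$, since otherwise the contained one would have seen the containing one in its $\calS$ and would have received a strictly larger rank. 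Hence each $G[V_i]$ is a proper interval graph, and a standard First-Fit argument on such graphs presented on-line uses at most $3$ colors per rank. Since $V_1$ is an independent set and only needs $1$ color, summing yields $1+3(k-1)=3k-2$ colors in total.

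For the lower bound, I would build Presenter's strategy recursively. Assuming a finite strategy $S_{k-1}$ forcing $3(k-1)-2=3k-5$ colors with clique number at most $k-1$, Presenter obtains $S_k$ by fixing three pairwise disjoint ``slots'' on the line together with a long overarching interval $L$ covering all three slots, which will be presented only when needed in order to enlarge cliques to size~$k$. Presenter runs copies of $S_{k-1}$ inside the slots and branches on Algorithm's responses to force, on top of the colors already used, three pairwise distinct new colors attached to a $k$-clique anchored by $L$; the standard $k=2$ case from Figure~\ref{fig:int-game} is the prototype of this branching. The finiteness of $S_k$ follows because $S_k$ is a finite combinatorial combination of finitely many copies of $S_{k-1}$.

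The main obstacle will be two-fold. On the algorithmic side, the delicate part is showing that within each rank class First-Fit really uses only $3$ colors and that $V_1$ is independent, so that the sum telescopes to $3k-2$ rather than $3k$. On the adversarial side, the difficulty is orchestrating the recursive use of $S_{k-1}$ across multiple slots so that (a) the three newly forced colors are genuinely distinct from all colors used in the sub-instances, and (b) the clique number stays at most $k$ throughout. Since this result is already established in \cite{KT81}, the most efficient route for our paper is to invoke Theorem~\ref{thm:int-game} as a black box and combine it with Lemmas~\ref{lem:game-upper} and~\ref{lem:game-lower} to deduce Proposition~\ref{prop:rectangle}.
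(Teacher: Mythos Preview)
The paper does not prove Theorem~\ref{thm:int-game}; it is quoted from \cite{KT81} and used as a black box in the proof of Proposition~\ref{prop:rectangle}. Your closing recommendation---to invoke the result rather than reprove it---is exactly what the paper does, so there is no ``paper's own proof'' to compare against.

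That said, your upper-bound sketch has a genuine gap. With rank defined via your $\calS(I)$, an ascending chain of ranks need not form a clique, so the inequality $r(I)\leq k$ can fail: if Presenter plays $[-5,2]$, then $[0,10]$, then $[5,15]$, the ranks are $1,2,3$ while $\omega=2$. Likewise $V_1$ is not independent: present $[5,10]$ and then $[0,7]$; the latter has $\calS=\emptyset$ because $\leftside([5,10])=5>0=\leftside([0,7])$, so both receive rank $1$ even though they intersect. The actual Kierstead--Trotter level assignment is not based on left endpoints but on clique sizes relative to the lower levels already fixed, and the structural fact driving the $3$-colour bound per level is that each new interval has at most two earlier neighbours \emph{within its own level}, not a proper-interval property (First-fit on on-line proper interval graphs does not in general use only three colours). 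Your lower-bound outline is in the right spirit but too schematic to verify as written; since you correctly propose to cite \cite{KT81} anyway, neither issue affects the paper.
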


\begin{proof}[Proof of Proposition \ref{prop:rectangle}]
By Theorem \ref{thm:int-game} and Lemma \ref{lem:game-lower}, there are game graphs of $\gamINT(k)$ with chromatic number $3k-2$ (see Figure \ref{fig:int-game}).
It remains to show that every game graph of $\gamINT(k)$ has an intersection model by axis-parallel rectangles.

Let $G$ be a game graph of $\gamINT(k)$ with underlying forest $F$ on $V(G)$ and with representation $\mu\colon V(G)\to\calI$.
For $u\in V(G)$, let $F(u)$ denote the set of vertices of the subtree of $F$ rooted at $u$, inclusive.
We run depth-first search on $F$ and record, for each $u\in V(G)$, the times $x_u,y_u\in\setZ$ at which $F(u)$ is entered and left, respectively, so that
\begin{itemize}
\item $x_u<y_u$ for every $u\in V(G)$,
\item if $v\in F(u)\setminus\{u\}$, then $x_u<x_v<y_v<y_u$,
\item if $v\notin F(u)$ and $u\notin F(v)$, then $[x_u,y_u]\cap[x_v,y_v]=\emptyset$.
\end{itemize}
For every vertex $u\in V(G)$, let $R_u$ be the rectangle in $\setR^2$ defined by $R_u=\mu(u)\times[x_u,y_u]$ (see Figure \ref{fig:game-to-rectangle}).
Consider any two vertices $u,v\in V(G)$.
If $v\in F(u)$ or $u\in F(v)$, then $[x_v,y_v]\subset[x_u,y_u]$ or $[x_u,y_u]\subset[x_v,y_v]$, respectively; hence, $R_u$ and $R_v$ intersect if and only if $\mu(u)$ and $\mu(v)$ intersect, that is, if and only if $uv\in E(G)$.
If $v\notin F(u)$ and $u\notin F(v)$, so that $uv\notin E(G)$, then $[x_u,y_u]\cap[x_v,y_v]=\emptyset$, and thus $R_u\cap R_v=\emptyset$.
This shows that the mapping $u\mapsto R_u$ is an intersection model of $G$.
\end{proof}

\begin{figure}[t]
\begin{tikzpicture}[xscale=0.9,>=latex,shorten >=-0.4pt,shorten <=-0.4pt]
  \tikzstyle{every node}=[inner sep=2pt,fill=white]
  \draw[thick,|-|] (0.75,-1)--(0.75,5);
  \node (a) at (0.75,2) {$a$};
  \node (a) at (0.75,2) {$a$};
  \draw[thick,|-|] (2.5,-0.5)--(2.5,4.5);
  \node (b) at (2.5,2) {$b$};
  \draw[thick,|-|] (5,2.5)--(5,4);
  \node (c) at (5,3.25) {$c$};
  \draw[thick,|-|] (4,0)--(4,1.5);
  \node (d) at (4,0.75) {$d$};
  \draw[thick,|-|] (6,0.4)--(6,1.1);
  \node (e) at (6,0.75) {$e$};
  \path (a) edge[thick,bend left=22] (c);
  \path (b) edge[thick,bend left=30] (c);
  \path (a) edge[thick,bend right=20] (d);
  \path (d) edge[thick,bend right=30] (e);
  \path (b) edge[thick,bend left=16] (e);
  \draw[dotted,thick,->] (a)--(b);
  \draw[dotted,thick,->] (b)--(d);
  \draw[dotted,thick,->] (d)--(e);
  \draw[dotted,thick,->] (b)--(c);
  \begin{scope}
    \tikzstyle{every node}=[inner sep=5pt]
    \node[left] at (0.75,-1) {$x_a$};
    \node[left] at (0.75,5) {$y_a$};
    \node[left] at (2.5,-0.5) {$x_b$};
    \node[left] at (2.5,4.5) {$y_b$};
    \node[left] at (5,2.5) {$x_c$};
    \node[left] at (5,4) {$y_c$};
    \node[left] at (4,0) {$x_d$};
    \node[left] at (4,1.5) {$y_d$};
    \node[left,inner sep=1pt] at (6,0.25) {$x_e$};
    \node[left,inner sep=2pt] at (6,1.25) {$y_e$};
  \end{scope}
  \draw[thick,|-|] (8,7)--(11,7);
  \node (a3) at (9.5,7) {$\mu(a)$};
  \draw[thick,|-|] (12,7)--(15,7);
  \node (b3) at (13.5,7) {$\mu(b)$};
  \draw[thick,|-|] (8.5,7.5)--(14.5,7.5);
  \node (c3) at (11.5,7.5) {$\mu(c)$};
  \draw[thick,|-|] (9,6.5)--(11.7,6.5);
  \node (d3) at (10.35,6.5) {$\mu(d)$};
  \draw[thick,|-|] (11.3,6)--(14,6);
  \node (e3) at (12.65,6) {$\mu(e)$};
  \draw[fill=gray,opacity=0.5] (8,5) rectangle (11,-1);
  \draw[fill=gray,opacity=0.5] (12,4.5) rectangle (15,-0.5);
  \draw[fill=gray,opacity=0.5] (8.5,4) rectangle (14.5,2.5);
  \draw[fill=gray,opacity=0.5] (9,1.5) rectangle (11.7,0);
  \draw[fill=gray,opacity=0.5] (11.3,1.1) rectangle (14,0.4);
  \draw[dashed] (0.75,5)--(11,5);
  \draw[dashed] (0.75,-1)--(11,-1);
  \draw[dashed] (8,7)--(8,-1);
  \draw[dashed] (11,7)--(11,-1);
  \draw[dashed] (2.5,4.5)--(15,4.5);
  \draw[dashed] (2.5,-0.5)--(15,-0.5);
  \draw[dashed] (12,7)--(12,-0.5);
  \draw[dashed] (15,7)--(15,-0.5);
  \draw[dashed] (5,4)--(14.5,4);
  \draw[dashed] (5,2.5)-- (14.5,2.5);
  \draw[dashed] (8.5,7.5)--(8.5,2.5);
  \draw[dashed] (14.5,7.5)--(14.5,2.5);
  \draw[dashed] (4,1.5)--(11.7,1.5);
  \draw[dashed] (4,0)--(11.7,0);
  \draw[dashed] (9,6.5)--(9,0);
  \draw[dashed] (11.7,6.5)--(11.7,0);
  \draw[dashed] (6,1.1)--(14,1.1);
  \draw[dashed] (6,0.4)--(14,0.4);
  \draw[dashed] (11.3,6.0)--(11.3,0.4);
  \draw[dashed] (14,6.0)--(14,0.4);
  \draw[thick] (8,5) rectangle (11,-1);
  \draw[thick] (12,4.5) rectangle (15,-0.5);
  \draw[thick] (8.5,4) rectangle (14.5,2.5);
  \draw[thick] (9,1.5) rectangle (11.7,0);
  \draw[thick] (11.3,1.1) rectangle (14,0.4);
\end{tikzpicture}
\caption[]{Representation of a game graph of $\gamINT(2)$ as an intersection graph of axis-parallel rectangles.}
\label{fig:game-to-rectangle}
\end{figure}
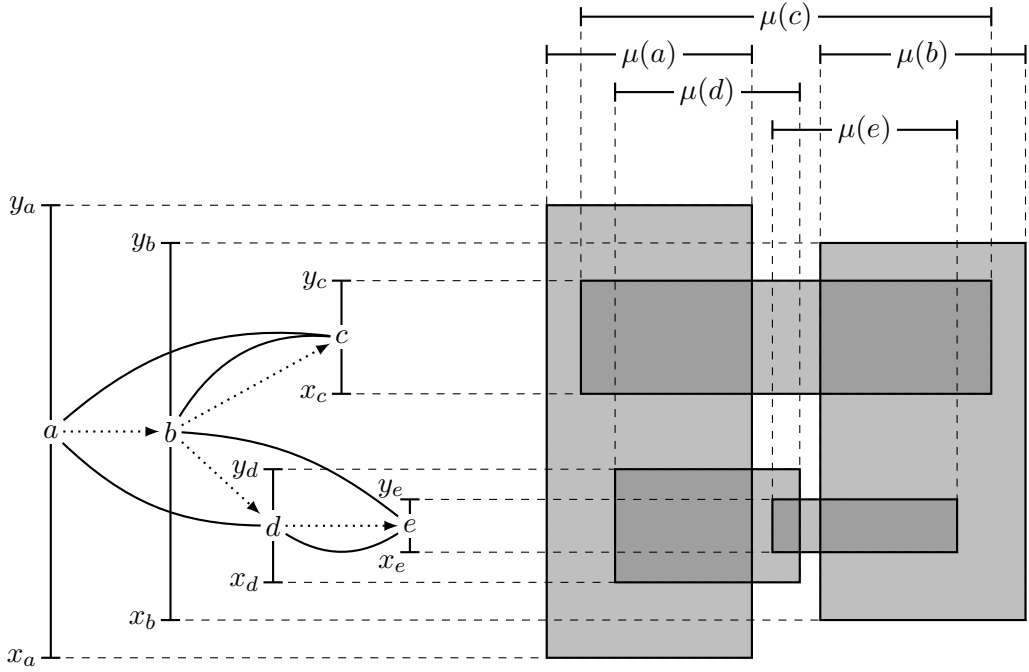

In an analogous way, we can reprove the result of Burling \cite{Bur65} that there exist triangle-free intersection graphs of axis-parallel boxes in $\setR^3$ with chromatic number $\Theta(\log\log n)$.
To this end, we use the result of Erlebach and Fiala \cite{EF02} that Presenter can force the use of arbitrarily many colors in the on-line coloring game on the class of triangle-free rectangle graphs presented with their representation by axis-parallel rectangles.
Their strategy (a geometric realization of the strategy for forests mentioned in the introduction) forces the use of $c$ colors in $2^{c-1}$ rounds with $2^{2^{\smash{O(c)}}}$ presentation scenarios.
Hence, Lemma \ref{lem:game-lower} gives us a triangle-free game graph with chromatic number $c$ and with $2^{2^{\smash{O(c)}}}$ vertices.
The same argument as in the proof of Proposition \ref{prop:rectangle}, using an additional dimension to encode the branching structure of the game graph, shows that this graph is an intersection graph of axis-parallel boxes in $\setR^3$.
The graphs obtained this way are the same as the graphs constructed by Burling and isomorphic to the triangle-free rectangle overlap graphs with chromatic number $\Theta(\log\log n)$ constructed in \cite{PKK+13}.

\section{Interval filament graphs}
\label{sec:filament}

This section is devoted to the proof of Theorem \ref{thm:filament}.
Let $\dom(f)$ denote the domain of an interval filament $f$, that is, the closed interval on which the function $f$ is defined.
We will assume, without loss of generality, that in any interval filament intersection model, the domains are in general position, that is, no two of their endpoints coincide.

The following lemma allows us to reduce the general problem of coloring interval filament graphs to the problem for domain-non-overlapping interval filament graphs.

\begin{lemma}
\label{lem:reduction-to-non-overlapping}
Let\/ $g\colon\setN\to\setN$ be a non-decreasing function with the property that every interval overlap graph satisfies\/ $\chi\leq g(\omega)$.
Then the vertices of every interval filament graph can be partitioned into at most\/ $g(\omega)$ classes so that the subgraph induced on each class is a domain-non-overlapping interval filament graph.
\end{lemma}

\begin{proof}
Let $G$ be a graph with an interval filament intersection model $u\mapsto f_u$.
Let $G'$ be the graph with $V(G')=V(G)$ such that $uv\in E(G')$ if and only if $\dom(f_u)$ and $\dom(f_v)$ overlap.
It follows that $G'$ is a subgraph of $G$ and $G'$ is an interval overlap graph with overlap model $u\mapsto\dom(f_u)$.
The definition of $g$ implies that $G'$ can be properly colored using at most $g(\omega(G'))$ colors, which is at most $g(\omega(G))$ colors due to the monotonicity of $g$.
The model $u\mapsto f_u$ restricted to each color class consists of interval filaments with non-overlapping domains.
\end{proof}

The \emph{incomparability graph} of a partial order $<$ on a set $P$ is the graph with vertex set $P$ and edge set consisting of the pairs of $<$-incomparable elements of $P$.
A graph $G$ is a \emph{co-comparability graph} if it is the incomparability graph of some partial order on $V(G)$.
Consider an on-line game $\gamCOCO(k)$ on the class of co-comparability graphs with clique number at most $k$ presented with their order relation in the \emph{up-growing} manner.
That is, Presenter builds a co-comparability graph $G$ declaring, in each round, the order relation $<$ between the vertices presented before and the new vertex so that
\begin{enumerate}
\item\label{item:coco-graph} $G$ is the incomparability graph of the order $<$ on $V(G)$,
\item\label{item:coco-prec} every vertex of $G$ is maximal in the order $<$ at the moment it is presented,
\item\label{item:coco-omega} $\omega(G)\leq k$, that is, the width of the order $<$ is at most $k$,
\end{enumerate}
and Algorithm properly colors $G$ on-line.

\begin{lemma}
\label{lem:coco-graph}
A graph\/ $G$ is a game graph of\/ $\gamCOCO(k)$ if and only if\/ $G$ is a domain-non-overlapping interval filament graph and\/ $\omega(G)\leq k$.
\end{lemma}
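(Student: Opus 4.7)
I would prove the biconditional by extracting a game-graph structure from a filament representation in one direction and constructing filaments from the game-graph data in the other.

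For the $(\Leftarrow)$ direction, suppose $G$ is a domain-non-overlapping interval filament graph with $\omega(G)\leq k$, given by a representation $v\mapsto f_v$ with domains $I_v$. Since the $I_v$ are pairwise nested or disjoint, the containment relation defines a forest $F$ on $V(G)$: let $u$ be an ancestor of $v$ iff $I_u\supsetneq I_v$. Define a relation $<$ on $V(G)$ by declaring $u<v$ iff $I_u\supsetneq I_v$ and $f_u\cap f_v=\emptyset$. Assuming generic position so that $f_u>0$ on the interior of $I_u$, disjointness of $f_u$ and $f_v$ combined with $f_u(\partial I_v)>0=f_v(\partial I_v)$ forces $f_u>f_v$ throughout $I_v$; from this, transitivity of $<$ follows easily. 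The game-graph conditions are then immediate: on any root-to-leaf path $P_v$ of $F$, two vertices have nested domains, so they are $G$-adjacent iff their filaments intersect iff they are $<$-incomparable, whence $G[V(P_v)]$ is the incomparability graph of the restriction of $<$ to $V(P_v)$; since $u\prec v$ forces $I_u\supsetneq I_v$ and rules out $v<u$, every vertex is $<$-maximal when it appears on $P_v$, giving the up-growing condition; the width of $<$ on $V(P_v)$ equals $\omega(G[V(P_v)])\leq k$; and $G$-edges come only from intersecting filaments, hence from nested domains, hence only between ancestor-descendant pairs in $F$.

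For the $(\Rightarrow)$ direction, suppose $G$ is a game graph of $\gamCOCO(k)$ with underlying forest $F$ and partial order $<$. I would first assign intervals $I_v\subset\setR$ by a DFS of $F$, exactly as in the proof of Proposition~\ref{prop:rectangle}, so that $I_u\supsetneq I_v$ iff $u\prec v$ and $I_u\cap I_v=\emptyset$ otherwise. Second, I would build the filaments inductively along the $\prec$-order. When adding $f_v$, the ancestor filaments $f_u$ with $u\prec v$ are already defined on $I_u$, and I need to choose $f_v\colon I_v\to\setR_{\geq 0}$ with $f_v=0$ on $\partial I_v$ that intersects $f_u$ iff $u$ and $v$ are $<$-incomparable, equivalently stays strictly below $f_u$ on the interior of $I_v$ iff $u<v$. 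It suffices to find a single interior point $t\in I_v$ satisfying the separation $\max_{u\,\|\,v}f_u(t)<\min_{u<v}f_u(t)$: a tent-shaped $f_v$ peaking at $t$ with peak strictly between these two values then has the required intersection pattern with every ancestor. To guarantee this, I would pick a reference point $t_v$ in each $I_v$ (say its midpoint) and prescribe a priori, for every ancestor $u\preceq v$, the value $f_u(t_v)$ according to the rank of $u$ in $<$ on $V(P_v)$, placing the $<$-incomparables of $v$ strictly below the $<$-predecessors of $v$. Each $f_u$ is then defined on $I_u$ as a continuous interpolation through the prescribed heights at the midpoints $t_w$ of all descendants $w\succeq u$, vanishing at $\partial I_u$.

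The main obstacle lies in making this height scheme globally consistent: the values $f_u(t_w)$ specified at the midpoints $t_w$ nested inside $I_u$ must respect the separation condition at every $t_w$ and admit a continuous non-negative interpolation vanishing at $\partial I_u$. The up-growing hypothesis is essential here, since it ensures that when $v$ is about to be added no ancestor of $v$ is $<$-above $v$, so the ``incomparable'' and ``below'' sides of the separation at $t_v$ are well-defined; and the width bound $\omega(G)\leq k$ keeps each antichain, and hence the range of heights in use at each level, under control. A concrete scheme that works is to let $f_u(t_w)$ be a strictly decreasing function of $|\{u'\preceq w:u<u'\}|$: this automatically places $<$-predecessors of any new vertex high and $<$-incomparables low, so the separation holds at every midpoint, and a piecewise-linear interpolation finishes the construction.
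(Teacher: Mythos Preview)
Your $(\Leftarrow)$ direction is correct and essentially identical to the paper's.

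For $(\Rightarrow)$ the paper takes a different and cleaner route: it invokes the theorem of Golumbic--Rotem--Urrutia and Lov\'asz that every finite poset is realizable as the pointwise order on a family of continuous functions $[0,1]\to(0,\infty)$. For each \emph{leaf} $v$ of $F$ this is applied to $(V(P_v),<)$, yielding functions $f_{u,v}\colon[x_v,y_v]\to(0,\infty)$ for all $u\in V(P_v)$. Since the leaf intervals $[x_v,y_v]$ are pairwise disjoint, the pieces $f_{u,v}$ over the various leaves $v$ below $u$ are then joined by straight segments across the gaps to form a single filament $f_u$, and no global consistency problem ever arises.

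Your direct height scheme has a genuine gap. Consider the single path $a\prec b\prec c$ with $a<b$ and with $c$ $<$-incomparable to both $a$ and $b$; this is a valid up-growing width-$2$ presentation. Your count gives $|\{u'\preceq b:a<u'\}|=1$ and $|\{u'\preceq b:b<u'\}|=0$, so with $g$ strictly decreasing you get $f_a(t_b)<f_b(t_b)$; since $f_a>0=f_b$ on $\partial I_b$, the curves $f_a$ and $f_b$ must cross, contradicting $a<b$. If instead you take $g$ strictly increasing, then $|\{u'\preceq c:a<u'\}|=1>0=|\{u'\preceq c:c<u'\}|$ gives $f_a(t_c)>f_c(t_c)$, and as $t_c$ is the only reference point in $I_c$ the interpolated $f_a$ stays above $f_c$ throughout $I_c$ and they never meet, contradicting that $a$ and $c$ are incomparable. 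The underlying problem is that the number of $<$-successors of $u$ along $P_w$ does not distinguish ``$u<w$'' from ``$u$ is incomparable to $w$ but has a $<$-successor strictly between them''; producing a height assignment that is correct simultaneously at every descendant is precisely what the function-diagram theorem delivers, and it cannot be replaced by a single scalar rank.
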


\begin{proof}
Let $G$ be a graph with a domain-non-overlapping interval filament intersection model $u\mapsto f_u$ and with $\omega(G)\leq k$.
The inclusion order on the domains of the interval filaments $f_u$ defines a forest $F$ on $V(G)$ so that for each $v\in V(G)$,
\begin{itemize}
\item if there is no $u\in V(G)$ such that $\dom(f_u)\supset\dom(f_v)$, then $v$ is a root of $F$,
\item otherwise, the parent of $v$ in $F$ is the unique $u\in V(G)$ such that $\dom(f_u)\supset\dom(f_v)$ and $\dom(f_u)$ is minimal with this property.
\end{itemize}
It follows that $u$ is an ancestor of $v$ in $F$ if and only if $\dom(f_u)\supset\dom(f_v)$.
We define a relation $<$ on $V(G)$ so that $u<v$ if and only if $\dom(f_u)\supset\dom(f_v)$ and $f_u\cap f_v=\emptyset$.
Clearly, $<$ is a partial order.
Consider the path $P_v$ in $F$ from a root to a vertex $v$.
The graph $G[V(P_v)]$, the order $<$ restricted to $V(P_v)$, and the order $\prec$ of vertices along $P_v$ form a valid $|V(P_v)|$-round presentation scenario in $\gamCOCO(k)$.
Indeed, the condition \ref{item:coco-graph} of $\gamCOCO(k)$ holds, because if $u\prec v$, then $\dom(f_u)\supset\dom(f_v)$, so $u<v$ if and only if $uv\notin E(G)$; \ref{item:coco-prec} holds, because if $u<v$, then $\dom(f_u)\supset\dom(f_v)$, so $u\prec v$; and \ref{item:coco-omega} follows from the assumption that $\omega(G)\leq k$.
Moreover, if $uv\in E(G)$, then $f_u\cap f_v\neq\emptyset$, which implies $\dom(f_u)\subset\dom(f_v)$ or $\dom(f_u)\supset\dom(f_v)$, by the assumption that the model $u\mapsto f_u$ is domain-non-overlapping.
Hence, if $uv\in E(G)$, then $u$ is an ancestor of $v$ or $v$ is an ancestor of $u$.
This shows that $G$ is indeed a game graph of $\gamCOCO(k)$.

For the converse implication, we use a result due to Golumbic, Rotem and Urrutia \cite{GRU83} and Lovász \cite{Lov83}, which asserts that every partial order is isomorphic to the order $<$ on some family of continuous functions $[0,1]\to(0,\infty)$, where $f<g$ means that $f(x)<g(x)$ for every $x\in[0,1]$.
Let $G$ be a game graph of $\gamCOCO(k)$ with underlying forest $F$ and relation $<$.
For $u\in V(G)$, let $F(u)$ denote the set of vertices of the subtree of $F$ rooted at $u$, including $u$ itself.
As in the proof of Proposition \ref{prop:rectangle}, we use depth-first search to compute, for each $u\in V(G)$, numbers $x_u,y_u\in\setZ$ such that
\begin{itemize}
\item $x_u<y_u$ for every $u\in V(G)$,
\item if $v\in F(u)\setminus\{u\}$, then $x_u<x_v<y_v<y_u$,
\item if $v\notin F(u)$ and $u\notin F(v)$, then $[x_u,y_u]\cap[x_v,y_v]=\emptyset$.
\end{itemize}

Let $L$ denote the set of leaves of $F$, and let $L(u)=L\cap F(u)$ for $u\in V(G)$.
For $v\in L$, let $P_v$ denote the path in $F$ from a root to $v$.
The graph $G[V(P_v)]$ is the incomparability graph of the order $<$ restricted to $V(P_v)$.
Hence, by the above-mentioned result of \cite{GRU83,Lov83}, it has an intersection representation by continuous functions $[x_v,y_v]\to(0,\infty)$.
Specifically, every vertex $u\in V(P_v)$ can be assigned a continuous function $f_{u,v}\colon[x_v,y_v]\to(0,\infty)$ so that $u_1<u_2$ if and only if $f_{u_1,v}>f_{u_2,v}$ for any $u_1,u_2\in V(P_v)$ (note that the order is reversed).
Now, for every vertex $u\in V(G)$, we define an interval filament $f_u$ as the union of the following curves:
\begin{itemize}
\item the functions $f_{u,v}$ for all $v\in L(u)$,
\item the segment connecting points $(x_u-\frac{1}{3},0)$ and $(x_v,f_{u,v}(x_v))$ for the first leaf $v\in L(u)$ in the depth-first search order,
\item the segments connecting points $(y_{v_1},f_{u,v_1}(y_{v_1}))$ and $(x_{v_2},f_{u,v_2}(x_{v_2}))$ for any two leaves $v_1,v_2\in L(u)$ consecutive in the depth-first search order,
\item the segment connecting points $(y_v,f_{u,v}(y_v))$ and $(y_u+\frac{1}{3},0)$ for the last leaf $v\in L(u)$ in the depth-first search order.
\end{itemize}
See Figure \ref{fig:filament} for an illustration.
It follows that $\dom(f_u)=[x_u-\frac{1}{3},y_u+\frac{1}{3}]$ for every $u\in V(G)$, so the domains of the interval filaments $f_u$ do not overlap.

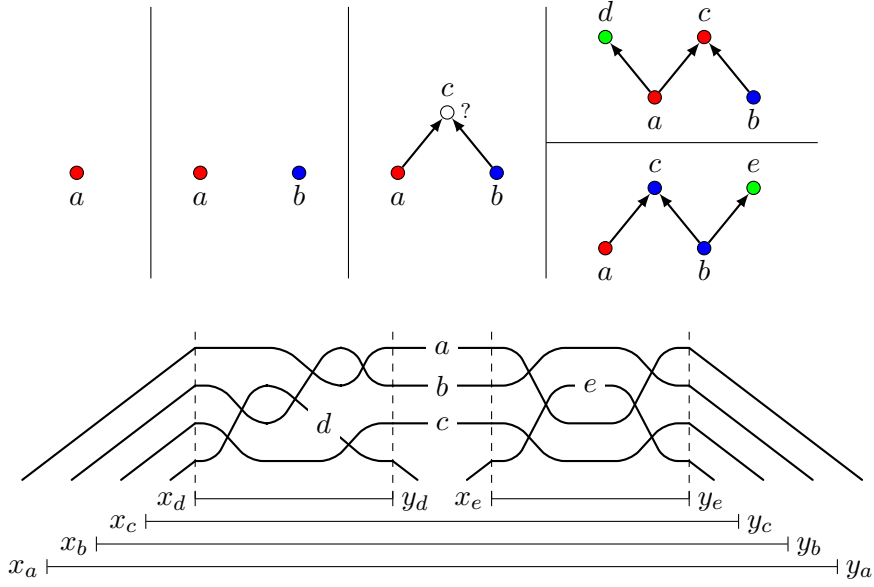
\begin{figure}[t]
\centering
\begin{tikzpicture}[xscale=0.65,yscale=0.4,>=latex]
  \tikzstyle{every node}=[circle,minimum size=5pt,inner sep=0pt,draw,fill]
  \tikzstyle{every label}=[rectangle,text height=1.5ex,text depth=0.25ex,inner sep=2pt,draw=none,fill=none]
  \begin{scope}[shift={(0.5,0)}]
    \node[label=below:$a$,fill=red] (a1) at (1,-1) {};
  \end{scope}
  \draw (3,-4.5)--(3,4.5);
  \begin{scope}[shift={(3,0)}]
    \node[label=below:$a$,fill=red] (a1) at (1,-1) {};
    \node[label=below:$b$,fill=blue] (b1) at (3,-1) {};
  \end{scope}
  \draw (7,-4.5)--(7,4.5);
  \begin{scope}[shift={(7,0)}]
    \node[label=below:$a$,fill=red] (a1) at (1,-1) {};
    \node[label=below:$b$,fill=blue] (b1) at (3,-1) {};
    \node[label=above:$c$,label={[yshift=1pt]right:\scriptsize ?},fill=white] (c1) at (2,1) {};
    \draw[thick,->] (a1)--(c1);
    \draw[thick,->] (b1)--(c1);
  \end{scope}
  \draw (11,-4.5)--(11,4.5);
  \begin{scope}[shift={(11.2,-2.5)}]
    \node[label=below:$a$,fill=red] (a1) at (1,-1) {};
    \node[label=below:$b$,fill=blue] (b1) at (3,-1) {};
    \node[label=above:$c$,fill=blue] (c1) at (2,1) {};
    \node[label=above:$e$,fill=green] (e1) at (4,1) {};
    \draw[thick,->] (a1)--(c1);
    \draw[thick,->] (b1)--(c1);
    \draw[thick,->] (b1)--(e1);
  \end{scope}
  \draw (11,0)--(16.5,0);
  \begin{scope}[shift={(12.2,2.5)}]
    \node[label=below:$a$,fill=red] (a1) at (1,-1) {};
    \node[label=below:$b$,fill=blue] (b1) at (3,-1) {};
    \node[label=above:$c$,fill=red] (c1) at (2,1) {};
    \node[label=above:$d$,fill=green] (d1) at (0,1) {};
    \draw[thick,->] (a1)--(c1);
    \draw[thick,->] (b1)--(c1);
    \draw[thick,->] (a1)--(d1);
  \end{scope}
\end{tikzpicture}\\[3ex]
\begin{tikzpicture}[xscale=0.65,yscale=0.5]
  \begin{scope}[shorten <=-0.2pt,shorten >=-0.2pt]
    \draw[|-|] (1,-2.8)--(17,-2.8);
    \draw[|-|] (2,-2.2)--(16,-2.2);
    \draw[|-|] (3,-1.6)--(15,-1.6);
    \draw[|-|] (4,-1)--(8,-1);
    \draw[|-|] (10,-1)--(14,-1);
  \end{scope}
  \tikzstyle{every node}=[text height=1ex,text depth=0ex,inner sep=3pt]
  \node[left] at (1,-2.8) {$x_a$};
  \node[right] at (17,-2.8) {$y_a$};
  \node[left] at (2,-2.2) {$x_b$};
  \node[right] at (16,-2.2) {$y_b$};
  \node[left] at (3,-1.6) {$x_c$};
  \node[right] at (15,-1.6) {$y_c$};
  \node[left] at (4,-1) {$x_d$};
  \node[right] at (8,-1) {$y_d$};
  \node[left] at (10,-1) {$x_e$};
  \node[right] at (14,-1) {$y_e$};
  \draw[dashed] (4,-1)--(4,3.5);
  \draw[dashed] (8,-1)--(8,3.5);
  \draw[dashed] (10,-1)--(10,3.5);
  \draw[dashed] (14,-1)--(14,3.5);
  \tikzstyle{every node}=[inner sep=3pt,fill=white]
  \begin{scope}[thick,line cap=round]
    \draw (0.5,-0.5)--(4,3);
    \draw[rounded corners=5pt] (4,3)--(5.8,3)--(6.7,2)--(7.2,2)--(7.6,3)--(8,3);
    \draw (8,3)--(10,3);
    \draw[rounded corners=5pt] (10,3)--(10.5,3)--(11.3,1)--(12.7,1)--(13.5,3)--(14,3);
    \draw (14,3)--(17.5,-0.5);
  \end{scope}
  \node at (9,3) {$a$};
  \begin{scope}[thick,line cap=round]
    \draw (1.5,-0.5)--(4,2);
    \draw[rounded corners=5pt] (4,2)--(4.5,2)--(5.2,1)--(5.7,1)--(6.7,3)--(7.2,3)--(7.6,2)--(8,2);
    \draw (8,2)--(10,2);
    \draw[rounded corners=5pt] (10,2)--(10.5,2)--(11.2,3)--(12.8,3)--(13.5,2)--(14,2);
    \draw (14,2)--(16.5,-0.5);
  \end{scope}
  \node at (9,2) {$b$};
  \begin{scope}[thick,line cap=round]
    \draw (2.5,-0.5)--(4,1);
    \draw[rounded corners=5pt] (4,1)--(4.4,1)--(5.1,0)--(6.8,0)--(7.5,1)--(8,1);
    \draw (8,1)--(10,1);
    \draw[rounded corners=5pt] (10,1)--(10.5,1)--(11.2,0)--(12.8,0)--(13.5,1)--(14,1);
    \draw (14,1)--(15.5,-0.5);
  \end{scope}
  \node at (9,1) {$c$};
  \begin{scope}[thick,line cap=round]
    \draw (3.5,-0.5)--(4,0);
    \draw[rounded corners=5pt] (4,0)--(4.4,0)--(5.2,2)--(5.7,2)--(7.5,0)--(8,0);
    \draw (8,0)--(8.5,-0.5);
  \end{scope}
  \node at (6.6,1) {$d$};
  \begin{scope}[thick,line cap=round]
    \draw (9.5,-0.5)--(10,0);
    \draw[rounded corners=5pt] (10,0)--(10.5,0)--(11.3,2)--(12.7,2)--(13.5,0)--(14,0);
    \draw (14,0)--(14.5,-0.5);
  \end{scope}
  \node at (12,2) {$e$};
\end{tikzpicture}
\caption[]{Top: A strategy of Presenter forcing $3$ colors in $4$ rounds of the game $\gamCOCO(2)$.
If $a$, $b$, $c$ receive distinct colors, then Presenter wins in $3$ rounds.
Otherwise, the color of $c$ is the same as the color of $a$ or $b$, and depending on Algorithm's choice, Presenter forces a $3$rd color in the $4$th round.
Bottom: A domain-non-overlapping interval filament model of the game graph arising from the strategy on the top.}
\label{fig:filament}
\end{figure}

It remains to prove that $u\mapsto f_u$ is an intersection model of $G$.
Fix $u,v\in V(G)$.
First, suppose $v\in F(u)$ and $u<v$, so that $uv\notin E(G)$.
By the definition of $f_u$ and $f_v$, we have $\dom(f_u)\supset\dom(f_v)$, and $f_u$ lies entirely above $f_v$.
Hence $f_u\cap f_v=\emptyset$.
Now, suppose $v\in F(u)$ and $u\not<v$.
We also have $v\not<u$, by the condition \ref{item:coco-prec} of the definition of $\gamCOCO(k)$.
Hence $uv\in E(G)$.
For any leaf $w\in L(v)$, the functions $f_{u,w}$ and $f_{v,w}$ intersect, so $f_u\cap f_v\neq\emptyset$.
The case that $u\in F(v)$ is analogous.
Finally, suppose $u\notin F(v)$ and $v\notin F(u)$, so that $uv\notin E(G)$.
It follows that $[x_u,y_u]\cap[x_v,y_v]=\emptyset$, so $\dom(f_u)\cap\dom(f_v)=\emptyset$.
Hence $f_u\cap f_v=\emptyset$.
This shows that $u\mapsto f_u$ is indeed an intersection model of $G$.
\end{proof}

\begin{theorem}[Felsner \cite{Fel97}]
\label{thm:coco-game}
The value of the game\/ $\gamCOCO(k)$ is\/ $\smash[b]{\binom{k+1}{2}}$.
That is, there is an on-line algorithm using at most\/ $\smash[b]{\binom{k+1}{2}}$ colors, and there is a finite strategy of Presenter forcing Algorithm to use at least\/ $\binom{k+1}{2}$ colors in\/ $\gamCOCO(k)$.
\end{theorem}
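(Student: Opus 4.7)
The plan rests on the standard correspondence between proper colorings of the incomparability graph of a partial order $<$ on a set $V$ and chain partitions of $V$ under $<$: a color class must induce an edgeless subgraph of the incomparability graph, that is, a set of pairwise comparable elements, which is a chain; conversely every chain is independent in the incomparability graph. Under this identification, $\gamCOCO(k)$ becomes the on-line chain partitioning game for up-growing posets of width at most $k$, and the target value $\binom{k+1}{2}$ matches the bound established by Felsner.

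For the upper bound, I would analyze a greedy chain-partitioning algorithm that, upon arrival of a new maximal vertex $v$, places $v$ on top of some existing chain whose current top lies below $v$ in $<$, opening a new chain only if no such chain exists. The central claim is that the number of chains ever opened is at most $\binom{k+1}{2}$. I would prove this by induction on $k$. When a new chain $C_m$ is opened by some vertex $u_m$, the tops of $C_1,\ldots,C_{m-1}$ at that moment are all incomparable to $u_m$; the width-$k$ constraint forces a strong structural relationship among these tops, and I would exploit it by isolating one chain (say, the one containing the largest element comparable to $u_m$ in some canonical way) and reducing the remaining configuration to an up-growing poset of width $k-1$. The bookkeeping should yield the recurrence $f(k)\leq f(k-1)+k$, which resolves to $\binom{k+1}{2}$.

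For the lower bound, I would construct a finite Presenter strategy $S_k$ recursively. The base case $S_1$ simply presents a single element. For the inductive step, $S_k$ begins by invoking $S_{k-1}$ on a sub-poset of width $k-1$ to force Algorithm to use $\binom{k}{2}$ distinct colors; then it introduces $k$ further maximal elements, whose comparabilities with previously presented vertices are chosen so that each successive element must either repeat a color already used on an element incomparable to it (impossible) or introduce a fresh color, thereby contributing $k$ new colors for a total of $\binom{k}{2}+k=\binom{k+1}{2}$. Maintaining the up-growing condition and ensuring the width never exceeds $k$ during both the sub-strategy invocation and the final $k$ rounds is the main constraint shaping the construction.

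The main obstacle is the upper-bound analysis. A polynomial bound in $k$ is comparatively easy to obtain by crude arguments on pattern-counting, but matching the exact constant $\binom{k+1}{2}$ requires a tight structural invariant: one must track enough information about the history of chain tops to argue that the set of possible "types" of newly-opened chains has size at most $\binom{k+1}{2}$. I expect this to require associating each chain at each moment with a label in a fixed $\binom{k+1}{2}$-element set (for instance, an unordered pair of indices in $[k+1]$) that remains distinct across simultaneously active chains, using the up-growing property and the width bound to show that label reuse is always possible when a chain is retired. The lower bound, in contrast, is essentially a recursive adversary construction once the parameters are chosen correctly.
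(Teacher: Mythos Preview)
The paper does not prove this theorem: it is stated as a citation of Felsner's result and used as a black box (see the line immediately after the statement, where Theorem \ref{thm:filament} is deduced directly from Theorem \ref{thm:coco-game} together with Lemmas \ref{lem:coco-graph}, \ref{lem:game-upper}, \ref{lem:game-lower}, and \ref{lem:reduction-to-non-overlapping}). So there is no ``paper's own proof'' to compare against; any proof you supply goes beyond what the paper contains.

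As for your sketch itself: the reduction to on-line chain partitioning of up-growing posets is correct, and the recursive lower-bound construction is essentially the right idea. The upper-bound plan, however, has a real gap that you partly flag yourself. The naive greedy rule ``put $v$ on top of any chain whose current top is below $v$'' does \emph{not} achieve $\binom{k+1}{2}$; the tie-breaking rule matters, and Felsner's algorithm makes a specific choice (informally, extend the chain whose top is maximal among those below $v$) together with an invariant about the structure of the antichain of current chain-tops. Your proposed inductive reduction ``isolate one chain and reduce the rest to width $k-1$'' is not how the argument goes and is unlikely to work as stated, because opening a new chain does not cleanly peel off a width-$1$ piece from the remaining configuration. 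The actual invariant is closer to what you describe in your last paragraph---a labeling of chains by elements of a fixed $\binom{k+1}{2}$-set---than to the inductive scheme you outline earlier; if you pursue this, that is the direction to develop.
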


Theorem \ref{thm:filament} \ref{item:non-overlapping-filament-upper}--\ref{item:non-overlapping-filament-lower} follows from Theorem \ref{thm:coco-game}, Lemma \ref{lem:coco-graph}, and Lemmas \ref{lem:game-upper} and \ref{lem:game-lower} (respectively).
Theorem \ref{thm:filament}~\ref{item:filament-upper} follows from Theorem \ref{thm:filament}~\ref{item:non-overlapping-filament-upper} and Lemma \ref{lem:reduction-to-non-overlapping}.

\section{Reduction to clean overlap graphs}
\label{sec:reduction-to-clean}

Recall that an overlap graph is \emph{clean} if it has an overlap model such that no two overlapping sets both contain a third set.
The goal of this section is to establish the following reduction of the general problem of coloring overlap graphs to the problem for clean overlap graphs.

\begin{theorem}
\label{thm:reduction-to-clean}
Let\/ $G$ be an overlap graph.
If every clean induced subgraph\/ $H$ of\/ $G$ with\/ $\omega(H)\leq j$ satisfies\/ $\chi(H)\leq\alpha_j$ for\/ $2\leq j\leq\omega(G)$, then\/ $\chi(G)\leq 2^{\omega(G)-1}\alpha_2\cdots\alpha_{\omega(G)}$.
\end{theorem}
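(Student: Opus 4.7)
The plan is to proceed by induction on $k \coloneq \omega(G)$, the base case $k \leq 1$ being immediate since $G$ has no edges and the empty product equals $1$.

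For the inductive step $k \geq 2$, the core idea is to prove a bipartition lemma: the vertex set of any overlap graph $G$ admits a partition $V(G) = V_1 \cup V_2$ such that each induced subgraph $G[V_i]$ is a clean overlap graph (still with $\omega(G[V_i]) \leq k$). Granting this lemma, the hypothesis of the theorem yields $\chi(G[V_i]) \leq \alpha_k$ for $i = 1, 2$; assembling these via disjoint palettes on $V_1$ and $V_2$ gives a proper $2\alpha_k$-coloring of $G$, and since $\prod_{j=2}^{k} 2\alpha_j \geq 2\alpha_k$ (every $\alpha_j \geq 1$), the claimed bound follows.

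To prove the bipartition lemma, I would invoke the $k$-clique breadth-first search that the paper introduces in a later section. Fix an overlap representation of $G$ and run this search starting from a maximum clique, organizing $V(G)$ into layers $L_0, L_1, L_2, \ldots$. Let $V_1$ be the union of even-indexed layers and $V_2$ the union of odd-indexed layers. The desired property is that within each parity class no cleanness-violating triple $(A, B, C)$---with $A, B$ overlapping and both containing $C$ in the representation---survives, because any such triple must, by the structure of the search, span two consecutive layers and hence cross the parity boundary.

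The main obstacle is establishing this separation property of the $k$-clique BFS. It generalizes the Gy\'arf\'as-style reduction that handles the case $\omega = 2$ via ordinary BFS, but now the search must track entire $k$-cliques rather than single vertices, so that the interaction between containment and overlap in cliques of size up to $k$ respects the layer structure. Should a direct bipartition prove out of reach, a fallback plan is to peel off a single clean subgraph of clique number $\leq k$ (colored using $\alpha_k$ colors) while reducing the residual clique number by one (colored inductively using $\prod_{j=2}^{k-1} 2\alpha_j$ colors); the resulting bound $\alpha_k + \prod_{j=2}^{k-1} 2\alpha_j$ is itself at most $\prod_{j=2}^{k} 2\alpha_j$ whenever $\alpha_k \geq 1$, so this telescoping also recovers the claimed product formula.
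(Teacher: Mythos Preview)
Your main approach has a genuine gap. The bipartition lemma you propose---that the union of even (respectively odd) layers of the $k$-clique BFS induces a \emph{clean} subgraph---fails for $k>2$, and the paper says so explicitly just before stating the theorem: ``such a simple partition is insufficient for graphs with clique number greater than~$2$.'' What Lemma~\ref{lem:bfs2} actually yields is that each \emph{single} layer $L_d$ is clean. Nothing prevents $x,y\in L_d$ and $z\in L_{d+2}$ from forming a cleanness-violating triple (with $\mu(x),\mu(y)$ overlapping and both containing $\mu(z)$): the proof of Lemma~\ref{lem:bfs2} crucially uses that $x,y,z$ lie in the \emph{same} layer, so that a $k$-clique witnessing $z\in L_d$ can be augmented by $\{x,y\}$ to a $(k{+}1)$-clique. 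Note too that if your lemma held you would get $\chi(G)\leq 2\alpha_k$ outright, strictly better than the product bound and with no use of the inductive hypothesis---a strong signal that the lemma is too good to be true.

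The paper combines the two BFS lemmas in a different order. First, since each $L_d$ is clean (Lemma~\ref{lem:bfs2}), colour every $G[L_d]$ with the same palette of $\alpha_k$ colours, obtaining global colour classes $C_1,\ldots,C_{\alpha_k}$; each $C_i\cap L_d$ is then independent. \emph{Only now} split each $C_i$ by layer parity. A $k$-clique in $G[C_i\cap L_{\mathrm{odd}}]$ would, by Lemma~\ref{lem:bfs1}, contain an edge inside one layer (impossible, since $C_i\cap L_d$ is independent) or between two consecutive layers (impossible by parity). Hence $\omega(G[C_i\cap L_{\mathrm{odd}}])\leq k-1$, and likewise for the even part. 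Induction on $k$ now gives $\chi(G[C_i\cap L_{\mathrm{odd/even}}])\leq\prod_{j=2}^{k-1}2\alpha_j$, and the $2\alpha_k$ pieces yield $\prod_{j=2}^{k}2\alpha_j$. The parity split is used to reduce $\omega$, not to achieve cleanness; cleanness is exploited layer by layer, \emph{before} the split.

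Your fallback plan is also underspecified: it is not clear how to peel off a single clean induced subgraph that meets every $k$-clique. A single layer $L_d$ is clean but need not hit all $k$-cliques, while a union of enough layers to hit them all need not be clean---this is exactly the tension the paper's argument is designed to resolve.
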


It is proved in \cite{KPW15} that every triangle-free overlap graph can be partitioned into two clean graphs: the union of odd levels and the union of even levels in the breadth-first search forest.
This proves Theorem \ref{thm:reduction-to-clean} for graphs with clique number $2$.
However, such a simple partition is insufficient for graphs with clique number greater than $2$.
We will need the following generalization of breadth-first search, which we call \emph{$k$-clique breadth-first search}.

\begin{center}
\begin{algorithm}[H]
\BlankLine
\textbf{$k$-clique breadth-first search}\par
\SetKwInOut{Input}{input}
\SetKwInOut{Output}{output}
\Input{a graph $G$ with vertices ordered as $v_1,\ldots,v_n$}
\Output{a partition of $\{v_1,\ldots,v_n\}$ into sets $L_d$ with $d\geq 0$}
\BlankLine
$V\coloneq\{v_1,\ldots,v_n\}$;\quad $d\coloneq 0$\;
\While{$V\neq\emptyset$}{
  \uIf{\upshape there is a $k$-clique $K$ with $|K\cap V|=1$}{
    $L_d\coloneq\{v_j\in V\colon$there is a $k$-clique $K$ with $K\cap V=\{v_j\}\}$\;
  }
  \Else{
    pick $v_i\in V$ with minimum index $i$\;
    $L_d\coloneq\{v_i\}$\;
  }
  $V\coloneq V\setminus L_d$;\quad $d\coloneq d+1$\;
}
\end{algorithm}
\end{center}

\allowbreak
See Figure \ref{fig:bfs} for an illustration of the algorithm.
It is clear that it terminates after time polynomial in $n$ (for fixed $k$).
The $2$-clique breadth-first search is just the ordinary breadth-first search: every connected component of $G$ is the union of some consecutive sets $L_d,\ldots,L_{d+t}$, of which $L_{d+i}$ is the set of vertices at distance $i$ from the vertex with minimum index in that connected component.
The following two properties of the $k$-clique breadth-first search generalize those of the ordinary breadth-first search.

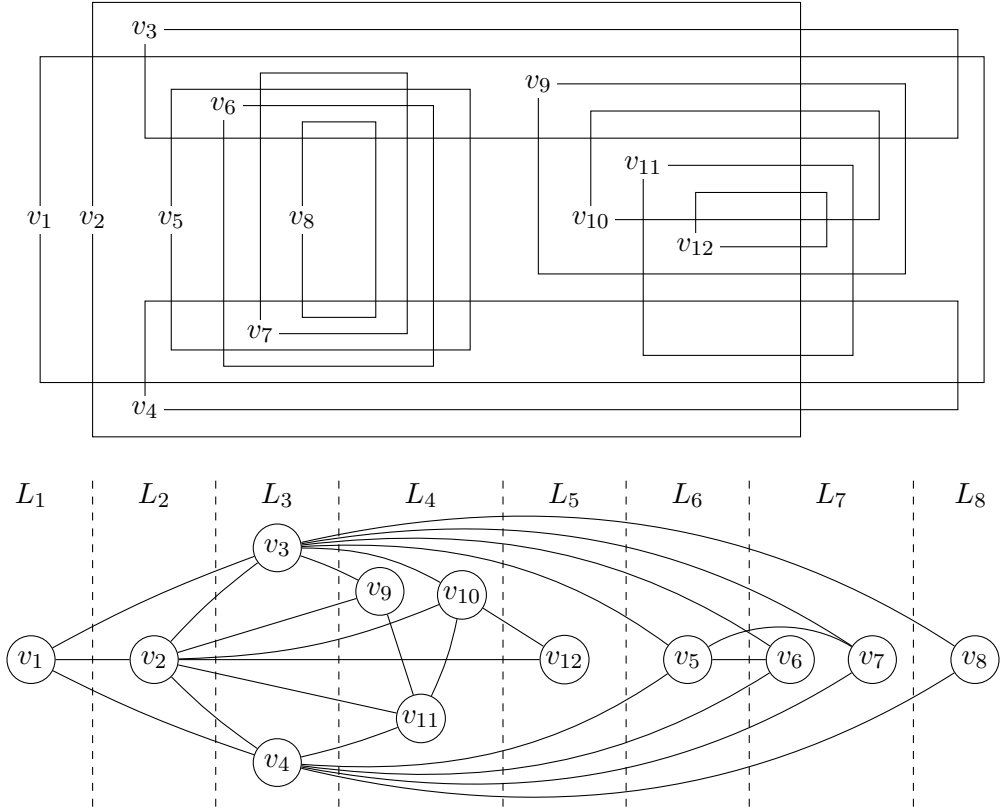
\begin{figure}[t]
\centering
\begin{tikzpicture}[xscale=1.38,yscale=0.72]
  \tikzstyle{every node}=[inner sep=2pt,fill=white]
  \draw (0,1) rectangle (9,7);
  \node at (0,4) {$v_1$};
  \draw (0.5,0) rectangle (7.25,8);
  \node at (0.5,4) {$v_2$};
  \draw (1,5.5) rectangle (8.75,7.5);
  \node at (1,7.5) {$v_3$};
  \draw (1,0.5) rectangle (8.75,2.5);
  \node at (1,0.5) {$v_4$};
  \draw (1.25,1.6) rectangle (4.1,6.4);
  \node at (1.25,4) {$v_5$};
  \draw (1.75,1.3) rectangle (3.75,6.1);
  \node at (1.75,6.1) {$v_6$};
  \draw (2.1,1.9) rectangle (3.5,6.7);
  \node at (2.1,1.9) {$v_7$};
  \draw (2.5,2.2) rectangle (3.2,5.8);
  \node at (2.5,4) {$v_8$};
  \draw (4.75,3) rectangle (8.25,6.5);
  \node at (4.75,6.5) {$v_9$};
  \draw (5.25,4) rectangle (8,6.0);
  \node at (5.25,4.0) {$v_{10}$};
  \draw (5.75,1.5) rectangle (7.75,5.0);
  \node at (5.75,5) {$v_{11}$};
  \draw (6.25,3.5) rectangle (7.5,4.5);
  \node at (6.25,3.5) {$v_{12}$};
\end{tikzpicture}\\[3ex]
\begin{tikzpicture}[xscale=0.54,yscale=0.39]
  \tikzstyle{every node}=[circle,minimum size=18pt,inner sep=1pt,draw];
  \node (v1) at (0.5,0) {$v_1$};
  \draw[dashed] (2,-5)--(2,6.1);
  \node (v2) at (3.5,0) {$v_2$};
  \draw[dashed] (5,-5)--(5,6.1);
  \node (v3) at (6.5,3.8) {$v_3$};
  \node (v4) at (6.5,-3.5) {$v_4$};
  \draw[dashed] (8,-5)--(8,6.1);
  \node (v9) at (9,2.32) {$v_9$};
  \node (v10) at (11,2.2) {$v_{10}$};
  \node (v11) at (10,-2) {$v_{11}$};
  \draw[dashed] (12,-5)--(12,6.1);
  \node (v12) at (13.5,0) {$v_{12}$};
  \draw[dashed] (15,-5)--(15,6.1);
  \node (v5) at (16.5,0) {$v_5$};
  \draw[dashed] (18,-5)--(18,6.1);
  \node (v6) at (19,0) {$v_6$};
  \node (v7) at (21,0) {$v_7$};
  \draw[dashed] (22,-5)--(22,6.1);
  \node (v8) at (23.5,0) {$v_8$};
  \path (v1) edge (v2);
  \path (v2) edge[bend left=5] (v3);
  \path (v2) edge[bend right=5] (v4);
  \path (v1) edge[bend left=5] (v3);
  \path (v1) edge[bend right=5] (v4);
  \path (v2) edge (v9);
  \path (v2) edge[bend right=14] (v10);
  \path (v2) edge (v11);
  \path (v2) edge (v12);
  \path (v3) edge[bend left=5] (v9);
  \path (v3) edge[bend left=19] (v10);
  \path (v9) edge (v11);
  \path (v10) edge[bend left=5] (v11);
  \path (v4) edge[bend right=5] (v11);
  \path (v10) edge (v12);
  \path (v5) edge (v6);
  \path (v5) edge[bend left=41] (v7);
  \path (v3) edge[bend left=25] (v5);
  \path (v3) edge[bend left=26] (v6);
  \path (v3) edge[bend left=28] (v7);
  \path (v3) edge[bend left=30] (v8);
  \path (v4) edge[bend right=25] (v5);
  \path (v4) edge[bend right=26] (v6);
  \path (v4) edge[bend right=28] (v7);
  \path (v4) edge[bend right=30] (v8);
  \tikzstyle{every node}=[inner sep=2pt]
  \node at (0.5,5.6) {$L_1$};
  \node at (3.5,5.6) {$L_2$};
  \node at (6.5,5.6) {$L_3$};
  \node at (10,5.6) {$L_4$};
  \node at (13.5,5.6) {$L_5$};
  \node at (16.5,5.6) {$L_6$};
  \node at (20,5.6) {$L_7$};
  \node at (23.4,5.6) {$L_8$};
\end{tikzpicture}
\caption[]{An illustration of the $3$-clique breadth-first search applied to the rectangle overlap graph above.
The sets $L_1$, $L_2$, $L_6$ and $L_8$ are determined by the \textbf{else} statement of the main loop.}
\label{fig:bfs}
\end{figure}

\begin{lemma}
\label{lem:bfs1}
Let\/ $L_d$ be the sets computed by the\/ $k$-clique breadth-first search on a graph\/ $G$.
Then every\/ $k$-clique in\/ $G$ has two of its vertices in one set\/ $L_d$ or in two consecutive sets\/ $L_d$ and\/ $L_{d+1}$.
\end{lemma}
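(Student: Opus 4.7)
The plan is to proceed by contradiction, leaning on the greedy nature of the ``if'' branch of the $k$-clique breadth-first search: as soon as some $k$-clique has exactly one surviving vertex in $V$, that branch is triggered and places \emph{every} such unique-survivor into the current layer. Intuitively, if a $k$-clique $K$ violated the lemma, then the ``top'' vertex of $K$ should already have been scooped into the layer just below, contradicting its placement.

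Concretely, suppose for contradiction that the conclusion fails for some $k$-clique $K$. Then the vertices of $K$ sit in pairwise distinct layers $L_{d_1},\ldots,L_{d_k}$ with $d_1<d_2<\cdots<d_k$ and $d_{i+1}\geq d_i+2$ for each $i$. Let $v$ be the unique vertex of $K$ in the highest layer $L_{d_k}$, and inspect the state of the algorithm at the start of iteration $d=d_k-1$. At that moment the set $V$ equals $\{v_j:v_j\notin L_0\cup\cdots\cup L_{d_k-2}\}$. Since the other $k-1$ vertices of $K$ lie in layers of index at most $d_{k-1}\leq d_k-2$, each of them has already been removed from $V$; the vertex $v$ itself, belonging to $L_{d_k}$, is still in $V$. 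Therefore $K\cap V=\{v\}$, so $K$ is a $k$-clique with $|K\cap V|=1$.

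It follows that iteration $d_k-1$ executes the ``if'' branch, and then $v$, being the unique $V$-survivor of the $k$-clique $K$, satisfies the membership rule defining $L_{d_k-1}$. Hence $v\in L_{d_k-1}$, contradicting $v\in L_{d_k}$. As a small sanity check, the iteration $d_k-1$ really exists: the constraints $d_1\geq 0$ together with $d_{i+1}\geq d_i+2$ force $d_k\geq 2(k-1)\geq 2$ for $k\geq 2$ (the case $k=1$ is vacuous). I do not expect any genuine obstacle here: the whole proof is a single observation making precise the greediness of the ``if'' branch, and the ``two consecutive sets'' window in the conclusion is exactly the amount of slack that the algorithm refuses to leave behind.
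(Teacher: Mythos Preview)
Your proof is correct and is essentially the same argument as the paper's, just recast as a contradiction: the paper directly tracks the iteration where $|K\cap V|$ drops from $\geq 2$ to $\leq 1$ and observes that the last surviving vertex is placed in the next layer, while you assume the conclusion fails and show that the top vertex of $K$ would have been captured one iteration earlier. The core observation---that once $|K\cap V|=1$ the ``if'' branch immediately absorbs the surviving vertex---is identical in both.
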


\begin{proof}
Let $K$ be a $k$-clique in $G$.
Let $L_d$ be the set such that $|K\cap V|\geq 2$ before and $|K\cap V|\leq 1$ after the algorithm performs the assignment $V\coloneq V\setminus L_d$.
It follows that $K\cap L_d\neq\emptyset$.
If $|K\cap L_d|\geq 2$, then $L_d$ satisfies the conclusion of the lemma.
If $|K\cap L_d|=1$, then $|K\cap V|=1$ after the assignment $V\coloneq V\setminus L_d$, so the vertex remaining in $K\cap V$ will be taken to $L_{d+1}$ in the next iteration of the algorithm, which yields $K\cap L_{d+1}\neq\emptyset$.
\end{proof}

\begin{lemma}
\label{lem:bfs2}
Let\/ $G$ be an overlap graph with overlap model\/ $\mu$, with vertices\/ $v_1,\ldots,v_n$ ordered so that $\mu(v_i)\not\subset\mu(v_j)$ for\/ $i<j$, and with\/ $\omega(G)\leq k$.
Then every set\/ $L_d$ computed by the\/ $k$-clique breadth-first search on\/ $G$ induces a clean overlap subgraph of\/ $G$.
\end{lemma}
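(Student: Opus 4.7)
The plan is to argue by contradiction. Suppose $a, b, c \in L_d$ witness that $G[L_d]$ is not clean, so $\mu(a)$ overlaps $\mu(b)$ and $\mu(c)\subsetneq\mu(a)$, $\mu(c)\subsetneq\mu(b)$. Since $|L_d|\geq 3$, iteration $d$ must fire the \textbf{if} branch of the algorithm, so there is a $k$-clique $K_c=\{c\}\cup W_c$ in $G$ with $W_c$ contained in $L_0\cup\cdots\cup L_{d-1}$ (writing $V_d$ for the value of $V$ at the start of iteration $d$, this means $W_c\cap V_d=\emptyset$). My strategy is to ``swap'' $c$ for the pair $\{a,b\}$ inside $K_c$ to obtain a $(k+1)$-clique, contradicting $\omega(G)\leq k$. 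The a priori obstruction is that some $w\in W_c$ could sit inside $\mu(a)$ or $\mu(b)$; ruling this out is the main content of the argument.

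Set $N_u=\{v\in V(G):\mu(v)\subsetneq\mu(u)\}$. By the ordering hypothesis, every $v\in N_u$ has strictly larger index than $u$. The technical heart of the proof is the following claim, which I would prove by induction on $d_0$: \emph{if $u\in V_{d_0}$ and $u\notin L_{d_0}$, then $L_{d_0}\cap N_u=\emptyset$}. The base case $d_0=0$ is immediate since $L_0=\{v_1\}$ (Case 1 cannot fire before any vertex is removed), and $v_1$ does not lie in $N_u$ because $v_1$ has the minimum index. For the inductive step, assume $w\in L_{d_0}\cap N_u$ with $u\in V_{d_0}\setminus L_{d_0}$. The \textbf{else} branch is impossible because $u$ has smaller index than $w$, violating the minimality of $w$. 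So $w$ is added via the \textbf{if} branch with a witnessing $k$-clique $\{w\}\cup W_w$, where $W_w\subseteq L_0\cup\cdots\cup L_{d_0-1}$. For every $w'\in W_w$, the fact that $\mu(w')$ overlaps $\mu(w)\subsetneq\mu(u)$ forces either $w'\in N_u$ or $\mu(w')$ to overlap $\mu(u)$ (containment $\mu(u)\subseteq\mu(w')$ would contradict the $w$--$w'$ overlap, and disjointness is excluded because $\mu(w)\cap\mu(w')\neq\emptyset$ lies inside $\mu(u)$). In the first case, $w'\in L_{d'}\cap N_u$ with $d'<d_0$, and since $u\in V_{d_0}\subseteq V_{d'}$ with $u\notin L_{d'}$, the induction hypothesis is violated. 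So every $w'\in W_w$ overlaps $\mu(u)$, meaning $\{u\}\cup W_w$ is itself a $k$-clique with $W_w\cap V_{d_0}=\emptyset$, forcing $u\in L_{d_0}$, a contradiction.

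Applying the claim to $u=a$ and to $u=b$ (both lie in $V_{d'}\setminus L_{d'}$ for every $d'<d$ since $a,b\in L_d$) yields $L_{d'}\cap(N_a\cup N_b)=\emptyset$ for every $d'<d$. Consequently no $w\in W_c$ lies in $N_a\cup N_b$. Combined with $\mu(w)$ overlaps $\mu(c)\subsetneq\mu(a)$ (which excludes $\mu(a)\subseteq\mu(w)$) and the symmetric argument for $b$, this shows that every $w\in W_c$ overlaps both $\mu(a)$ and $\mu(b)$. Therefore $\{a,b\}\cup W_c$ is a clique of size $k+1$ in $G$: $a$ and $b$ are adjacent, each $w\in W_c$ is adjacent to both $a$ and $b$, and $W_c$ itself is a $(k-1)$-clique. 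This contradicts $\omega(G)\leq k$ and completes the proof.

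The main obstacle is pinning down the right inductive statement. The seemingly natural version ``$L_{d_0}\cap N_u=\emptyset$ whenever $u$ is still alive'' is too strong, as it fails in exactly those iterations where $u$ enters $L_{d_0}$ alongside some vertex of $N_u$; restricting the hypothesis to $u\notin L_{d_0}$ makes the induction close and encodes the fact that ``no vertex of $N_u$ is processed strictly before $u$''. A secondary point is that the concluding contradiction genuinely requires both $a$ and $b$: using only one of them would produce only a $k$-clique $\{a\}\cup W_c$, which is consistent with $\omega(G)\leq k$; the cleanness violation supplies precisely the second enlargement needed to reach size $k+1$.
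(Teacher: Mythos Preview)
Your proof is correct and follows essentially the same approach as the paper. Your inductive claim ``if $u\in V_{d_0}\setminus L_{d_0}$ then $L_{d_0}\cap N_u=\emptyset$'' is the contrapositive of the paper's property~($*$) (``if $v_i\in L_d$, $v_r\in L_{d'}$, and $\mu(v_r)\subset\mu(v_i)$, then $d\leq d'$''), and your induction on $d_0$ is the same argument as the paper's choice of a minimal counterexample $d'$; the swap $\{c\}\cup W_c\rightsquigarrow\{a,b\}\cup W_c$ producing a $(k{+}1)$-clique is identical to the paper's conclusion.
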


\begin{proof}
First, we prove the following:
\begin{equation*}
\label{eq:L}
\text{if $v_i\in L_d$, $v_r\in L_{d'}$, and $\mu(v_r)\subset\mu(v_i)$, then $d\leq d'$.}\tag{$*$}
\end{equation*}
Let $v_i\in L_d$, let $d'$ be the minimum index such that $L_{d'}$ contains a vertex $v_r$ with $\mu(v_r)\subset\mu(v_i)$, and suppose to the contrary that $d'<d$.
Consider the set $V$ at the point when the algorithm computes $L_{d'}$.
It follows that $v_i,v_r\in V$.
If there was no $k$-clique $K$ with $|K\cap V|=1$, then the algorithm would not set $L_{d'}$ to $\{v_r\}$, because $v_i$ is a candidate with smaller index.
Hence, there is a $k$-clique $K$ with $|K\cap V|=1$, which implies that there is a $k$-clique $K$ with $K\cap V=\{v_r\}$.
For every $v_s\in K\setminus\{v_r\}$, we have $s<r$, and therefore $\mu(v_s)$ and $\mu(v_i)$ overlap: $\mu(v_s)\subset\mu(v_i)$ would contradict the choice of $d'$, and $\mu(v_s)\supset\mu(v_i)\supset\mu(v_r)$ would contradict the fact that $v_r,v_s\in K$.
Hence, $K'=(K\setminus\{v_r\})\cup\{v_i\}$ is a $k$-clique with $K'\cap V=\{v_i\}$, which yields $v_i\in L_{d'}$.
This contradiction completes the proof of \eqref{eq:L}.

Now, suppose that $G[L_d]$ is not clean.
This means that there are $v_i,v_j,v_r\in L_d$ such that $\mu(v_i)$ overlaps $\mu(v_j)$ and $\mu(v_r)\subset\mu(v_i)\cap\mu(v_j)$.
Consider the set $V$ at the point when the algorithm computes $L_d$.
It follows that there is a $k$-clique $K$ with $K\cap V=\{v_r\}$.
By \eqref{eq:L}, for every $v_s\in K\setminus\{v_r\}$, $\mu(v_s)$ is not contained in either of $\mu(v_i)$ and $\mu(v_j)$ and thus overlaps both of them.
Hence, $(K\setminus\{v_r\})\cup\{v_i,v_j\}$ is a $(k+1)$-clique in $G$, which contradicts the assumption that $\omega(G)\leq k$.
\end{proof}

\begin{proof}[Proof of Theorem \ref{thm:reduction-to-clean}]
\def\odd{\mathrm{odd}}
\def\even{\mathrm{even}}
Let $\mu$ be an overlap model of $G$, and let $k=\omega(G)$.
The proof goes by induction on $k$.
The theorem is trivial for $k=1$, so assume that $k\geq 2$ and the theorem holds for graphs with $\omega\leq k-1$.
Order the vertices of $G$ as $v_1,\ldots,v_n$ so that $\mu(v_i)\not\subset\mu(v_j)$ for $i<j$, and run the $k$-clique breadth-first search to obtain a partition of $\{v_1,\ldots,v_n\}$ into sets $L_d$.
By Lemma \ref{lem:bfs2}, every $L_d$ induces a clean subgraph of $G$, so $\chi(G[L_d])\leq\alpha_k$.
Color each $G[L_d]$ properly with the same set of $\alpha_k$ colors, thus obtaining a partition of the vertices of $G$ into color classes $C_1,\ldots,C_{\alpha_k}$.
Each set of the form $C_i\cap L_d$ is an independent set in $G$.
Let $L_{\odd}$ be the union of all sets $L_d$ with $d$ odd and $L_{\even}$ be the union of all sets $L_d$ with $d$ even.
If there is a $k$-clique in $G[C_i\cap L_{\odd}]$, then, by Lemma \ref{lem:bfs1}, it must contain an edge connecting vertices in one set $L_d$ or two consecutive sets $L_d$ and $L_{d+1}$.
The former is impossible, as $C_i\cap L_d$ is independent, while the latter contradicts the definition of $L_{\odd}$.
Hence, $\omega(G[C_i\cap L_{\odd}])\leq k-1$.
Similarly, $\omega(G[C_i\cap L_{\even}])\leq k-1$.
It follows from the induction hypothesis that $\chi(G[C_i\cap L_{\odd}])\leq 2^{k-2}\alpha_2\cdots\alpha_{k-1}$ and $\chi(G[C_i\cap L_{\even}])\leq 2^{k-2}\alpha_2\cdots\alpha_{k-1}$.
This implies $\chi(G)\leq 2^{k-1}\alpha_2\cdots\alpha_k$, as the $2\alpha_k$ sets $C_i\cap L_{\odd}$ and $C_i\cap L_{\even}$ for $1\leq i\leq\alpha_k$ form a partition of the entire set of vertices of $G$.
\end{proof}

The inductive nature of Theorem \ref{thm:reduction-to-clean} is the main obstacle to generalizing the upper bounds of Theorem \ref{thm:subtree}~\ref{item:clean-subtree-upper} and Theorem \ref{thm:rectangle}~\ref{item:clean-rectangle-upper} from clean to non-clean overlap graphs (keeping the same asymptotic bounds).
Furthermore, if we replace $\chi$ by $\chi_3$ in the proof of Theorem \ref{thm:reduction-to-clean}, then it does no longer work.
This is why we are unable to provide the analogue of Theorem \ref{thm:rectangle-K3} for non-clean rectangle overlap graphs.
We wonder whether a reduction similar to Theorem \ref{thm:reduction-to-clean} but avoiding induction is possible.

\section{Rectangle and subtree overlap graphs}
\label{sec:rectangle-subtree}

In this section, we define two on-line games and relate their game graphs to rectangle and subtree overlap graphs.
These relations will be used for the proofs of Theorems \ref{thm:subtree}--\ref{thm:rectangle-K3} in Sections \ref{sec:coloring} and \ref{sec:construction}.
In view of Theorem \ref{thm:reduction-to-clean}, we can restrict our consideration to clean rectangle and subtree overlap graphs.

First, we introduce the on-line game corresponding to clean rectangle overlap graphs, we define clean interval overlap game graphs, and we describe their relation to clean rectangle overlap graphs that has been established in \cite{KPW15,PKK+13}.
Recall that $\calI$ denotes the set of closed intervals in $\setR$.
Let $\leftside(x)$ and $\rightside(x)$ denote the left and the right endpoints of an interval $x\in\calI$, respectively.
Consider an on-line game $\gamIOV(k)$, in which Presenter builds a clean interval overlap graph $G$ and its representation $\mu\colon V(G)\to\calI$ so that
\begin{enumerate}
\item $\mu$ is an overlap model of $G$, that is, $xy\in E(G)$ if and only if $\mu(x)$ overlaps $\mu(y)$,
\item if $x,y\in V(G)$ and $x$ is presented before $y$, then $\leftside(\mu(x))<\leftside(\mu(y))$,
\item $\mu$ is clean, that is, there are no $x,y,z\in V(G)$ such that $\mu(x)$ and $\mu(y)$ overlap and $\mu(z)\subset\mu(x)\cap\mu(y)$,
\item $\omega(G)\leq k$,
\end{enumerate}
and Algorithm properly colors $G$ on-line.
We will assume, without loss of generality, that in any representation $\mu$ presented in the game, the intervals are in general position, that is, no two of their endpoints coincide.
As a consequence of the definition of a game graph, a graph $G$ is a game graph of $\gamIOV(k)$ if there exist a rooted forest $F$ on $V(G)$ and a mapping $\mu\colon V(G)\to\calI$ such that the following conditions, corresponding to the four above, are satisfied (where $\prec$ denotes the ancestor-descendant relation of $F$):
\begin{enumerate}
\item\label{item:iovgg-def-1} $xy\in E(G)$ if and only if $x\prec y$ or $y\prec x$ and $\mu(x)$ overlaps $\mu(y)$,
\item if $x,y\in V(G)$ and $x\prec y$, then $\leftside(\mu(x))<\leftside(\mu(y))$,
\item\label{item:iovgg-def-clean} there are no $x,y,z\in V(G)$ with $x\prec y\prec z$ such that $\mu(x)$ and $\mu(y)$ overlap and $\mu(z)\subset\mu(x)\cap\mu(y)$,
\item $\omega(G)\leq k$.
\end{enumerate}
A graph is a \emph{clean interval overlap game graph} if it is a game graph of $\gamIOV(k)$ for some $k$.
The conditions \ref{item:iovgg-def-1}--\ref{item:iovgg-def-clean} of the characterization above were used in \cite{KPW15} as the definition of clean interval overlap game graphs (called overlap game graphs therein).

\begin{lemma}[Krawczyk, Pawlik, Walczak \cite{KPW15}]
\label{lem:rectangle-to-game}
Every clean interval overlap game graph is a clean rectangle overlap graph.
The vertices of every clean rectangle overlap graph can be partitioned into\/ $O_\omega(1)$ classes so that the subgraph induced on each class is a clean interval overlap game graph.
\end{lemma}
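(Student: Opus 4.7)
The plan is to prove the two parts separately. The first is a direct construction mimicking the proof of Proposition \ref{prop:rectangle}; the second is the main obstacle and I would follow the argument of \cite{KPW15}.

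For the forward direction, let $G$ be a game graph of $\gamIOV(k)$ with underlying forest $F$ and representation $\mu\colon V(G)\to\calI$, and run depth-first search on $F$ to obtain integers $x_u<y_u$ for each $u\in V(G)$ such that $[x_v,y_v]\subsetneq[x_u,y_u]$ whenever $v$ is a proper descendant of $u$ in $F$ and $[x_u,y_u]\cap[x_v,y_v]=\emptyset$ whenever $u,v$ are incomparable. Set $R_u=\mu(u)\times[x_u,y_u]$. For incomparable $u,v$ the rectangles are disjoint and $uv\notin E(G)$. For $u\prec v$, the left-endpoint condition in the definition of $\gamIOV(k)$ gives $\leftside(\mu(u))<\leftside(\mu(v))$, hence $\mu(u)\not\subset\mu(v)$ and thus $R_u\not\subset R_v$; a case split on whether $\mu(u)\cap\mu(v)$ is empty, equal to $\mu(v)$, or an interval overlap then yields that $R_u$ and $R_v$ overlap exactly when $\mu(u),\mu(v)$ overlap as intervals, i.e.\ exactly when $uv\in E(G)$. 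Cleanness of the model $u\mapsto R_u$ follows from condition \ref{item:iovgg-def-clean}: any witness $R_z\subset R_x\cap R_y$ with $R_x$ overlapping $R_y$ forces (WLOG) $x\prec y$, and the $y$-coordinate containment places $z$ as a descendant of $y$ in $F$, with $z=y$ ruled out because $R_y\not\subset R_x$; then $x\prec y\prec z$ with $\mu(x),\mu(y)$ overlapping and $\mu(z)\subset\mu(x)\cap\mu(y)$ directly contradicts \ref{item:iovgg-def-clean}.

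The second statement is the main obstacle and I would invoke the partitioning argument of \cite{KPW15}. Given a clean rectangle overlap model of $G$ with $\omega(G)\leq\omega$, the task is to partition $V(G)$ into $O_\omega(1)$ classes, each of which, once ordered by the $x$-coordinate of the left endpoint and equipped with a suitable forest, satisfies the four defining conditions of an interval overlap game graph. The guiding ideas are that the left-to-right order of left endpoints captures the order condition; cleanness severely restricts how two rectangles whose $x$-projections are nested can still overlap in the plane; and the bound on $\omega$ caps the depth of chains of such rectangles. The hardest step is precisely to control these ``$y$-driven'' overlaps between rectangles with nested $x$-projections, since they are the ones that must be routed along root-to-leaf paths of the chosen forest without spoiling condition \ref{item:iovgg-def-clean}; it is the combination of cleanness and bounded $\omega$ that makes a partition with only $O_\omega(1)$ classes possible.
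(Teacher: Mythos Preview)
The paper does not prove this lemma; it is stated as a result of \cite{KPW15} and used as a black box. Your argument for the first statement is correct and is the natural adaptation of the construction in the proof of Proposition~\ref{prop:rectangle}, with the additional checks that the resulting rectangle model is an \emph{overlap} (not merely intersection) model and that it is clean; these checks go through exactly as you describe. For the second statement, your plan to invoke the partitioning argument of \cite{KPW15} is precisely what the paper itself does, so there is nothing further to compare.
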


As it is explained in \cite{KPW15}, the correspondence analogous to Lemma \ref{lem:rectangle-to-game} holds between rectangle overlap graphs and \emph{interval overlap game graphs}, which are defined like clean interval overlap game graphs above but omitting the cleanness condition \ref{item:iovgg-def-clean}.

It is proved in \cite{KPW15} that triangle-free clean interval overlap game graphs (and hence, by Lemma \ref{lem:rectangle-to-game}, triangle-free clean rectangle overlap graphs) satisfy $\chi=O(\log\log n)$.
That proof essentially comes down to an on-line algorithm using $O(\log r)$ colors in $r$ rounds of the game $\gamIOV(2)$, a trick with heavy-light decomposition that we explain later, and the application of Lemma \ref{lem:game-upper}.
We will generalize this to game graphs of $\gamIOV(k)$ and thus to clean rectangle overlap graphs with clique number bounded by any constant.
On the other hand, it is proved in \cite{PKK+13} that Presenter has a strategy to force Algorithm to use $c$ colors in $2^{c-1}$ rounds of the game $\gamIOV(2)$.
This strategy (again a realization of the strategy for forests mentioned in the introduction) has $2^{2^{\smash{O(c)}}}$ presentation scenarios.
Hence, by Lemma \ref{lem:game-lower}, there are triangle-free clean interval overlap game graphs (and thus triangle-free clean rectangle overlap graphs) with chromatic number $\Theta(\log\log n)$.

We also define an on-line game $\gamIOV_3(k)$, a variant of $\gamIOV(k)$ in which Algorithm is required to produce a triangle-free coloring instead of a proper coloring.
The rules for Presenter's moves are the same in $\gamIOV(k)$ and $\gamIOV_3(k)$, and therefore the classes of game graphs of $\gamIOV(k)$ and $\gamIOV_3(k)$ are also the same.

Now, we introduce the on-line game corresponding to clean subtree overlap graphs.
Let $G$ be a clean subtree overlap graph with a clean overlap model $x\mapsto S_x$ by subtrees of a tree $T$.
To avoid confusion with vertices of $G$, we call vertices of $T$ \emph{nodes}.
We make $T$ a rooted tree by choosing an arbitrary node $r$ as the root.
We can assume without loss of generality that $r$ belongs to none of the subtrees $S_x$, as a node with this property can always be added to $T$.
For every $x\in V(G)$, we define $r_x$ to be the unique node of $S_x$ that is closest to $r$ in $T$.
We call the nodes $r_x$ \emph{subtree roots}.
If several subtrees $S_{x_0},\ldots,S_{x_m}$ with $S_{x_i}\not\supset S_{x_j}$ for $i<j$ (without loss of generality) have common subtree root $p=r_{x_0}=\cdots=r_{x_m}$, then the following transformation preserves overlaps and proper inclusions between the subtrees, thus keeping $x\mapsto S_x$ a clean overlap model of $G$:
\begin{itemize}
\item replace the edge $pq$ by a path $pp_1\cdots p_mq$, where $q$ is the other end of the tree edge at $p$ going towards $r$ and $p_1,\ldots,p_m$ are new nodes,
\item add $p_1,\ldots,p_m$ to every subtree containing both $p$ and $q$,
\item for $1\leq i\leq m$, add $p_1,\ldots,p_i$ to $S_{x_i}$ and let $r_{x_i}=p_i$.
\end{itemize}
Applying this transformation repeatedly if necessary, we can assume without loss of generality that all the subtree roots are pairwise distinct.
We construct a rooted forest $F$ on $V(G)$ as follows.
A vertex $x\in V(G)$ is a root of $F$ if the path from $r$ to $r_x$ in $T$ contains no subtree roots other than $r_x$.
Otherwise, the parent of $x$ in $F$ is the vertex $y\in V(G)$ such that $r_y$ is the last subtree root before $r_x$ on the path from $r$ to $r_x$ in $T$.

Consider a path $P$ in $F$ from a root to a leaf.
The overlap graph of the subtrees $S_x$ with $x\in V(P)$ is an interval filament graph \cite{ES07}.
Its interval filament intersection model can be constructed as follows.
The roots of all the subtrees $S_x$ with $x\in V(P)$ lie on a common path $Q=q_1\cdots q_m$ in $T$.
For $1\leq i\leq m$, let $T_i$ denote the connected component of $T$ containing $q_i$ after removing all edges of $Q$.
We represent the nodes of $T$ by points in $\setR^2$, as follows.
Each node $q_i$ is represented by the point $(i,0)$.
Each node $t$ in $T_i$ other than $q_i$ is represented by a point $(x_t,1)$, where $x_t\in(i,i+1)$ and all the $x_t$ are distinct.
Now, we can represent each vertex $x\in V(P)$ such that the intersection of $S_x$ and $Q$ is the subpath $q_i\cdots q_j$ of $Q$ by an interval filament that starts in the interval $(i-1,i)$, ends in the interval $(j,j+1)$, and goes above the points representing the nodes in $S_x$ and no other points representing nodes.
Moreover, we can do this so that the interval filaments representing non-adjacent vertices (with nested or disjoint subtrees) do not intersect.
This yields an interval filament intersection model of $G[V(P)]$.

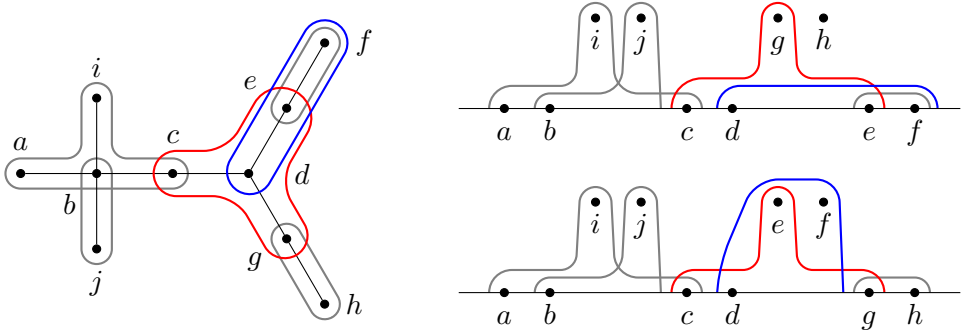
\begin{figure}[t]
\centering
\begin{minipage}{5cm}
\centering
\begin{tikzpicture}
  \tikzstyle{every node}=[circle,minimum size=3pt,inner sep=0pt,draw,fill]
  \node[label={[yshift=0.2cm]above:$a$}] (at) at (-3,0) {};
  \node[label={[xshift=-0.2cm,yshift=-0.25cm]below left:$b$}] (bt) at (-2,0) {};
  \node[label={[yshift=0.3cm]above:$c$}] (ct) at (-1,0) {};
  \node[label={[xshift=0.47cm]right:$d$}] (dt) at (0,0) {};
  \node[label={[shift=(150:0.4cm)]above left:$e$}] (et) at (60:1cm) {};
  \node[label={[xshift=0.25cm]right:$f$}] (ft) at (60:2cm) {};
  \node[label={[shift=(-150:0.3cm)]below left:$g$}] (gt) at (-60:1cm) {};
  \node[label={[xshift=0.15cm]right:$h$}] (ht) at (-60:2cm) {};
  \node[label={[yshift=0.2cm]above:$i$}] (it) at (-2,1) {};
  \node[label={[yshift=-0.2cm]below:$j$}] (jt) at (-2,-1) {};
  \draw[gray,thick,rounded corners=0.2cm] (-3.2,-0.2)--(-3.2,0.2)--(-2.2,0.2)--(-2.2,1.2)--(-1.8,1.2)--(-1.8,0.2)--(-0.8,0.2)--(-0.8,-0.2)--cycle;
  \draw[gray,thick,rounded corners=0.2cm] (-2.2,-1.2) rectangle (-1.8,0.2);
  \draw[gray,thick,rounded corners=0.2cm] {[shift=(60:0.8cm)](-30:0.2cm)--(150:0.2cm)}{[shift=(60:2.2cm)]--(150:0.2cm)--(-30:0.2cm)}--cycle;
  \draw[gray,thick,rounded corners=0.2cm] {[shift=(-60:0.8cm)](-150:0.2cm)--(30:0.2cm)}{[shift=(-60:2.2cm)]--(30:0.2cm)--(-150:0.2cm)}--cycle;
  \draw (at)--(bt)--(ct)--(dt)--(et)--(ft);
  \draw (dt)--(gt)--(ht);
  \draw (bt)--(it);
  \draw (bt)--(jt);
  \draw[red,thick,rounded corners=0.3cm] (-1.25,-0.3)--(-1.25,0.3)--(-0.2887,0.3){[shift=(60:1.25cm),rounded corners=0.4cm]--(150:0.4cm)--(-30:0.4cm)}--(0.4041,-0.1){[shift=(-60:1.25cm)]--(30:0.3cm)--(-150:0.3cm)}--(-0.1732,-0.3)--cycle;
  \draw[blue,thick,rounded corners=0.3cm] {[shift=(-120:0.27cm)](-30:0.3cm)--(150:0.3cm)}{[shift=(60:2.3cm)]--(150:0.3cm)--(-30:0.3cm)}--cycle;
\end{tikzpicture}
\end{minipage}\hskip 1cm
\begin{minipage}{6.6cm}
\centering
\begin{tikzpicture}[scale=0.6]
  \tikzstyle{every node}=[circle,minimum size=3pt,inner sep=0pt,draw,fill]
  \tikzstyle{every label}=[rectangle,draw=none,fill=none]
  \node[label=below:\strut $a$] (a1) at (0,0) {};
  \node[label=below:\strut $b$] (b1) at (1,0) {};
  \node[label=below:\strut $i$] (i1) at (2,2) {};
  \node[label=below:\strut $j$] (j1) at (3,2) {};
  \node[label=below:\strut $c$] (c1) at (4,0) {};
  \node[label=below:\strut $d$] (d1) at (5,0) {};
  \node[label=below:\strut $g$] (g1) at (6,2) {};
  \node[label=below:\strut $h$] (h1) at (7,2) {};
  \node[label=below:\strut $e$] (e1) at (8,0) {};
  \node[label=below:\strut $f$] (f1) at (9,0) {};
  \draw[gray,thick,rounded corners=0.2cm] (-0.333,0){[rounded corners=0.3cm]--(-0.333,0.5)}--(1.567,0.5)--(1.667,2.333)--(2.333,2.333){[rounded corners=0.3cm]--(2.433,0.333)}--(4.333,0.333)--(4.333,0);
  \draw[gray,thick,rounded corners=0.2cm] (0.667,0)--(0.667,0.333){[rounded corners=0.3cm]--(2.567,0.333)}--(2.667,2.333)--(3.333,2.333)--(3.433,0);
  \draw[gray,thick,rounded corners=0.2cm] (7.667,0)--(7.667,0.333)--(9.333,0.333)--(9.333,0);
  \draw[red,thick,rounded corners=0.2cm] (3.667,0){[rounded corners=0.4cm]--(3.667,0.667)}--(5.567,0.667)--(5.667,2.333)--(6.333,2.333)--(6.433,0.667){[rounded corners=0.4cm]--(8.333,0.667)}--(8.333,0);
  \draw[blue,thick,rounded corners=0.3cm] (4.667,0)--(4.667,0.5)--(9.5,0.5)--(9.5,0);
  \draw (-1,0)--(10,0);
\end{tikzpicture}\\[2ex]
\begin{tikzpicture}[scale=0.6]
  \tikzstyle{every node}=[circle,minimum size=3pt,inner sep=0pt,draw,fill];
  \tikzstyle{every label}=[rectangle,draw=none,fill=none];
  \node[label=below:\strut $a$] (a1) at (0,0) {};
  \node[label=below:\strut $b$] (b1) at (1,0) {};
  \node[label=below:\strut $i$] (i1) at (2,2) {};
  \node[label=below:\strut $j$] (j1) at (3,2) {};
  \node[label=below:\strut $c$] (c1) at (4,0) {};
  \node[label=below:\strut $d$] (d1) at (5,0) {};
  \node[label=below:\strut $e$] (g1) at (6,2) {};
  \node[label=below:\strut $f$] (h1) at (7,2) {};
  \node[label=below:\strut $g$] (e1) at (8,0) {};
  \node[label=below:\strut $h$] (f1) at (9,0) {};
  \draw[gray,thick,rounded corners=0.2cm] (-0.333,0){[rounded corners=0.3cm]--(-0.333,0.5)}--(1.567,0.5)--(1.667,2.333)--(2.333,2.333){[rounded corners=0.3cm]--(2.433,0.333)}--(4.333,0.333)--(4.333,0);
  \draw[gray,thick,rounded corners=0.2cm] (0.667,0)--(0.667,0.333){[rounded corners=0.3cm]--(2.567,0.333)}--(2.667,2.333)--(3.333,2.333)--(3.433,0);
  \draw[gray,thick,rounded corners=0.2cm] (7.667,0)--(7.667,0.333)--(9.333,0.333)--(9.333,0);
  \draw[red,thick,rounded corners=0.2cm] (3.667,0){[rounded corners=0.3cm]--(3.667,0.5)}--(5.567,0.5)--(5.667,2.333)--(6.333,2.333)--(6.433,0.5){[rounded corners=0.3cm]--(8.333,0.5)}--(8.333,0);
  \draw[blue,thick,rounded corners=0.3cm] (4.667,0)--(4.733,0.667)--(5.5,2.5)--(7.333,2.5)--(7.433,0);
  \draw (-1,0)--(10,0);
\end{tikzpicture}
\end{minipage}
\caption[]{A subtree overlap graph (left) and interval filament representations of its subgraphs induced on the subtrees intersecting $abcdef$ (top right) and $abcdgh$ (bottom right).
The domains of the interval filaments representing the subtrees $cdeg$ and $def$ overlap in the scenario $abcdef$ but are nested in the scenario $abcdgh$.}
\label{fig:subtree-game}
\end{figure}

In view of the above, a natural attempt is to define the on-line game corresponding to clean subtree overlap graphs just like the game $\gamIOV(k)$ but with representation by interval filaments instead of intervals.
However, this is not correct for the following reason.
We want to color the clean subtree overlap graph $G$ properly using the on-line approach of Lemma \ref{lem:game-upper}.
For each path $P$ in $F$ starting at a root, we will simulate an on-line algorithm on $G[V(P)]$ presenting the vertices in their order along $P$.
This way, we will present an interval filament graph.
The on-line approach will work correctly if the algorithm always assigns the same color to each vertex $x\in V(G)$, regardless of the choice of $P$.
This will be the case when the presentation scenarios up to the point when $u$ is presented are identical for all paths passing through $x$.
However, this cannot be guaranteed using the model of $G[V(P)]$ by interval filaments described above.
For example, for some two adjacent vertices $x,y\in V(G)$ lying on the common part of two paths $P_1$ and $P_2$, we may need to represent $x$ and $y$ by interval filaments whose domains are nested if we continue along $P_1$ but overlap if we continue along $P_2$.
See Figure \ref{fig:subtree-game} for such an example.
If the algorithm makes use of the representation, then the colorings it generates on $P_1$ and $P_2$ may be inconsistent.

To overcome the difficulty explained above, we provide a more abstract description of $G$, which we then use to define the on-line game.
For distinct vertices $x,y\in V(G)$, let $x\prec y$ denote that $x$ is an ancestor of $y$ in $F$.
We define relations $\includes$, $\overlaps$ and $\parallel$ on $V(G)$ as follows:
\begin{itemize}
\item $x\includes y$ if $x\prec y$ and the subtree $S_x$ contains the subtree $S_y$,
\item $x\overlaps y$ if $x\prec y$ and the subtrees $S_x$ and $S_y$ overlap,
\item $x\parallel y$ if $x\prec y$ and the subtrees $S_x$ and $S_y$ are disjoint.
\end{itemize}
It follows that the relations $\includes$, $\overlaps$ and $\parallel$ partition the relation $\prec$, that is, they are pairwise disjoint sets of pairs and their union gives the entire $\prec$.
Furthermore, the following conditions are satisfied for any $x,y,z\in V(G)$:
\begin{enumeratex}{A}
\item\label{item:A1} if $x\includes y$ and $y\includes z$, then $x\includes z$,
\item\label{item:A2} if $x\includes y$ and $y\overlaps z$, then $x\includes z$ or $x\overlaps z$,
\item\label{item:A3} if $x\overlaps y$ and $y\includes z$, then $x\overlaps z$ or $x\parallel z$ (because of cleanness),
\item\label{item:A4} if $x\parallel y$ and $y\prec z$, then $x\parallel z$.
\end{enumeratex}
We define an on-line game $\gamABS(k)$ in which Presenter builds a graph $G$ together with relations $\includes$, $\overlaps$ and $\parallel$ declaring, in each round, the relations $\includes$, $\overlaps$ and $\parallel$ between the vertices presented before and the new vertex so that
\begin{enumerate}
\item\label{item:abs-relations} $\includes$, $\overlaps$ and $\parallel$ partition the order of presentation $\prec$ and satisfy \ref{item:A1}--\ref{item:A4},
\item\label{item:abs-graph} $xy\in E(G)$ if and only if $x\overlaps y$ or $y\overlaps x$,
\item\label{item:abs-omega} $\omega(G)\leq k$,
\end{enumerate}
and Algorithm properly colors $G$ on-line.
No interval filament intersection model is revealed by Presenter in the game $\gamABS(k)$.
See Figure \ref{fig:abs-filament} for an illustration of possible representations of the relations $\includes$, $\overlaps$ and $\parallel$ in the game.

\begin{figure}[t]
\centering
\begin{minipage}[b]{2.4cm}
\centering
\begin{tikzpicture}
  \draw (0,0)--(2.4,0);
  \draw[thick,rounded corners=0.5cm] (0.4,0)--(0.7,1.5)--(1.7,1.5)--(2,0);
  \draw[thick,rounded corners=0.2cm] (0.8,0)--(1,1)--(1.4,1)--(1.6,0);
  \tikzstyle{every node}=[below,inner sep=1pt]
  \node at (0.4,0) {\strut $a$};
  \node at (0.8,0) {\strut $b$};
\end{tikzpicture}\\
$a\includes b$
\end{minipage}\hskip 1cm
\begin{minipage}[b]{5.2cm}
\centering
\begin{tikzpicture}
  \draw (0,0)--(2.4,0);
  \draw[thick,rounded corners=0.3cm] (0.4,0)--(0.6,1)--(1.8,1)--(2,0);
  \draw[thick,rounded corners=0.2cm] (0.9,0)--(1,1.5)--(1.4,1.5)--(1.5,0);
  \tikzstyle{every node}=[below,inner sep=1pt]
  \node at (0.4,0) {\strut $a$};
  \node at (0.9,0) {\strut $b$};
\end{tikzpicture}\nobreak\hskip 0.4cm
\begin{tikzpicture}
  \draw (0,0)--(2.4,0);
  \draw[thick,rounded corners=5pt] (0.4,0)--(0.6,1.5)--(1,1.5)--(1.5,0);
  \draw[thick,rounded corners=5pt] (0.9,0)--(1.4,1.5)--(1.8,1.5)--(2,0);
  \tikzstyle{every node}=[below,inner sep=1pt]
  \node at (0.4,0) {\strut $a$};
  \node at (0.9,0) {\strut $b$};
\end{tikzpicture}\\
$a\overlaps b$
\end{minipage}\hskip 1cm
\begin{minipage}[b]{2.4cm}
\centering
\begin{tikzpicture}
  \draw (0,0)--(2.4,0);
  \draw[thick,rounded corners=0.2cm] (0.4,0)--(0.5,1.5)--(0.9,1.5)--(1,0);
  \draw[thick,rounded corners=0.2cm] (1.4,0)--(1.5,1.5)--(1.9,1.5)--(2,0);
  \tikzstyle{every node}=[below,inner sep=1pt]
  \node at (0.4,0) {\strut $a$};
  \node at (1.4,0) {\strut $b$};
\end{tikzpicture}\\
$a\parallel b$
\end{minipage}
\caption[]{Interval filament representations of $a\includes b$, $a\overlaps b$, and $a\parallel b$.
The two drawings of $a\overlaps b$ distinguish whether the domains of the interval filaments representing $a$ and $b$ are nested or overlap.
Other drawings of $a\overlaps b$ can be obtained by letting $a$ and $b$ cross many times.}
\label{fig:abs-filament}
\end{figure}
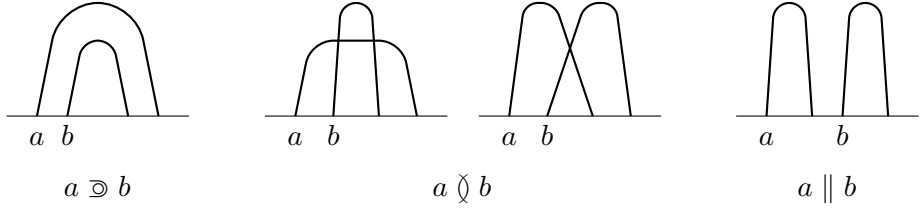

\begin{lemma}
\label{lem:subtree-game}
A graph\/ $G$ is a game graph of\/ $\gamABS(k)$ if and only if\/ $G$ is a clean subtree overlap graph.
If\/ $G$ is a game graph of\/ $\gamABS(k)$ and the relation\/ $\prec$ (as defined for a game graph) is a total order on\/ $V(G)$, then\/ $G$ is an interval filament graph.
\end{lemma}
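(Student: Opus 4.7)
The plan is to give explicit constructions in both directions of the equivalence and then read off the moreover statement from the backward construction.

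For the forward direction, starting from a clean subtree model $x\mapsto S_x$ of $G$ in a tree $T$ rooted at $r$, I would take the forest $F$ and the subtree-root map $x\mapsto r_x$ described just before the lemma, and define the three abstract relations by reading off the subtree geometry: for $x\prec y$ in $F$ declare $x\includes y$, $x\overlaps y$, or $x\parallel y$ according to whether $S_y\subseteq S_x$, the two overlap, or they are disjoint. Since $r_x$ lies on the $T$-path from $r$ to $r_y$ and $S_x$ is connected, exactly one of these cases occurs, so $\prec$ is genuinely partitioned by the three relations. Axioms (A1), (A2) and (A4) are then routine geometric consequences of containment and connectedness of subtrees, while (A3) is precisely where cleanness is used: $y\includes z$ gives $S_z\subseteq S_y$, and cleanness forbids the overlapping pair $S_x,S_y$ from both containing $S_z$, ruling out $x\includes z$ and leaving $x\overlaps z$ or $x\parallel z$. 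The edge condition holds because overlapping subtrees force $r_u$ and $r_v$ to lie on a common root-to-node path in $T$, hence to be comparable in $F$.

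For the backward direction, given a game graph $G$ of $\gamABS(k)$ with forest $F$ and relations $\includes,\overlaps,\parallel$, I would build the tree
\[
V(T)=\{\hat r\}\cup\{r_x,p_x:x\in V(G)\},
\]
with $\hat r$ a super-root, an edge $\hat r\,r_x$ for each $F$-root $x$, an edge $r_xr_y$ for each $F$-edge from parent $x$ to child $y$, and an edge $r_xp_x$ for each $x$. For each $x\in V(G)$ set
\[
S_x=\{r_x,p_x\}\cup\{r_y:x\prec y\text{ and }(x\includes y\text{ or }x\overlaps y)\}\cup\{p_y:x\prec y\text{ and }x\includes y\}.
\]
Axiom (A4) guarantees that no vertex $\parallel$-related to $x$ lies strictly between $x$ and an $\includes/\overlaps$-related descendant, so each $S_x$ is a connected subtree of $T$. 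A direct case analysis, using (A1) and (A2), shows that for every $x\prec y$ the geometric relation between $S_x$ and $S_y$ matches the abstract one (the private leaves $p_y\in S_y\setminus S_x$ together with $r_x\in S_x\setminus S_y$ witness a genuine overlap in the $x\overlaps y$ case), and for $F$-incomparable $x,y$ the subtrees live in disjoint branches of $T$ below $\hat r$. Finally, (A3) forbids any configuration $u\prec v\prec w$ with $u\overlaps v$, $u\includes w$ and $v\includes w$, so the constructed model is clean.

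For the moreover statement, when $\prec$ is a total order the forest $F$ collapses to a single chain $x_1\prec\cdots\prec x_n$, and the tree built above has the subtree roots sitting on the single path $\hat r\,r_{x_1}r_{x_2}\cdots r_{x_n}$; since $r_{x_i}\in S_{x_i}$, every subtree meets this common path, and the Enright--Stewart result \cite{ES07} cited in the introduction supplies the desired interval filament representation of $G$.

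The main obstacle is choosing the right tree in the backward direction; the recipe above, which merely copies $F$ and hangs a single private leaf $p_x$ off each $r_x$, is simple enough that axioms (A1)--(A4) carry the rest of the verification, with cleanness pouring out of (A3) alone.
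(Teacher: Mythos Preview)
Your proposal is correct and follows essentially the same approach as the paper. The backward construction you give is identical to the paper's (with $\hat r,r_x,p_x$ playing the roles of the paper's $r,u_x,v_x$), the forward direction is exactly the argument the paper gives in the paragraphs preceding the lemma, and the moreover statement is derived in the same way from the observation that all constructed subtrees meet the path of subtree roots.
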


\begin{proof}
We have argued above that every clean subtree overlap graph with clique number at most $k$ is a game graph of $\gamABS(k)$.
Now, suppose that $G$ is a game graph of $\gamABS(k)$.
This means that there exist a rooted forest $F$ on $V(G)$ and relations $\includes$, $\overlaps$ and $\parallel$ on $V(G)$ such that
\begin{enumerate}
\item $\includes$, $\overlaps$ and $\parallel$ partition the ancestor-descendant order $\prec$ of $F$ and satisfy \ref{item:A1}--\ref{item:A4},
\item $xy\in E(G)$ if and only if $x\overlaps y$ or $y\overlaps x$,
\item $\omega(G)\leq k$.
\end{enumerate}
Let $T$ be a tree with
\begin{align*}
V(T)&=\{r\}\cup\{u_x\colon x\in V(G)\}\cup\{v_x\colon x\in V(G)\},\\
E(T)&=\{ru_x\colon\text{$x$ is a root of $F$}\}\cup\{u_xu_y\colon xy\in E(F)\}\cup\{u_xv_x\colon x\in V(G)\}.
\end{align*}
For $x\in V(G)$, let $S_x=\{u_x,v_x\}\cup\{u_y\colon x\includes y$ or $x\overlaps y\}\cup\{v_y\colon x\includes y\}$.
We show that $x\mapsto S_x$ is a clean overlap model of $G$ by subtrees of $T$.

If $x\prec y$ and $u_y\notin S_x$, then $x\parallel y$, so it follows from \ref{item:A4} that $y\prec z$ implies $x\parallel z$ and thus $u_z\notin S_x$ for every $z$.
Hence, every set $S_x$ is the node set of a subtree of $T$.
If $x\includes y$, then $y\includes z$ implies $x\includes z$, by \ref{item:A1}, and $y\overlaps z$ implies $x\includes z$ or $x\overlaps z$, by \ref{item:A2}, and hence $S_y\subset S_x$.
If $x\overlaps y$, then $u_x,v_x\in S_x\setminus S_y$, $u_y\in S_x\cap S_y$, and $v_y\in S_y\setminus S_x$, and hence $S_x$ and $S_y$ overlap.
Finally, if $x\parallel y$, then it follows from \ref{item:A4} that $y\prec z$ implies $x\parallel z$ for every $z$, and hence $S_x\cap S_y=\emptyset$.
This shows that $x\mapsto S_x$ is indeed an overlap model of $G$.
Moreover, by \ref{item:A3}, there are no $x$, $y$, $z$ with $x\overlaps y\includes z$ and $x\includes z$, so the model is clean.
This completes the proof of the first statement.

For the proof of the second statement, assume that the underlying forest $F$ of the game graph $G$ consists of just one root-to-leaf path.
It follows directly from the construction that all sets $S_x$ for $x\in V(G)$ intersect the set $\{u_x\colon x\in V(G)\}$, which forms a path in $T$.
As it has been explained earlier in this section, an overlap graph of subtrees of $T$ all of which intersect some path in $T$ is an interval filament graph.
\end{proof}

The game $\gamIOV(k)$ is more restrictive for Presenter than the game $\gamABS(k)$, in the sense that every presentation scenario in the former can be translated into a presentation scenario in the latter.
Indeed, let $G$ be a graph presented in $\gamIOV(k)$ together with its representation $\mu\colon V(G)\to\calI$, and let $\prec$ be its order of presentation.
We can define relations $\includes$, $\overlaps$ and $\parallel$ on $V(G)$ just like before:
\begin{itemize}
\item $x\includes y$ if $x\prec y$ and the interval $\mu(x)$ contains the interval $\mu(y)$,
\item $x\overlaps y$ if $x\prec y$ and the intervals $\mu(x)$ and $\mu(y)$ overlap,
\item $x\parallel y$ if $x\prec y$ and the intervals $\mu(x)$ and $\mu(y)$ are disjoint.
\end{itemize}
Clearly, the relations $\includes$, $\overlaps$ and $\parallel$ thus defined satisfy the conditions \ref{item:abs-relations}--\ref{item:abs-omega} of the definition of $\gamABS(k)$.
This and Lemma \ref{lem:subtree-game} imply that every clean interval overlap game graph is a clean subtree overlap graph.

\section{Coloring algorithm for rectangle and subtree overlap graphs}
\label{sec:coloring}

In this section, we will prove that game graphs of $\gamABS(k)$ have chromatic number $O_k((\log\log n)^{k-1})$, while game graphs of $\gamIOV(k)$ (which are the same as game graphs of $\gamIOV_3(k)$) have chromatic number $O_k(\log\log n)$ and triangle-free chromatic number $O_k(1)$.
Then, the same bounds on the chromatic number of clean subtree overlap graphs and (respectively) the chromatic number and triangle-free chromatic number of rectangle overlap graphs will follow from Lemmas \ref{lem:subtree-game} and \ref{lem:rectangle-to-game} (respectively).

The general idea is to provide on-line algorithms in $\gamABS(k)$, $\gamIOV(k)$ and $\gamIOV_3(k)$ using few colors, and then to use Lemma \ref{lem:game-upper} to derive upper bounds on the (triangle-free) chromatic number of their game graphs.
However, since Presenter has a strategy to force Algorithm to use $\Omega(\log r)$ colors in $r$ rounds of the game $\gamIOV(2)$, a direct application of Lemma \ref{lem:game-upper} to the game graph cannot succeed for $\gamABS(k)$ and $\gamIOV(k)$ if the rooted forest $F$ underlying the game graph contains long paths.
To overcome this problem, we use the technique of \emph{heavy-light decomposition} due to Sleator and Tarjan \cite{ST83}.

Let $G$ be a game graph of $\gamABS(k)$ or $\gamIOV(k)$ with $n$ vertices and with an underlying forest $F$.
Thus $\omega(G)\leq k$.
We call an edge $xy$ of $F$, where $y$ is a child of $x$, \emph{heavy} if the subtree of $F$ rooted at $y$ contains more than half of the vertices of the subtree of $F$ rooted at $x$, and we call it \emph{light} otherwise.
The following is proved by an easy induction.

\begin{lemma}[Sleator, Tarjan \cite{ST83}]
\label{lem:heavy-paths}
Every path in\/ $F$ from a root to a leaf contains at most\/ $\lfloor\log_2n\rfloor$ light edges.
\end{lemma}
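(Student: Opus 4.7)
The plan is to carry out a direct size-tracking argument along the root-to-leaf path, exploiting only the defining property of light edges. For each vertex $v$ of $F$, let $s(v)$ denote the number of vertices of the subtree of $F$ rooted at $v$. The crucial observation, which I would state first, is that if $xy$ is a light edge with $y$ a child of $x$, then by definition the subtree rooted at $y$ does \emph{not} contain strictly more than half of the subtree rooted at $x$, so $s(y) \leq s(x)/2$. For every (heavy or light) edge $xy$ with $y$ a child of $x$ we also have the trivial bound $s(y) \leq s(x)$.

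Fix an arbitrary root-to-leaf path $v_0, v_1, \ldots, v_m$ in $F$, and let $t$ be the number of light edges on this path. Composing the two inequalities above along the path gives $s(v_m) \leq s(v_0)/2^t$. Since $v_0$ is a root of $F$ we have $s(v_0) \leq n$, and since $v_m$ is a vertex we have $s(v_m) \geq 1$. Combining these yields $2^t \leq n$, whence $t \leq \log_2 n$, and since $t$ is an integer we conclude $t \leq \lfloor\log_2 n\rfloor$, as required.

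There is no genuine obstacle here; the only point deserving care is getting the direction of the inequality right in the definition of light (using $\leq s(x)/2$ rather than $<s(x)/2$), which is exactly what the strict inequality in the definition of heavy guarantees, and which is what makes the argument compatible with the floor function in the final bound. The proof is self-contained and does not rely on any property of the game $\gamABS(k)$ or $\gamIOV(k)$ beyond the fact that $F$ is a rooted forest on $n$ vertices.
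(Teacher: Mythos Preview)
Your argument is correct and is exactly the ``easy induction'' the paper alludes to: each light edge at most halves the subtree size, so a root-to-leaf path with $t$ light edges yields $1\leq s(v_m)\leq s(v_0)/2^t\leq n/2^t$, hence $t\leq\lfloor\log_2 n\rfloor$. There is nothing to add.
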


Every vertex of $F$ has a heavy edge to at most one of its children, so the heavy edges form a collection of paths in $F$, called \emph{heavy paths}.
For each heavy path $P$, by the second statement of Lemma \ref{lem:subtree-game}, the graph $G[V(P)]$ is an interval filament graph, and therefore, by Theorem \ref{thm:filament}~\ref{item:filament-upper}, it can be colored properly using $O_k(1)$ colors.
In the special case that $G$ is a game graph of $\gamIOV(k)$, the result of Kostochka and Milans \cite{KM12} implies that $2k-1$ colors even suffice.
We start with a preliminary coloring of the vertices of $G$ that colors each heavy path as it is described above using the same set of colors, thus using $O_k(1)$ colors in total.
Note that this is not an on-line coloring---the color of a vertex depends on the subgraph induced on the whole heavy path that contains it.

Let $b=\lfloor\log_2n\rfloor+1$.
By Lemma \ref{lem:heavy-paths}, every root-to-leaf path in $F$ is subdivided by its light edges into at most $b$ blocks, each being a subpath of some heavy path of $F$.
The subgraph of $G$ induced on each color class in the preliminary coloring is itself a game graph of $\gamABS(k)$ or $\gamIOV(k)$, respectively, and contains no edges within any of the blocks.
We will color each such subgraph separately by an appropriate on-line algorithm using $O_k((\log b)^{k-1})$ colors in $\gamABS(k)$ and $O_k(\log b)$ colors in $\gamIOV(k)$.
To achieve this formally, we define on-line games $\gamABS(k,b)$ and $\gamIOV(k,b)$ like $\gamABS(k)$ and $\gamIOV(k)$, respectively, but with one additional constraint:
\begin{enumeratefixed}{iv/v}
\item there is a partition of the vertices into at most $b$ blocks of vertices consecutive in the order of presentation $\prec$ such that no edge connects vertices in the same block.
\end{enumeratefixed}
It follows from the discussion above that the subgraph of $G$ induced on each color class in the preliminary coloring is a game graph of $\gamABS(k,b)$ or $\gamIOV(k,b)$, respectively.

The rest of this section is devoted to the proofs of the following three lemmas.
In view of the discussion above, combining them with the results of previous sections will then give us Theorems \ref{thm:subtree} \ref{item:subtree-upper}--\ref{item:clean-subtree-upper}, \ref{thm:rectangle} \ref{item:rectangle-upper}--\ref{item:clean-rectangle-upper}, and \ref{thm:rectangle-K3}.

\begin{lemma}
\label{lem:clean-subtree-coloring}
There is an on-line\/ $O_k((\log b)^{k-1})$-coloring algorithm in\/ $\gamABS(k,b)$.
\end{lemma}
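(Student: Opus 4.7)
The plan is to prove the lemma by induction on $k$. The base case $k=1$ is immediate: $\omega \le 1$ forces the $\overlaps$ relation to be empty, so the graph is edgeless and one color suffices, which matches $(\log b)^0 = 1$.

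For $k \ge 2$, assuming an on-line algorithm $A_{k-1}$ for $\gamABS(k-1, b)$ that uses $c_{k-1}(b) = O_{k-1}((\log b)^{k-2})$ colors, I would design $A_k$ to color each newly presented vertex $v$ with a pair $(\ell(v), \gamma(v))$: the \emph{level} $\ell(v) \in \{1, \ldots, \lfloor \log_2 b \rfloor + 1\}$ comes from a First-Fit on an auxiliary forest of size at most $b$ (built from the block structure, exploiting that every $\overlaps$-edge crosses between distinct blocks), while $\gamma(v)$ is produced by an $A_{k-1}$ simulation on a level-by-level subgame. The product palette has size $O_k((\log b)^{k-1})$ as required.

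The key structural observation that makes the recursion work is the following. For every vertex $v$, the set $N^-(v) = \{u \prec v : u \overlaps v\}$ of overlap-predecessors cannot contain a $\parallel$-related pair, because axiom A4 would then propagate $\parallel$ to $v$, contradicting $u \overlaps v$. Combined with $\omega(G) \le k$, this yields $\omega(G[N^-(v)]) \le k-1$; moreover, the restriction of $\includes$ and $\overlaps$ to $N^-(v)$ still satisfies A1--A3, while A4 becomes vacuous. Consequently, any subgame built from such $N^-$-type classes is a legitimate $\gamABS(k-1, b)$ instance to which $A_{k-1}$ applies, the block partition being inherited from the original game.

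The main obstacle is engineering the auxiliary forest so that two properties hold simultaneously: (a) every $\overlaps$-edge $uv$ with $\ell(u) = \ell(v)$ is an edge of the level-$\ell(v)$ subgame handed to $A_{k-1}$, and (b) that subgame satisfies all the axioms of $\gamABS(k-1, b)$ and has clique number at most $k-1$. A naive choice, such as taking the parent of block $B_j$ to be the latest earlier block with an overlap edge into $B_j$, fails for (a) because a long-range overlap can skip ancestors in the block forest. The forest must be chosen so that its ancestor relation captures the full $\overlaps$-chain structure across blocks; once this is done, (b) follows directly from the structural observation, while (a) reduces to verifying that the ancestor relation subsumes overlap predecessors. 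An additional constant-factor palette blow-up at each inductive step (arising from separating edge types such as $\includes$-transitions from $\overlaps$-transitions) is absorbed into the $O_k$ constant, leaving the asymptotic bound $O_k((\log b)^{k-1})$ intact.
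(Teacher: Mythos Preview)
Your inductive skeleton and base case are fine, and the observation about $N^-(v)$ is correct: the overlap-predecessors of a single vertex $v$ contain no $\parallel$-pair and hence induce a subgraph with clique number at most $k-1$. But this observation concerns \emph{the predecessors of one fixed vertex}, and you never explain why a First-Fit level class on a block forest would inherit this property. A level class is not of the form $N^-(v)$; it is a union of blocks that happen to receive the same First-Fit label. There is no mechanism in your outline that forces such a class to have clique number $\le k-1$, and indeed if it did, your argument would give an $O_k(\log b)$ bound for all $k$, contradicting the lower bound of the next section. You explicitly flag ``engineering the auxiliary forest'' as the main obstacle and then resolve it only by asserting that ``once this is done, (b) follows directly from the structural observation''---but no forest on blocks can make (b) follow from a statement about $N^-(v)$. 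This is the missing idea, not a detail to be absorbed into $O_k$.

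The paper's proof uses the same induction on $k$ but organises the step quite differently. It does not attempt to drop the clique number on First-Fit level classes. Instead it introduces an on-line primary/secondary classification: a vertex $z$ is secondary when some earlier primary $y$ with $y\includes z$ shares an overlap-predecessor $x$ with $z$. The secondary vertices partition into sets $S(p)$ indexed by primaries $p$, and Lemma~\ref{lem:secondary-common} shows each $S(p)$ has a common overlap-predecessor, whence $\omega(G[S(p)])\le k-1$; this is where the induction hypothesis is applied. The $O(\log b)$ factor arises separately, from properly colouring the primaries: first a $k$-colouring of $G[P]$ with the structural property \eqref{eq:col-prop} (Lemma~\ref{lem:primary-coloring-1}), then First-Fit on each colour class, which is shown to use $O(\log b)$ colours via a forest argument on $P'\setminus R$ combined with the block bound (Lemma~\ref{lem:primary-coloring-2}). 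A final $2$-colouring (Lemma~\ref{lem:secondary-coloring}) glues the pieces together. The crucial difference from your plan is that the recursion happens on the sets $S(p)$, whose reduced clique number is guaranteed by the common-neighbour lemma, not on First-Fit level classes.
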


\begin{lemma}
\label{lem:overlap-game-coloring}
There is an on-line\/ $O_k(\log b)$-coloring algorithm in\/ $\gamIOV(k,b)$.
\end{lemma}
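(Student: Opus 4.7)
The plan is to extend the on-line algorithm of~\cite{KPW15}, which handles $\gamIOV(2,b)$ with $O(\log b)$ colors, to general $\gamIOV(k,b)$ at the cost of an $O_k(1)$ multiplicative factor.

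I would start with the following structural observation. Suppose an interval $x$ with $\mu(x)=[\ell,r]$ is presented. By the rules of $\gamIOV(k,b)$, every already-presented overlap neighbor $y$ of $x$ satisfies $\leftside(\mu(y))<\ell<\rightside(\mu(y))<r$; in particular, $\mu(y)$ contains the point $\ell$. Hence any two overlap neighbors of $x$ are comparable under containment, or they overlap one another; in the latter case, together with $x$, they form a triangle. More generally, any pairwise-overlapping family of neighbors of $x$ yields, together with $x$, a clique of $G$. Since $\omega(G)\leq k$, every antichain in the containment poset of the overlap neighbors of $x$ has size at most $k-1$, and Dilworth's theorem produces a partition of this poset into at most $k-1$ chains.

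Building on this, I would design the on-line algorithm so that each interval $x$ is assigned a color of the form $(\iota(x),\gamma(x))$. The chain rank $\iota(x)\in\{1,\dots,k-1\}$ is computed on-line so that the subgraph induced by the vertices of any fixed rank satisfies both the clique bound $\omega\leq 2$ and the block structure of $\gamIOV(2,b)$; the subcolor $\gamma(x)$ is then computed by the $\gamIOV(2,b)$ algorithm of~\cite{KPW15}, which uses $O(\log b)$ colors. Proper colorings of these two components combine into a proper coloring of the game graph, so the total color count is $(k-1)\cdot O(\log b)=O_k(\log b)$. Lemma~\ref{lem:game-upper} then transfers this on-line algorithm into the stated bound for game graphs of $\gamIOV(k,b)$.

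The main obstacle will be making the chain rank $\iota(x)$ both well-defined and consistent under the on-line presentation. A natural candidate is the depth of $\mu(x)$ in the containment chain of its already-presented overlap neighbors at the moment of presentation, but one must verify that the ranks assigned to $x$ and to later neighbors of $x$ satisfy the compatibility conditions required by the reduction. The cleanliness condition of $\gamIOV(k,b)$---namely, that no interval is strictly contained in the intersection of two overlapping ones---is the central tool here, as it prevents the pathological configurations that would otherwise force the rank to fluctuate. Once the stability of $\iota$ is established, the reduction to the two-clique case completes the proof, possibly combined with the Kostochka--Milans bound of $2k-1$ for the clean interval overlap case to tighten the $O_k(1)$ factor.
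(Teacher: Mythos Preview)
Your proposal has a genuine gap at exactly the step you flag as the main obstacle. The Dilworth decomposition you invoke is \emph{local}: for each vertex $x$, the containment poset of its prior overlap neighbors has width at most $k-1$, but the resulting chain partition depends on $x$. What you need is a \emph{global} on-line map $\iota\colon V(G)\to\{1,\dots,k-1\}$ whose level sets are all triangle-free, and nothing in the proposal explains how to assemble one from the local partitions. Your ``natural candidate'' is not even well-defined---$\mu(x)$ overlaps each of its prior neighbors rather than nesting with them, so it has no depth in their containment chains---and you give no concrete mechanism by which cleanliness would stabilize an alternative definition. Without $\iota$, the reduction to $\gamIOV(2,b)$ does not go through.

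The paper takes a route that does not pass through $\gamIOV(2,b)$. It splits the vertices into \emph{primary} and \emph{secondary} (a vertex $z$ is secondary when some earlier primary $y$ has $y\includes z$ and both overlap a common $x$), attaches each secondary vertex to a unique primary $p$ forming sets $S(p)$, and layers four on-line colorings: a $k$-coloring of the primaries guaranteeing a condition stronger than triangle-freeness (Lemma~\ref{lem:primary-coloring-1}); a First-fit proper $O(\log b)$-coloring of each resulting class (Lemma~\ref{lem:primary-coloring-2}, where $b$ enters); a $2$-coloring separating non-adjacent primaries whose $S$-sets are joined by an edge (Lemma~\ref{lem:secondary-coloring}); and a $\binom{k}{2}$-coloring of each $G[S(p)]$ via Felsner's algorithm for up-growing posets, applicable because the members of $S(p)$ share a common overlap neighbor and hence form an up-growing incomparability graph of width at most $k-1$ (Lemmas~\ref{lem:secondary-common} and~\ref{lem:overlap-game-secondary}). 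The product $2k\cdot O(\log b)\cdot\binom{k}{2}$ gives the bound. The paper does prove, as a byproduct (Lemma~\ref{lem:overlap-game-K3-coloring}), that an on-line $O_k(1)$ triangle-free coloring of $\gamIOV(k)$ exists---precisely what would rescue your $\iota$---but its proof already uses this primary/secondary machinery, so invoking it would not shorten the argument.
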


\begin{lemma}
\label{lem:overlap-game-K3-coloring}
There is an on-line\/ $O_k(1)$-coloring algorithm in\/ $\gamIOV_3(k)$.
\end{lemma}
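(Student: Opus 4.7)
The plan is to design an on-line First-Fit algorithm using a palette of size $f(k)=O_k(1)$, which via Lemma~\ref{lem:game-upper} will translate to the stated bound on the on-line chromatic number. When a new interval $y$ is presented, the algorithm assigns $y$ the smallest color $c$ such that $C_c\cap N^-(y)$ is an independent set in $G$, where $N^-(y)$ denotes the set of earlier overlap-neighbors of $y$; any edge inside $C_c\cap N^-(y)$ would, together with $y$, form a monochromatic triangle in $G$, so this selection rule produces a $K_3$-free coloring.

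The central structural observation is that every interval in $N^-(y)$ contains the point $\leftside(\mu(y))$. Consequently $G[N^-(y)]$ is a permutation graph: ordering its vertices by $\leftside$ and recording the induced ordering by $\rightside$ yields a permutation $\pi_y$ in which edges of $G[N^-(y)]$ correspond to non-inversions and cliques to strictly increasing subsequences. Because $y$ is adjacent to every element of $N^-(y)$ and $\omega(G)\leq k$, we have $\omega(G[N^-(y)])\leq k-1$.

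To bound the number of colors used, suppose First-Fit assigns color $c$ to $y$. For each $c'<c$ there must be a witness pair $u_{c'},v_{c'}\in N^-(y)\cap C_{c'}$ whose intervals overlap. Tracing backwards through the presentation, each $u_{c'}$ itself received color $c'$ only because it too has witness pairs in $N^-(u_{c'})$ for every color below $c'$, and so on. This yields a rooted tree of witnesses whose depth equals the color assigned to the root.

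The main obstacle is to show that this witness tree has depth bounded by a function of $k$. The plan is to combine two ingredients: applying Erd\H{o}s--Szekeres to the permutations $\pi_v$ appearing at the various nodes $v$ of the witness tree would extract either a long increasing subsequence (a clique in $G[N^-(v)]$ that, once large enough, violates $\omega(G)\leq k$) or a long decreasing subsequence---equivalently, a long chain of nested intervals inside the overlap region of two adjacent intervals. The cleanness assumption of $\gamIOV_3(k)$ rules out such deeply nested configurations, since no interval may lie inside the intersection of two overlapping intervals. A careful Ramsey-type propagation of these two constraints along the witness tree should force its depth to be at most some $f(k)=O_k(1)$, completing the proof.
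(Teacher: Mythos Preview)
Your approach is entirely different from the paper's, and the argument has a genuine gap at its decisive step.

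The paper does not use First-Fit here. It relies on the primary/secondary decomposition developed earlier in the section: vertices are partitioned into sets $S(p)$ indexed by primary vertices $p\in P$, and three on-line colorings are combined. Lemma~\ref{lem:primary-coloring-1} gives a $k$-coloring $\phi$ of $G[P]$ that is already triangle-free; Lemma~\ref{lem:secondary-coloring} gives a $2$-coloring $\zeta$ separating non-adjacent $p,q\in P$ whose sets $S(p),S(q)$ are joined by an edge; and Lemma~\ref{lem:overlap-game-secondary} gives a proper $\binom{k}{2}$-coloring $\xi$ of each $G[S(p)]$ via the up-growing co-comparability structure. The triple $(\phi(p),\zeta(p),\xi(x))$ is then shown to be a triangle-free coloring with $2k\binom{k}{2}$ colors. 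The whole point of the primary/secondary machinery is precisely to avoid the kind of uncontrolled recursion your witness tree invokes.

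Your structural observations about $N^-(y)$ are correct, but the bounding argument does not go through as stated. You claim that a long decreasing subsequence among the witnesses yields ``a long chain of nested intervals inside the overlap region of two adjacent intervals,'' which cleanness would forbid. This inference is unjustified: a nested chain $w_1\supset w_2\supset\cdots$ inside $N^-(y)$ is not a cleanness violation, since cleanness only forbids $\mu(z)\subset\mu(x)\cap\mu(y)$ when $x\overlaps y$, and nested intervals do \emph{not} overlap. All you know about the $w_i$ is that each contains the single point $\leftside(\mu(y))$; nothing forces them into the intersection of an overlapping pair. Furthermore, the recursion in your witness tree passes from $y$ to some $u_{c'}\in N^-(y)$, whose own witnesses lie in $N^-(u_{c'})$ and contain $\leftside(\mu(u_{c'}))<\leftside(\mu(y))$; there is no evident geometric constraint linking levels of the tree. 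The sentence ``a careful Ramsey-type propagation \ldots should force its depth to be at most some $f(k)$'' is carrying the entire weight of the proof, and you have given no indication of what that propagation would be or why it terminates. Without it, you have no bound on the First-Fit color.
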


For the next part of this section, we forget the preceding context (in particular, the previous meaning of $G$) and adopt the setting of Lemma \ref{lem:clean-subtree-coloring}: a graph $G$ with relations $\includes$, $\overlaps$ and $\parallel$ is being presented in the game $\gamABS(k,b)$, and we are to color $G$ properly using $O_k((\log b)^{k-1})$ colors on-line.
Whatever we show for $\gamABS(k,b)$ applies also to $\gamIOV(k,b)$, as the latter is more restrictive for Presenter.
The proof of Lemma \ref{lem:overlap-game-coloring} will differ from the proof of Lemma \ref{lem:clean-subtree-coloring} only in one part, where the use of a direct argument instead of induction will allow us to reduce the number of colors to $O_k(\log b)$.
The last part of that proof, which raises the number of colors from $O_k(1)$ to $O_k(\log b)$, can be omitted when we aim only at a triangle-free coloring, whence Lemma \ref{lem:overlap-game-K3-coloring} will follow.

As a new vertex $z$ of $G$ is presented, we classify it as \emph{primary} or \emph{secondary} according to the following on-line rule: if there are $x,y\in V(G)$ such that $y$ is primary, $x\overlaps y$, $x\overlaps z$, and $y\includes z$, then $z$ is secondary, otherwise $z$ is primary.
Let $P$ denote the set of primary vertices, being built on-line during the game.
For every $y\in P$, let $S(y)$ be the set containing $y$ and all secondary vertices $z$ such that $y\includes z$ and there is $x$ with $x\overlaps y$ and $x\overlaps z$, also being built on-line during the game.
See Figure \ref{fig:abs-scenario} for an illustration.
The following lemma will be used implicitly throughout the rest of this section.

\begin{figure}[t]
\begin{tikzpicture}[scale=0.5]
  \draw (0,0)--(22.2,0);
  \draw[thick,rounded corners=0.65cm] (0.8,0)--(1.6,4)--(4.2,4)--(4.8,1)[rounded corners=0.3cm]--(12,1)--(12.2,0);
  \draw[thick,rounded corners=0.35cm] (1.8,0)--(2.2,2)--(3.6,2)--(4,0);
  \draw[thick,rounded corners=0.6cm] (2.8,0)--(3.4,3)[rounded corners=0.3cm]--(14.2,3)[rounded corners=0.2cm]--(15.9,2)[rounded corners=0.5cm]--(19,2)--(19.4,0);
  \draw[thick,rounded corners=0.4cm] (5,0)--(5.4,2)[rounded corners=0.6cm]--(12.8,2)--(13.2,0);
  \draw[thick,rounded corners=0.55cm] (6,0)--(7,5)--(9.2,5)--(10.2,0);
  \draw[thick,rounded corners=0.15cm] (7,0)--(7.8,4)--(8.4,4)--(9.2,0);
  \draw[thick,rounded corners=0.3cm] (11.2,0)--(11.4,1){[rounded corners=0.2cm]--(13.1,2)--(13.9,2)}--(15.6,1)--(16.2,0);
  \draw[thick,rounded corners=0.5cm] (14.2,0)--(15,4)--(20.6,4)--(21.4,0);
  \draw[thick,rounded corners=0.2cm] (15.2,0)--(15.8,3)--(16.6,3)--(18.4,0);
  \draw[thick,rounded corners=0.2cm] (17.2,0)--(19,3)--(19.8,3)--(20.4,0);
  \tikzstyle{every node}=[below,inner sep=1pt]
  \node at (0.8,0) {\strut $a$};
  \node at (1.8,0) {\strut $b$};
  \node at (2.8,0) {\strut $c$};
  \node at (5,0) {\strut $d$};
  \node at (6,0) {\strut $e$};
  \node at (7,0) {\strut $f$};
  \node at (11.2,0) {\strut $g$};
  \node at (14.2,0) {\strut $h$};
  \node at (15.2,0) {\strut $i$};
  \node at (17.2,0) {\strut $j$};
\end{tikzpicture}
\caption[]{A presentation scenario of an interval filament graph in the game $\gamABS(3)$ and one of its possible representations.
The representation is for illustration only and is not revealed by Presenter in the game.
The primary vertices are $a,b,c,e,h$.
We have $S(a)=\{a\}$, $S(b)=\{b\}$, $S(c)=\{c,d,g\}$, $S(e)=\{e,f\}$, and $S(h)=\{h,i,j\}$.}
\label{fig:abs-scenario}
\end{figure}
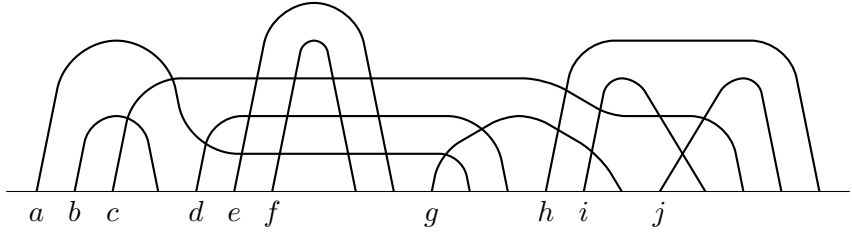

\begin{lemma}
For every\/ $z\in V(G)$, there is a unique vertex\/ $p\in P$ with\/ $z\in S(p)$.
\end{lemma}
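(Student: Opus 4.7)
The plan is to split on whether $z$ is primary or secondary and to prove existence and uniqueness separately. For existence, the primary case is immediate from the definition $z\in S(z)$, and the secondary case is immediate from the classification rule that made $z$ secondary in the first place. For uniqueness in the primary case, I will use that $S(p)\setminus\{p\}$ contains only secondary vertices, so $z\in S(p)$ with $z$ primary forces $p=z$. The nontrivial part will be uniqueness for secondary $z$.

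So I will suppose $z$ is secondary and $z\in S(y_1)\cap S(y_2)$ for distinct $y_1,y_2\in P$, aiming for a contradiction. Both $y_1\includes z$ and $y_2\includes z$ give $y_1\prec z$ and $y_2\prec z$, so $y_1$ and $y_2$ are $\prec$-comparable in the underlying forest; WLOG $y_1\prec y_2$. I will rule out $y_1\overlaps y_2$ by applying axiom \ref{item:A3} to $y_1\overlaps y_2$ and $y_2\includes z$, obtaining $y_1\overlaps z$ or $y_1\parallel z$, each of which contradicts $y_1\includes z$ by disjointness of the three relations. Similarly, $y_1\parallel y_2$ together with $y_2\prec z$ and axiom \ref{item:A4} would give $y_1\parallel z$, again contradicting $y_1\includes z$. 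This will leave $y_1\includes y_2$ as the only possibility.

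To finish, I will use that $z\in S(y_1)$ supplies some $x_1$ with $x_1\overlaps y_1$ and $x_1\overlaps z$. Applying axiom \ref{item:A3} to $x_1\overlaps y_1$ and $y_1\includes y_2$ yields $x_1\overlaps y_2$ or $x_1\parallel y_2$. In the first subcase the quadruple $(x_1,y_1,y_2)$ together with $y_1\in P$ and $y_1\includes y_2$ matches the hypothesis of the secondary-classification rule (with $y_2$ playing the role of $z$), so $y_2$ should have been classified as secondary, contradicting $y_2\in P$. In the second subcase, $x_1\parallel y_2$ and $y_2\prec z$ combine via axiom \ref{item:A4} to give $x_1\parallel z$, contradicting $x_1\overlaps z$.

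The main obstacle is the middle case $y_1\includes y_2$: here the relations between $y_1$ and $y_2$ are internally consistent, and one must instead exploit an external witness. The trick will be to realize that the overlap witness $x_1$ for $z\in S(y_1)$ automatically propagates through the inclusion $y_1\includes y_2$ via axiom \ref{item:A3}, and that whichever of the two resulting alternatives holds is killed by either the classification rule (forcing $y_2$ to be secondary) or axiom \ref{item:A4} (propagating disjointness down to $z$).
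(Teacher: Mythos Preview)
Your proof is correct and follows essentially the same route as the paper's: both arguments reduce to the case $y_1\includes y_2$ by ruling out $y_1\overlaps y_2$ and $y_1\parallel y_2$ via \ref{item:A3} and \ref{item:A4}, then use the overlap witness $x$ from $z\in S(y_1)$ together with \ref{item:A3} applied to $x\overlaps y_1\includes y_2$ to force either $x\parallel y_2$ (killed by \ref{item:A4}) or $x\overlaps y_2$ (which makes $y_2$ secondary). You are somewhat more explicit about existence and the primary case, which the paper leaves implicit, and your reference to ``the underlying forest'' is slightly off-context (here $\prec$ is the total presentation order in the on-line game), but this does not affect the argument.
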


\begin{proof}
It follows directly from the primary-secondary classification rule that there is $p\in P$ such that $z\in S(p)$.
To see that such a vertex $p$ is unique, suppose to the contrary that there are $p,q\in P$ such that $p\prec q$ and $z\in S(p)\cap S(q)$.
It follows that $z$ is secondary, $p\includes z$, and $q\includes z$.
We can have neither $p\overlaps q$, as this would contradict \ref{item:A3}, nor $p\parallel q$, as this would contradict \ref{item:A4}.
Hence $p\includes q$.
Since $z\in S(p)$, there is $x\in V(G)$ such that $x\overlaps p$ and $x\overlaps z$.
Since $x\overlaps p\includes q$, we have $x\overlaps q$ or $x\parallel q$, by \ref{item:A3}.
However, we cannot have $x\parallel q$, as this and $q\includes z$ would contradict \ref{item:A4}.
Hence $x\overlaps q$.
This contradicts the hypothesis that $q$ is primary.
\end{proof}

The next lemma will allow us to construct an on-line coloring of $G$ from on-line colorings of $G[P]$ and of all $G[S(p)]$ with $p\in P$.

\begin{lemma}
\label{lem:secondary-coloring}
The vertices in\/ $P$ can be\/ $2$-colored on-line so that if\/ $p,q\in P$ have the same color and\/ $pq\notin E(G[P])$, then $xy\notin E(G)$ for any\/ $x\in S(p)$ and\/ $y\in S(q)$.
\end{lemma}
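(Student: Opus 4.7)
The plan is to define an on-line 2-coloring of $P$ based on a tree structure among primary vertices. First I will observe that, for every primary $q$, the set $\{p \in P : p \includes q\}$ is a chain under $\includes$: if $p_1, p_2 \in P$ with $p_1 \prec p_2$ both $\includes$-dominate $q$, then $p_1 \parallel p_2$ would force $p_1 \parallel q$ by \ref{item:A4}, and $p_1 \overlaps p_2$ would force $p_1 \overlaps q$ or $p_1 \parallel q$ by \ref{item:A3}; both contradict $p_1 \includes q$, so $p_1 \includes p_2$. Consequently, each primary $q$ that has a primary $\includes$-ancestor has a unique minimal such ancestor, which I denote $\pi(q)$; this can be computed on-line when $q$ is presented. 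The coloring will assign to $q$ the color opposite to that of $\pi(q)$ when $\pi(q)$ exists, and an arbitrary color otherwise.

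The heart of the argument is the following claim, which I will prove next: \emph{if $p \includes r \includes q$ are all primary, then no edge of $G$ connects $S(p)$ to $S(q)$.} To prove this, I will pick $x \in S(p)$ and $y \in S(q)$ with $xy \in E(G)$ and derive a contradiction. The case $x = p$ is immediate from \ref{item:A1}, so $x$ is secondary in $S(p)$ and there is some $x''$ with $x'' \overlaps p$ and $x'' \overlaps x$. Primariness of $p$ and $r$ forbids a common overlapper, so $x''$ does not overlap $r$; then \ref{item:A3}, applied to $x'' \overlaps p$ and $p \includes r$, forces $x'' \parallel r$. Repeated use of \ref{item:A4} propagates this parallelism along $r \prec q \prec y$ (or directly if $y = q$), giving $x'' \parallel y$; and applying \ref{item:A4} to $x'' \parallel r$ and $r \prec x$ would give $x'' \parallel x$, contradicting $x'' \overlaps x$. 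Hence $x \prec r$, and a case analysis on the relation between $x$ and $r$ finishes the proof: $x \includes r$ propagates via \ref{item:A1} to $x \includes y$, contradicting adjacency of $x$ and $y$; $x \parallel r$ propagates via \ref{item:A4} to $x \parallel y$; and $x \overlaps r$, combined with $r \includes q$ via \ref{item:A3}, gives either $x \overlaps q$ (which makes $q$ secondary via the witness $(x, r)$, contradicting primariness of $q$) or $x \parallel q$ (propagating by \ref{item:A4} to $x \parallel y$).

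With the claim in hand, the lemma follows. Suppose $p, q \in P$ receive the same color and $pq \notin E(G[P])$, and assume without loss of generality that $p \prec q$. If $p \parallel q$, then for any $x \in S(p)$ with $p \includes x$, a short derivation from \ref{item:A1}, \ref{item:A2}, and \ref{item:A4} forces $x \parallel q$, and then \ref{item:A4} yields $x \parallel y$ for every $y \in S(q)$; thus the non-adjacency conclusion holds regardless of the coloring. If $p \includes q$, the coloring rule together with equality of colors forces $p \neq \pi(q)$, so $r \coloneq \pi(q)$ satisfies $p \includes r \includes q$ strictly, and the claim above applies. The main obstacle in this whole argument is the case analysis in the claim, specifically verifying that the ``parallel propagation'' emanating from $x''$ via \ref{item:A4} really does reach $x$ (or forces a contradiction with $x$'s relation to $r$) in every possible configuration of the vertices $x, r, q, y$ in the order $\prec$.
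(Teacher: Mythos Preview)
Your argument is correct, and it takes a genuinely different route from the paper's.

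The paper does not use the $\includes$-forest on $P$. Instead it shows directly that for each $q\in P$ there is at most one ``dangerous'' predecessor $p\in P$ (meaning $p\prec q$, $pq\notin E(G)$, and some $x\in S(p)$ satisfies $x\overlaps q$), via an intermediate property: whenever $p\prec q$, $pq\notin E(G)$, $x\in S(p)$, $q\includes y$, and $xy\in E(G)$, one must have $p\includes q$ and $x\overlaps q$. The $2$-coloring then simply distinguishes $q$ from this unique dangerous predecessor. Your approach replaces this with the structural observation that the primary $\includes$-ancestors of $q$ form a chain, defines $\pi(q)$ as the $\includes$-minimal one, and colors by depth parity; your key claim (no $S(p)$--$S(q)$ edge when a primary $r$ sits strictly between them) then does the work. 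In fact your claim implies that the paper's dangerous predecessor, whenever it exists, coincides with $\pi(q)$: if some $x\in S(p)$ had $x\overlaps q$ and there were a primary $r$ with $p\includes r\includes q$, the edge $xq$ between $S(p)$ and $S(q)$ would contradict your claim. So the two colorings agree whenever the choice matters; your version just makes the tree structure explicit, which is arguably cleaner to state.

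Two minor remarks on the write-up. In the claim, the conclusion $x''\parallel y$ is never used afterward; the only thing you need from $x''\parallel r$ is the contradiction with $x''\overlaps x$ when $r\prec x$, forcing $x\prec r$. You can drop the sentence about propagating to $y$. Also, in the $p\parallel q$ paragraph you write ``for any $x\in S(p)$ with $p\includes x$''; the case $x=p$ should be mentioned separately (it is immediate from $p\parallel q$ and \ref{item:A4}), and for $x\neq p$ you should make explicit that $q\prec x$ is impossible (since $p\parallel q$ and $q\prec x$ would give $p\parallel x$ by \ref{item:A4}, contradicting $p\includes x$), so that $x\prec q$ and the \ref{item:A1}/\ref{item:A2} argument applies.
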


\begin{proof}
We make the following two observations:
\begin{enumerate}
\item\label{item:sec1} If $p,q\in P$, $p\prec q$, $pq\notin E(G)$, $x\in S(p)$, $y\in S(q)$, and $xy\in E(G)$, then $x\overlaps q$.
\item\label{item:sec2} For every $q\in P$, there is at most one vertex $p\in P$ with the following properties: $p\prec q$, $pq\notin E(G)$, and there is $x\in S(p)$ with $x\overlaps q$.
\end{enumerate}
Once they are established, we can argue as follows.
By \ref{item:sec2}, $P$ can be colored on-line using two colors so as to distinguish any $p,q\in P$ such that $p\prec q$, $pq\notin E(G)$, and there is $x\in S(p)$ with $x\overlaps q$.
It follows from \ref{item:sec1} that if $p,q\in P$, $p\prec q$, $pq\notin E(G)$, $x\in S(P)$, $y\in S(q)$, and $xy\in E(G)$, then $x\overlaps q$ and therefore $p$ and $q$ have distinct colors.

It remains to prove \ref{item:sec1} and \ref{item:sec2}.
First, we prove the following property:
\begin{enumerate}
\setcounter{enumi}{2}
\item\label{item:sec3} If $p,q\in P$, $p\prec q$, $pq\notin E(G)$, $x\in S(p)$, $q\includes y$, and $xy\in E(G)$, then $p\includes q$ and $x\overlaps q$.
\end{enumerate}

Suppose $p\parallel q$.
We cannot have $q\prec x$, as this would imply $p\parallel x$, by \ref{item:A4}.
Hence $x\prec q$.
It follows from \ref{item:A4} that $p\parallel y$.
But $p\includes x$ and $x\overlaps y$ imply $p\includes y$ or $p\overlaps y$, by \ref{item:A2}, thus contradicting $p\parallel y$.
Therefore, we cannot have $p\parallel q$.
We cannot have $p\overlaps q$ either, as $pq\notin E(G)$.
So we have $p\includes q$.
Since $x\in S(p)$, there is some $u$ with $u\overlaps p$ and $u\overlaps x$.
We cannot have $u\includes q$, because this would contradict \ref{item:A3}.
We cannot have $u\overlaps q$, because then $q$ would be secondary.
Hence $u\parallel q$.
This implies $x\prec q$, whence we have $p\includes x\prec q\includes y$ and $x\overlaps y$.
We cannot have $x\includes q$, because this would imply $x\includes y$.
We cannot have $x\parallel q$ either, by \ref{item:A4}.
Hence $x\overlaps q$.

Now, \ref{item:sec1} follows immediately from \ref{item:sec3}.
To see \ref{item:sec2}, suppose there are $p_1,p_2,q\in P$ such that $p_1\prec p_2\prec q$, $p_1q\notin E(G)$, $p_2q\notin E(G)$, and there are $x_1\in S(p_1)$ and $x_2\in S(p_2)$ with $x_1\overlaps q$ and $x_2\overlaps q$.
By \ref{item:sec3}, we have $p_1\includes q$ and $p_2\includes q$.
We cannot have $p_1\overlaps p_2$, as this would contradict \ref{item:A3} for $p_1$, $p_2$ and $q$.
Hence $p_1p_2\notin E(G)$.
We apply \ref{item:sec3} to $p_1$, $p_2$, $x_1$ and $q$ to conclude that $x_1\overlaps p_2$.
Now, since $x_1\overlaps p_2\includes q$ and $x_1\overlaps q$, we conclude that $q$ is secondary, which is a contradiction.
\end{proof}

The following lemma will allow us to color $G[S(p)]$ for every $p\in P$.

\begin{lemma}
\label{lem:secondary-common}
For every\/ $p\in P$, there is\/ $x\in V(G)$ with\/ $x\overlaps y$ for all\/ $y\in S(p)$.
\end{lemma}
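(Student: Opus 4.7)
The plan is to take $x$ to be a witness of the latest-presented secondary vertex in $S(p)$. Since we are inside a single presentation scenario of the on-line game $\gamABS(k,b)$, the presentation order $\prec$ on $V(G)$ is total, so (assuming $S(p) \neq \{p\}$) the set $S(p) \setminus \{p\}$ has a well-defined $\prec$-maximum element, which I will call $z^{*}$. By the very definition of membership in $S(p)$, there exists some $x \in V(G)$ with $x \overlaps p$ and $x \overlaps z^{*}$; take any such $x$ as the candidate common witness.

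The only nontrivial check is to verify $x \overlaps y$ for every $y \in S(p) \setminus \{p, z^{*}\}$. Any such $y$ is secondary with $p \includes y$, and by the maximality of $z^{*}$ we have $y \prec z^{*}$. Axiom \ref{item:A3} applied to $x \overlaps p$ and $p \includes y$ leaves only the possibilities $x \overlaps y$ or $x \parallel y$. The key step, and the main (small) obstacle, is to rule out the parallel case: if we had $x \parallel y$, then axiom \ref{item:A4} applied to $x \parallel y$ and $y \prec z^{*}$ would force $x \parallel z^{*}$, contradicting the defining property $x \overlaps z^{*}$. Hence $x \overlaps y$, as required.

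The degenerate case $S(p) = \{p\}$ is either immediate or vacuous in the subsequent applications of the lemma, so the argument above carries the substantive content. The conceptual point is that axiom \ref{item:A4} makes ``$\parallel$'' propagate forward along $\prec$, which is precisely what forces the witness of the \emph{latest} secondary vertex of $S(p)$ to be compatible with every earlier member of $S(p)$.
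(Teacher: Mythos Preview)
Your proof is correct and essentially identical to the paper's own argument: take the $\prec$-latest element of $S(p)$, use its defining witness $x$, and for each earlier $y\in S(p)$ apply \ref{item:A3} to $x\overlaps p$, $p\includes y$ and then rule out $x\parallel y$ via \ref{item:A4}. If anything, your version is slightly tidier in separating out $y=p$ and the degenerate case $S(p)=\{p\}$, which the paper glosses over.
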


\begin{proof}
Let $p\in P$.
Let $z$ be the latest presented vertex in $S(p)$.
It follows that there is $x\in V(G)$ such that $x\overlaps p$ and $x\overlaps z$.
Now, take any $y\in S(p)\setminus\{z\}$.
We have $x\overlaps p$ and $p\includes y$, so $x\overlaps y$ or $x\parallel y$, by \ref{item:A3}.
We cannot have $x\parallel y$, as this would imply $x\parallel z$, by \ref{item:A4}.
Hence $x\overlaps y$.
\end{proof}

It follows from Lemma \ref{lem:secondary-common} that $\omega(G[S(p)])\leq k-1$ for every $p\in P$.
This will allow us to use induction to color every $G[S(p)]$ in the abstract overlap game.
For the clean interval overlap game, instead of induction, we will use the following direct argument.

\begin{lemma}
\label{lem:overlap-game-secondary}
If\/ $G$ is a clean interval overlap graph presented on-line in the game\/ $\gamIOV(k,b)$ or\/ $\gamIOV_3(k)$, then, for every\/ $p\in P$, the graph\/ $G[S(p)]$ can be properly colored on-line using at most\/ $\smash{\binom{k}{2}}$ colors.
\end{lemma}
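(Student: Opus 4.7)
The plan is to reduce on-line coloring of $G[S(p)]$ to the up-growing co-comparability game $\gamCOCO(k-1)$ and apply Felsner's Theorem \ref{thm:coco-game}. I would first invoke Lemma \ref{lem:secondary-common} to obtain $x\in V(G)$ with $x\overlaps y$ for every $y\in S(p)$, which already gives $\omega(G[S(p)])\leq k-1$. Since $p\includes y$ for every $y\in S(p)\setminus\{p\}$, the vertex $p$ is isolated in $G[S(p)]$, so I can color $p$ with an arbitrary color, and it suffices to color $G[S(p)\setminus\{p\}]$ on-line with $\binom{k}{2}$ colors.

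The crucial geometric step is to show that all intervals $\{\mu(y)\colon y\in S(p)\setminus\{p\}\}$ share a common point. Set $c=\rightside(\mu(x))$. From $x\overlaps p$, together with $\leftside(\mu(x))<\leftside(\mu(p))$ forced by the game rule, one deduces $\leftside(\mu(p))<c<\rightside(\mu(p))$. For any $y\in S(p)\setminus\{p\}$ we have $\mu(y)\subset\mu(p)$, and $\mu(x)$ overlaps $\mu(y)$ with $\leftside(\mu(x))<\leftside(\mu(y))$, which forces $\leftside(\mu(y))\leq c$. Here is where cleanness of $\mu$ is essential: the overlapping pair $\mu(x),\mu(p)$ admits no $\mu(z)\subset\mu(x)\cap\mu(p)$, and since $\mu(y)\subset\mu(p)$ already holds, this rules out $\mu(y)\subset\mu(x)$, giving $\rightside(\mu(y))>c$. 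Thus $c\in\mu(y)$.

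With a common point in place, I would consider the partial order on $S(p)\setminus\{p\}$ defined by $y_1\leq y_2\iff\mu(y_1)\supset\mu(y_2)$. Any two intervals both containing $c$ must either nest or overlap, so $\leq$-incomparability coincides with adjacency in $G$; consequently $G[S(p)\setminus\{p\}]$ is the incomparability graph of $\leq$, with width equal to its clique number, at most $k-1$. Moreover, by the presentation rule of $\gamIOV$, a newly presented $y$ has $\leftside(\mu(y))$ strictly greater than all previously presented left endpoints, so no earlier $y'$ can satisfy $\mu(y)\supset\mu(y')$ (this would require $\leftside(\mu(y))\leq\leftside(\mu(y'))$); hence $y$ is maximal in $\leq$ at the moment of its presentation. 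This is exactly the up-growing condition of $\gamCOCO(k-1)$, and so Felsner's on-line algorithm from Theorem \ref{thm:coco-game} colors $G[S(p)\setminus\{p\}]$ with at most $\binom{(k-1)+1}{2}=\binom{k}{2}$ colors. The same argument applies verbatim to $\gamIOV_3(k)$.

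The main obstacle I anticipate is the common-point verification, specifically ruling out $\rightside(\mu(y))\leq c$, since without the cleanness hypothesis one could have $\mu(y)\subset\mu(x)\cap\mu(p)$, which would invalidate the common-point structure and break the reduction to $\gamCOCO(k-1)$.
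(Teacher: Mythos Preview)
Your proposal is correct and follows essentially the same route as the paper: invoke Lemma \ref{lem:secondary-common}, observe that all intervals in $S(p)$ straddle the point $\rightside(\mu(x))$, define the nesting order, verify it is presented in an up-growing manner, and apply Theorem \ref{thm:coco-game}.

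One remark: your detour through cleanness is unnecessary. You already have $x\overlaps y$ from Lemma \ref{lem:secondary-common}, which by definition means $\mu(x)$ and $\mu(y)$ \emph{overlap}, i.e.\ intersect but are not nested; combined with $\leftside(\mu(x))<\leftside(\mu(y))$ this directly forces $\leftside(\mu(y))<\rightside(\mu(x))<\rightside(\mu(y))$, so the cleanness axiom never enters. The paper simply writes this chain of inequalities for every $y\in S(p)$ and proceeds. Likewise, there is no need to separate $p$ from the rest of $S(p)$: since $x\overlaps p$ as well, $p$ satisfies the same inequality and is just the minimum element of the order.
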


\begin{proof}
Let $\mu$ denote the clean interval overlap representation of $G$ presented in the game together with $G$.
Consider one of the sets $S(p)$ being built during the game.
By Lemma \ref{lem:secondary-common}, there is $x\in V(G)$ such that $\leftside(\mu(x))<\leftside(\mu(y))<\rightside(\mu(x))<\rightside(\mu(y))$ for every $y\in S(p)$.
Define a partial order $<$ on $S(p)$ so that $y<z$ whenever $\leftside(\mu(y))<\leftside(\mu(z))$ and $\rightside(\mu(y))>\rightside(\mu(z))$.
It follows that $G[S(p)]$ is the incomparability graph of $S(p)$ with respect to $<$.
Moreover, the set $S(p)$ is built in the up-growing manner with respect to $<$, that is, every vertex is maximal with respect to $<$ at the moment it is presented.
Since $\omega(G[S(p)])\leq k-1$, it follows from Theorem \ref{thm:coco-game} that the graph $G[S(p)]$ can be properly colored on-line using $\smash{\binom{k}{2}}$ colors.
\end{proof}

To prove Lemmas \ref{lem:clean-subtree-coloring} and \ref{lem:overlap-game-coloring}, we will color the graph $G[P]$ in two steps, expressed by Lemmas \ref{lem:primary-coloring-1} and \ref{lem:primary-coloring-2}.
Only the first step is needed for the proof of Lemma \ref{lem:overlap-game-K3-coloring}.

\begin{lemma}
\label{lem:primary-coloring-1}
The graph\/ $G[P]$ can be colored on-line using\/ $k$ colors so that the following holds for any\/ $x,y,z\in P$ of the same color:
\begin{equation*}
\label{eq:col-prop}
\text{if\/ $x\overlaps y\prec z$, then\/ $x\parallel z$ or\/ $y\parallel z$;}\tag{$*$}
\end{equation*}
in particular, the coloring of\/ $G[P]$ is triangle-free.
\end{lemma}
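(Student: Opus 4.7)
The plan is to design an on-line first-fit coloring algorithm based on the following structural bound: for every primary vertex $z\in P$ that has been presented, the set $N(z)=\{y\in P:y\prec z,\ y\overlaps z\}$ of primary overlap-predecessors of $z$ forms a clique in $G$, and in particular $|N(z)|\leq k-1$.

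I would prove this structural claim by taking $y_1,y_2\in N(z)$ with $y_1\prec y_2$ and ruling out the two non-overlap possibilities. First, $y_1\parallel y_2$ is impossible because \ref{item:A4} applied to $y_1\parallel y_2$ and $y_2\prec z$ would give $y_1\parallel z$, contradicting $y_1\overlaps z$. Second, $y_1\includes y_2$ is impossible because $y_1$ is primary, $y_1\includes y_2$, and $z$ is a common overlap-neighbor of $y_1$ and $y_2$, so $y_2$ would be secondary by definition (with primary witness $y=y_1$ and witness $x=z$), contradicting $y_2\in P$. Hence $y_1\overlaps y_2$, so $N(z)\cup\{z\}$ is a clique in $G$, and $\omega(G)\leq k$ forces $|N(z)|\leq k-1$.

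Given this bound, the algorithm I propose is: assign $c(z)$ to be the smallest color in $\{1,\ldots,k\}$ that appears at most once among the already-colored vertices of $N(z)$. Since $|N(z)|\leq k-1$, at most $\lfloor(k-1)/2\rfloor$ colors occur at least twice in $N(z)$, so at least one color in $\{1,\ldots,k\}$ is admissible and $c(z)\leq k$ is always attainable. To verify \eqref{eq:col-prop}, I would suppose for contradiction that $x,y,z\in P$ share a color $c$ with $x\overlaps y\prec z$ and that both $x\not\parallel z$ and $y\not\parallel z$. If $y\includes z$, then the primariness of $z$ (with $y$ as the primary witness) forces $x\not\overlaps z$; combined with $x\not\parallel z$ this gives $x\includes z$, contradicting \ref{item:A3} applied to $x\overlaps y\includes z$. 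Hence $y\overlaps z$. Analogously, if $x\includes z$, then the primary vertex $x$ together with $y$ (which overlaps both $x$ and $z$) witnesses that $z$ is secondary, contradicting $z\in P$. Hence $x\overlaps z$. But then $x$ and $y$ are two distinct elements of $N(z)$ coloured $c=c(z)$, contradicting the choice of $c(z)$. The triangle-free conclusion follows at once: a monochromatic triangle $\{x,y,z\}\subseteq P$ with $x\prec y\prec z$ would satisfy $x\overlaps y$, $y\overlaps z$, $x\overlaps z$, so neither $x\parallel z$ nor $y\parallel z$ holds and \eqref{eq:col-prop} is violated.

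The main difficulty is the structural claim that $N(z)$ is a clique; once this is in place, the bound on $c(z)$ and the verification of \eqref{eq:col-prop} are routine applications of \ref{item:A3}--\ref{item:A4} and of the definition of a primary vertex. I expect the $y\includes z$ and $x\includes z$ cases in the verification to be the most delicate point to write out carefully, but each of them is resolved by a single invocation of primariness or \ref{item:A3}.
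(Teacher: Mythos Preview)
There is a genuine gap: you treat $\overlaps$ as if it were symmetric, but in this paper $a\overlaps b$ entails $a\prec b$, and the witness $x$ in the definition of ``secondary'' must therefore \emph{precede} both $y$ and the vertex being classified. In your structural claim you propose $z$ as the witness making $y_2$ secondary, but $z$ comes after $y_1$ and $y_2$ (indeed, $z$ has not even been presented when $y_2$ is classified). The claim itself is false: with $k=2$, take $y_1\prec y_2\prec z$ with $y_1\includes y_2$, $y_1\overlaps z$, $y_2\overlaps z$; this is consistent with \ref{item:A1}--\ref{item:A4}, all three vertices are primary, and $N(z)=\{y_1,y_2\}$ has size $2>k-1$ and is not a clique. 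The same direction error invalidates your treatment of the case $x\includes z$ in the verification, where you would need $y\overlaps x$ (i.e.\ $y\prec x$) but only have $x\overlaps y$.

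Worse, the algorithm itself violates \eqref{eq:col-prop}. With $k=2$, take $x\prec y\prec z$ with $x\overlaps y$, $y\overlaps z$, $x\includes z$; none of \ref{item:A1}--\ref{item:A4} has a satisfied premise, $\omega=2$, and one checks that all three are primary. Your rule gives $c(x)=1$; then $N(y)=\{x\}$, color $1$ appears once, so $c(y)=1$; then $N(z)=\{y\}$ (note $x\notin N(z)$ since $x\includes z$), color $1$ appears once, so $c(z)=1$. But now $x\overlaps y\prec z$ with neither $x\parallel z$ nor $y\parallel z$. The paper avoids this by forbidding, when coloring $z$, every color used on $Y=\{y\in P:\exists\,x\in P\text{ with }x,y,z\text{ violating }\eqref{eq:col-prop}\}$; it shows $Y\cup\{z\}$ is a clique via a downward-closure step (if $x,y,z$ violate \eqref{eq:col-prop} and $y\prec y'\prec z$, then so do $x,y,y'$) combined with the implication $y\overlaps z$, which is the one part of your verification that is correct.
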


\begin{proof}
We use the following two observations:
\begin{enumerate}
\item\label{item:col-prop-1} If $x,y,z$ do not satisfy \eqref{eq:col-prop}, then neither do $x,y,y'$ for any $y'$ with $y\prec y'\prec z$.
\item\label{item:col-prop-2} If $x,y,z$ are in $P$ and do not satisfy \eqref{eq:col-prop}, then $y\overlaps z$.
\end{enumerate}
To see \ref{item:col-prop-1}, suppose that $x\overlaps y\prec y'\prec z$ and $x,y,y'$ satisfy \eqref{eq:col-prop}, that is, $x\parallel y'$ or $y\parallel y'$.
By \ref{item:A4}, this yields $x\parallel z$ or $y\parallel z$, respectively, so $x,y,z$ satisfy \eqref{eq:col-prop}.
To see \ref{item:col-prop-2}, suppose $x\overlaps y\includes z$.
By \ref{item:A3}, this yields $x\overlaps z$ or $x\parallel z$.
We cannot have $x\overlaps z$, as then $z$ would be secondary.
Hence $x\parallel z$, so $x,y,z$ satisfy \eqref{eq:col-prop}.

The coloring of $G[P]$ is constructed as follows.
At the time when a vertex $z\in P$ is presented, consider the set $Y$ of all vertices $y\in P$ for which there is $x\in P$ such that $x,y,z$ do not satisfy \eqref{eq:col-prop}.
By \ref{item:col-prop-1}, for any $y,y'\in Y\cup\{z\}$ with $y\prec y'$, there is $x\in P$ such that $x,y,y'$ do not satisfy \eqref{eq:col-prop}.
This and \ref{item:col-prop-2} imply that $Y\cup\{z\}$ is a clique in $G[P]$, and hence $|Y|\leq k-1$.
Therefore, at least one of the $k$ colors is not used on any vertex from $Y$, and we use such a color for $z$.
It is clear that the coloring of $G[P]$ thus obtained satisfies the condition of the lemma.
\end{proof}

\emph{First-fit} is the on-line algorithm that colors the graph properly with positive integers in a greedy way: when a new vertex $v$ is presented, it is assigned the least color that has not been used on any of the neighbors of $v$ presented before $v$.

\begin{theorem}[folklore]
\label{thm:first-fit}
First-fit uses at most\/ $\lfloor\log_2n\rfloor+1$ colors on any forest with\/ $n$ vertices presented in any order.
\end{theorem}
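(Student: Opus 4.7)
The plan is to prove the stronger statement that if First-fit assigns color $c$ to some vertex $v$ in a forest presented in any order, then the connected component of $v$ contains at least $2^{c-1}$ vertices presented by the time $v$ is colored. The theorem follows immediately: if First-fit uses $c$ colors on a forest with $n$ vertices, then $n\geq 2^{c-1}$, so $c\leq\lfloor\log_2 n\rfloor+1$.

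The proof proceeds by induction on $c$. The base case $c=1$ is trivial since $v$ itself accounts for one vertex. For the inductive step, suppose First-fit assigns color $c\geq 2$ to $v$. By the greedy rule, for each $i\in\{1,\ldots,c-1\}$ there is a neighbor $u_i$ of $v$, presented strictly before $v$, that received color $i$; the $u_i$ are pairwise distinct (they have distinct colors). Root the tree containing $v$ at $v$; then each $u_i$ is a child of $v$, and the subtrees $T_1,\ldots,T_{c-1}$ rooted at $u_1,\ldots,u_{c-1}$ are pairwise vertex-disjoint and disjoint from $\{v\}$.

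The key observation is that when First-fit processed $u_i$, the vertex $v$ had not yet been presented, so every neighbor of $u_i$ already present at that moment lies in $T_i$ (the only neighbors of $u_i$ in the forest are its tree-neighbors, namely $v$ and the children of $u_i$, all of which other than $v$ belong to $T_i$). More generally, for any vertex $w\in T_i$ different from $u_i$, every neighbor of $w$ in the ambient forest lies in $T_i$. Consequently, the restriction of the First-fit execution to the vertices of $T_i$ presented no later than $u_i$ coincides with an independent First-fit execution on that subforest of $T_i$, and $u_i$ still receives color $i$ in this restricted execution. The inductive hypothesis applied to this subforest yields at least $2^{i-1}$ vertices of $T_i$ presented by the time $u_i$ is colored, and in particular $|T_i|\geq 2^{i-1}$.

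Summing over $i$ and adding $v$, the component of $v$ contains at least
\[
1+\sum_{i=1}^{c-1}2^{i-1}=1+(2^{c-1}-1)=2^{c-1}
\]
vertices presented no later than $v$, completing the induction. The main (minor) subtlety, and essentially the only thing to be careful about, is verifying that restricting First-fit to the vertices of $T_i$ that have appeared by step $u_i$ genuinely produces the same colors as an autonomous First-fit execution on that subforest; this hinges on the fact that $v$ is the unique neighbor of $u_i$ outside $T_i$ and has not yet been presented, so no cross-edges to already-colored vertices are lost in the restriction.
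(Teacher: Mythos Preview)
The paper does not actually supply a proof of this theorem; it is stated as folklore and used as a black box. Your argument is the standard one and is correct: induct on the color $c$, use the $c-1$ earlier-presented neighbors $u_1,\ldots,u_{c-1}$ of $v$ with colors $1,\ldots,c-1$, observe that the subtrees $T_i$ below the $u_i$ (when the tree is rooted at $v$) are pairwise disjoint, and verify that the First-fit coloring restricted to the vertices of $T_i$ presented up to the time of $u_i$ coincides with an autonomous First-fit run on that subforest because the only edge leaving $T_i$ goes to $v$, which has not yet appeared. The inductive bound $|T_i|\geq 2^{i-1}$ and the summation $1+\sum_{i=1}^{c-1}2^{i-1}=2^{c-1}$ finish it.

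One small point of presentation: your inductive statement speaks of ``the connected component of $v$'' in the full forest, whereas when you apply the hypothesis to $u_i$ you are really working in the subforest $T_i$ restricted to vertices presented by the time of $u_i$. It would be slightly cleaner to phrase the inductive claim purely in terms of the on-line process: ``if First-fit assigns color $c$ to $v$, then at least $2^{c-1}$ vertices have been presented in $v$'s component by that moment.'' Then the hypothesis applies directly to the restricted run on $T_i$, which is itself a forest presented on-line, with $u_i$ receiving color $i$. This is only a cosmetic tightening; the mathematics in your write-up is sound.
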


Let $P'$ be a subset of $P$ being built on-line during the game so that any $x,y,z\in P'$ satisfy the condition \eqref{eq:col-prop} of Lemma \ref{lem:primary-coloring-1}.
For the proofs of Lemmas \ref{lem:clean-subtree-coloring} and \ref{lem:overlap-game-coloring}, we apply First-fit to obtain a proper coloring of $G[P']$.

\begin{lemma}
\label{lem:primary-coloring-2}
First-fit colors the graph\/ $G[P']$ properly on-line using\/ $O(\log b)$ colors.
\end{lemma}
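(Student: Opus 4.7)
The plan is to combine two structural observations about $G[P']$ with the block constraint in order to reduce the analysis of First-fit to the standard bound of Theorem \ref{thm:first-fit}.

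First, I would establish two structural facts. Condition \eqref{eq:col-prop} immediately yields that $G[P']$ is triangle-free: if $x\prec y\prec z$ formed a triangle, then $x\overlaps y$, $x\overlaps z$, and $y\overlaps z$, contradicting the conclusion of \eqref{eq:col-prop} applied to $(x,y,z)$. Moreover, for every $z\in P'$ the set of ancestor-neighbors $N^-(z)=\{y\in P'\colon y\prec z,\ y\overlaps z\}$ forms a chain under $\includes$: indeed, given $y_1\prec y_2$ in $N^-(z)$, $y_1\parallel y_2$ is impossible because \ref{item:A4} would give $y_1\parallel z$, contradicting $y_1\overlaps z$; and $y_1\overlaps y_2$ is impossible because $\{y_1,y_2,z\}$ would be a triangle. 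So $y_1\includes y_2$.

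Second, I would use condition~(v) to assign each vertex a block index $\beta(v)\in\{1,\dots,b\}$. Since every edge of $G[P']$ lies inside no single block, every edge has endpoints of strictly different block indices, and in particular $\beta(y)<\beta(z)$ for each $y\in N^-(z)$.

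The heart of the proof is a doubling bound: if First-fit assigns color $c$ to some $z\in P'$, then there exist $\geq 2^{c-1}$ distinct block indices among the vertices participating in First-fit's ``forced-color witness'' for $z$. The proof is by induction on $c$. Given $z$ of color $c$, the chain $N^-(z)$ contains vertices $u$ and $u'$ of colors $c-1$ and $c-2$ (since all of $1,\dots,c-1$ must appear on the chain). By the inductive hypothesis, the witness of $u$ occupies $\geq 2^{c-2}$ blocks strictly below $\beta(z)$. One then applies the inductive hypothesis to $u'$ to obtain another $\geq 2^{c-2}$ blocks; using the $\includes$-chain structure of $N^-(z)$ together with axioms \ref{item:A1}--\ref{item:A4} and triangle-freeness, the two block ranges are shown to be disjoint, yielding the doubling. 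Since only $b$ blocks are available, this forces $2^{c-1}\leq b$, hence $c\leq\lfloor\log_2 b\rfloor+1=O(\log b)$.

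The main obstacle will be making the disjointness step of the doubling argument rigorous: the two witnesses descend from different vertices of the chain $N^-(z)$, and one must exclude the possibility that a vertex low in one witness coincides with (or at least shares a block index with) a vertex in the other. Establishing the right invariant to carry through induction—most cleanly by phrasing it as ``the witness of a color-$c$ vertex is confined to a contiguous range of blocks of size at least $2^{c-1}$, strictly below the block of the vertex''—should make the step go through once \ref{item:A1}--\ref{item:A4} are used to control how the $\includes$-chains from consecutive rounds interleave.
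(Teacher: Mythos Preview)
Your structural observations are correct: $G[P']$ is triangle-free and each $N^-(z)$ is an $\includes$-chain. But the doubling argument breaks down. First, the arithmetic: a vertex $u'$ of color $c-2$ yields by induction only $2^{c-3}$ blocks, not $2^{c-2}$, so combining the witnesses of $u$ (color $c-1$) and $u'$ (color $c-2$) cannot reach $2^{c-1}$. To get the sum $1+1+2+\cdots+2^{c-2}=2^{c-1}$ you would need all of $v_1,\ldots,v_{c-1}\in N^-(z)$ with pairwise block-disjoint witnesses, and nothing you have established rules out two such witnesses occupying the same block (the vertices would be distinct by the triangle-free property, but distinct vertices may share a block). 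Your proposed ``contiguous range'' invariant has no evident proof: the axioms \ref{item:A1}--\ref{item:A4} give no control over which block an $\includes$-related vertex lands in, so there is no reason the blocks hit by a witness tree should form an interval. This is a genuine gap, not a technicality.

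The paper's argument looks in the opposite direction --- at right-neighbors rather than left-neighbors --- and avoids any recursive witness construction. Let $R\subset P'$ be the vertices with no right-neighbor in $G[P']$. Using \eqref{eq:col-prop} and \ref{item:A4} one shows that every vertex of $P'\setminus R$ has at most one right-neighbor in $P'\setminus R$, so $G[P'\setminus R]$ is a forest; since First-fit on $P'$ restricted to $P'\setminus R$ coincides with First-fit on the forest $G[P'\setminus R]$, the maximum color $a$ satisfies $a-1\leq\lfloor\log_2|P'\setminus R|\rfloor+1$. To replace $|P'\setminus R|$ by $b$, one throws away the color-$1$ vertices to get a set $Q\subset P'\setminus R$ in which every vertex has both a left- and a right-neighbor in $P'$, and then shows directly (using \ref{item:A1}, \ref{item:A4}, and the fact that the vertices are primary) that each block contains at most one vertex of $Q$. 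This gives $a\leq\lfloor\log_2 b\rfloor+3$ without any inductive doubling.
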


\begin{proof}
Let $R$ denote the set of vertices in $P'$ that have no neighbor to the right in $G[P']$.
We show that each member of $P'\setminus R$ has at most one neighbor to the right in $G[P'\setminus R]$.
Suppose to the contrary that there are $x,y,z\in P'\setminus R$ with $x\overlaps y\prec z$ and $x\overlaps z$.
Since $y\in P'\setminus R$, there is $z'\in P'$ such that $y\overlaps z'$.
Since $x\overlaps z$, we have $y\parallel z$, and since $y\overlaps z'$, we have $x\parallel z'$, because $x,y,z$ and $x,y,z'$ satisfy the condition \eqref{eq:col-prop} of Lemma \ref{lem:primary-coloring-1}.
However, we have $z\prec z'$ or $z'\prec z$, which implies either $y\parallel z'$ or $x\parallel z$, by \ref{item:A4}.
This contradiction shows that each member of $P'\setminus R$ has at most one neighbor to the right in $G[P'\setminus R]$.
In particular, $G[P'\setminus R]$ is a forest.

By the definition of $R$, the colors assigned by First-fit to the vertices in $P'\setminus R$ do not depend on the colors assigned to the vertices in $R$.
In particular, if we ran First-fit only on the graph $G[P'\setminus R]$, then we would obtain exactly the same colors on the vertices in $P'\setminus R$.
Let $a$ be the maximum color used by First-fit on $G[P']$.
Since there is a vertex in $P'$ with color $a$, there must be a vertex in $P'\setminus R$ with color $a-1$.
This, the fact that $G[P'\setminus R]$ is a forest, and Theorem \ref{thm:first-fit} yield $a\leq\lfloor\log_2{|P'|}\rfloor+2$.

We apply a similar reasoning to show $a\leq\lfloor\log_2b\rfloor+3$.
Recall the assumption that there is a partition of $V(G)$ into at most $b$ blocks of $\prec$-consecutive vertices such that no edge of $G[P']$ connects vertices in the same block.
Let $Q$ be the set obtained from $P'\setminus R$ by removing all vertices with color $1$.
If we ran First-fit only on $G[Q]$, then each vertex in $Q$ would get the color less by $1$ than the color it has received in the first-fit coloring of $G[P'\setminus R]$.
Therefore, our hypothetical run of First-fit on $G[Q]$ uses at least $a-2$ colors, which yields $a\leq\lfloor\log_2{|Q|}\rfloor+3$, by Theorem \ref{thm:first-fit}.
Now, it is enough to prove that each block $B$ of $\prec$-consecutive vertices of $G$ such that $G[B]$ has no edge can contain at most one vertex of $Q$, as this will imply $|Q|\leq b$.
Suppose to the contrary that there are two vertices $y_1,y_2\in Q\cap B$ with $y_1\prec y_2$.
By the assumption that $G[B]$ has no edge, we do not have $y_1\overlaps y_2$.
Each member of $Q$ has a neighbor to the left and a neighbor to the right in $G[P']$, neither of which can belong to $B$.
Therefore, there are $x,z\in P'$ such that $x\prec y_1\prec y_2\prec z$, $x\overlaps y_2$, and $y_1\overlaps z$.
We cannot have $y_1\parallel y_2$, as this and $y_2\prec z$ would imply $y_1\parallel z$, by \ref{item:A4}.
Hence $y_1\includes y_2$.
We cannot have $x\parallel y_1$, as this and $y_1\prec y_2$ would imply $x\parallel y_2$, by \ref{item:A4}.
Neither can we have $x\includes y_1$, as this and $y_1\includes y_2$ would imply $x\includes y_2$, by \ref{item:A1}.
Hence $x\overlaps y_1$.
This, $y_1\includes y_2$, and $x\overlaps y_2$ contradict the assumption that $y_2$ is primary.
We have thus shown $a=O(\log b)$, which completes the proof.
\end{proof}

\begin{proof}[Proof of Lemma \ref{lem:clean-subtree-coloring}]
The proof goes by induction on $k$.
The case $k=1$ is trivial.
Now, assume that $k\geq 2$ and the lemma holds for $k-1$.
By Lemma \ref{lem:primary-coloring-1}, $G[P]$ can be colored on-line using colors $1,\ldots,k$ so as to guarantee the condition \eqref{eq:col-prop} for any $x,y,z\in P$.
For $p\in P$, let $\phi(p)$ denote the color of $p$ in such a coloring.
For $i\in\{1,\ldots,k\}$, let $P_i=\{p\in P\colon\phi(p)=i\}$.
By Lemma \ref{lem:primary-coloring-2}, each $G[P_i]$ can be properly colored on-line using colors $1,\ldots,\ell$, where $\ell=O(\log b)$.
For $p\in P_i$, let $\psi(p)$ denote the color of $p$ in such a coloring.
For $i\in\{1,\ldots,k\}$ and $j\in\{1,\ldots,\ell\}$, let $P_{i,j}=\{p\in P_i\colon\psi(p)=j\}$.
By Lemma \ref{lem:secondary-coloring}, each set $P_{i,j}$ can be further $2$-colored on-line so as to distinguish any $p,q\in P_{i,j}$ for which there is some edge between $S(p)$ and $S(q)$.
Let $\zeta$ be such a $2$-coloring of each $P_{i,j}$ using colors $1$ and~$2$.
For each $p\in P$, it follows from Lemma \ref{lem:secondary-common} that $\omega(G[S(p)])\leq k-1$ and therefore, by the induction hypothesis, $G[S(p)]$ can be properly colored on-line using colors $1,\ldots,m$, where $m=O_k((\log b)^{k-2})$.
For $p\in P$ and $x\in S(p)$, let $\xi(x)$ denote the color of $x$ in such a coloring.
We color each vertex $x\in S(p)$ by the quadruple $(\phi(p),\psi(p),\zeta(p),\xi(x))$.
This is a proper coloring of $G$ using at most $2k\ell m=O_k((\log b)^{k-1})$ colors.
\end{proof}

\begin{proof}[Proof of Lemma \ref{lem:overlap-game-coloring}]
The proof goes as above with one change: for every $p\in P$, we apply Lemma \ref{lem:overlap-game-secondary} instead of induction to color $G[S(p)]$ properly using colors $1,\ldots,\smash{\binom{k}{2}}$.
This gives a proper coloring of $G$ using at most $2k\ell\smash{\binom{k}{2}}=O_k(\log b)$ colors.
\end{proof}

\begin{proof}[Proof of Lemma \ref{lem:overlap-game-K3-coloring}]
By Lemma \ref{lem:primary-coloring-1}, $G[P]$ can be triangle-free colored on-line using colors $1,\ldots,k$.
For $p\in P$, let $\phi(p)$ denote the color of $p$ in such a coloring.
For $i\in\{1,\ldots,k\}$, let $P_i=\{p\in P\colon\phi(p)=i\}$.
By Lemma \ref{lem:secondary-coloring}, each set $P_i$ can be further $2$-colored on-line so as to distinguish any $p,q\in P_i$ such that $pq\notin E(G)$ and there is some edge between $S(p)$ and $S(q)$.
Let $\zeta$ be such a $2$-coloring of each $P_i$ using colors $1$ and $2$.
For each $p\in P$, by Lemma \ref{lem:overlap-game-secondary}, $G[S(p)]$ can be properly colored on-line using colors $1,\ldots,\smash{\binom{k}{2}}$.
For $p\in P$ and $x\in S(p)$, let $\xi(x)$ denote the color of $x$ in such a coloring.
We color each vertex $x\in S(p)$ by the triple $(\phi(p),\zeta(p),\xi(x))$.
It follows that if $p,q\in P$, $x\in S(p)$, $y\in S(q)$, $(\phi(p),\zeta(p),\xi(x))=(\phi(q),\zeta(q),\xi(y))$, and $xy\in E(G)$, then $pq\in E(G)$.
Therefore, since $\phi$ is triangle-free, the coloring by triples is a triangle-free coloring of $G$ using at most $2k\smash{\binom{k}{2}}$ colors.
\end{proof}

Theorems \ref{thm:subtree} \ref{item:subtree-upper}--\ref{item:clean-subtree-upper} and \ref{thm:rectangle} now follow from Theorem \ref{thm:reduction-to-clean}, Lemmas \ref{lem:subtree-game} and \ref{lem:rectangle-to-game} (respectively), Lemma \ref{lem:heavy-paths} and the discussion that follows it, Lemmas \ref{lem:clean-subtree-coloring} and \ref{lem:overlap-game-coloring} (respectively), and Lemma \ref{lem:game-upper}.
Theorem \ref{thm:rectangle-K3} follows from Lemmas \ref{lem:rectangle-to-game}, \ref{lem:overlap-game-K3-coloring} and \ref{lem:game-upper}.

In the next section, we will prove that the proper coloring algorithm of clean subtree overlap graphs presented above uses the asymptotically optimal number of colors.

\section{Subtree overlap graphs with large chromatic number}
\label{sec:construction}

\def\present#1{{\tt present(}{\normalfont #1}{\tt)}}

In this final section, we will present a construction of clean subtree overlap graphs with chromatic number $\Theta_\omega((\log\log n)^{\omega-1})$ and thus prove Theorem \ref{thm:subtree}~\ref{item:clean-subtree-lower}.
To this end, we will prove the following.

\begin{lemma}
\label{lem:subtree-strategy}
For\/ $k,m\geq 1$, Presenter has a finite strategy to force Algorithm to use at least\/ $2m^{k-1}-1$ colors in\/ $2^{\smash{O_k(m)}}$ rounds of the game\/ $\gamABS(k)$.
Moreover, the number of presentation scenarios for all possible responses of Algorithm is\/ $2^{2^{\smash{O_k(m)}}}$.
\end{lemma}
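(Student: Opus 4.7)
My plan is to prove Lemma \ref{lem:subtree-strategy} by induction on $k$, constructing Presenter's strategy $\sigma_{k,m}$ by nesting copies of $\sigma_{k-1,m}$ inside a Bean-style outer doubling framework.

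\textbf{Base cases.} For $k = 1$, the constraint $\omega \leq 1$ eliminates all edges, so a single vertex forces the required $1 = 2m^0 - 1$ color in one round and one scenario. For $k = 2$, invoke the Bean/Erlebach--Fiala doubling strategy already used in Section \ref{sec:examples} for Burling's construction and in \cite{PKK+13}: it forces $c = 2m - 1$ colors in $2^{c-1} = 2^{O(m)}$ rounds of $\gamIOV(2) \subseteq \gamABS(2)$, producing $2^{2^{O(m)}}$ presentation scenarios.

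\textbf{Inductive step.} Assume Presenter has a finite strategy $\sigma_{k-1}$ in $\gamABS(k-1)$ forcing $c' = 2m^{k-2} - 1$ colors, with $R' = 2^{O_{k-1}(m)}$ rounds and $S' = 2^{2^{O_{k-1}(m)}}$ scenarios. The construction of $\sigma_k$ lifts this to $\gamABS(k)$ by nesting copies of $\sigma_{k-1}$ inside fresh \emph{anchor} vertices connected through an outer Bean-style doubling. Each anchor $p$ is declared to $\includes$ the vertices of one sub-scenario of $\sigma_{k-1}$; this sub-scenario contributes $c'$ colors which stay insulated from the sub-scenarios of other anchors by the $\parallel$ relation (via axiom \ref{item:A4}). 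The anchors themselves are threaded together by $\overlaps$-declarations that, through a Bean-style recursion, force Algorithm to commit to a genuinely new palette for each successive sub-scenario. A careful count gives $m c' + (m-1) = 2m^{k-1} - 1$ total forced colors.

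\textbf{Main obstacles.} I foresee three technical steps requiring care: (a) verifying axioms \ref{item:A1}--\ref{item:A4} across the composed declarations, particularly the downward propagation of $\overlaps$ through $\includes$-chains forced by (A2) and (A3); (b) maintaining the clique bound $\omega(G) \leq k$ --- inner vertices of one anchor's sub-scenario have clique size at most $k-1$ by induction, the $\parallel$-insulation prevents cross-sub-scenario cliques, and adjoining one anchor raises the local clique number to at most $k$; (c) solving the recurrences $R_k(m) \leq 2^{O(m)} R'(m) + \mathrm{poly}(m)$ and $S_k(m) \leq S'(m)^{2^{O(m)}} \cdot 2^{2^{O(m)}}$, which telescope to $R_k(m) = 2^{O_k(m)}$ and $S_k(m) = 2^{2^{O_k(m)}}$.

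The hardest step will be (b): anchor-level $\overlaps$-declarations interact through (A2) and (A3) with inner vertices of neighboring sub-scenarios, and one must ensure that this propagation never creates a $(k+1)$-clique. The abstract nature of $\gamABS(k)$ --- in which relations are declared freely subject only to axioms \ref{item:A1}--\ref{item:A4} and $\omega \leq k$, with no geometric realization demanded --- is precisely what makes this combinatorial bookkeeping manageable; the analogous argument for $\gamIOV(k)$, where interval geometry would further constrain the allowable declarations, appears substantially harder and is consistent with the paper's inability to sharpen Theorem \ref{thm:rectangle} \ref{item:clean-rectangle-upper} to match the subtree lower bound.
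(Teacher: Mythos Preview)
Your high-level plan---induction on $k$, with an outer doubling that forces each inner copy of $\sigma_{k-1}$ onto a fresh palette---is the right shape and matches the paper. But the specific architecture you describe has a real gap in the palette-exclusion step. With \emph{single} anchor vertices $p$ satisfying $p\includes\text{sub}_p$ and anchors related pairwise by $\overlaps$, an anchor can exclude only its own one color from a later sub-scenario; nothing forces the $m$ inner $(k-1)$-scenarios to use pairwise disjoint palettes, so the count $mc'+(m-1)$ does not follow. (Note also that axiom \ref{item:A4} rules out the $\parallel$-separated copies of Bean's original forest strategy: two $\parallel$-related blocks can never both $\overlaps$ a common later vertex, so any doubling here must nest via $\includes$, not sit in $\parallel$.)

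The missing idea, which the paper supplies, is that the set placing the next $(k-1)$-call in $\overlaps$ must itself be a \emph{large} return set $R_2$ from a \emph{same-level} recursive call \present{$k,\ell-1,m,\ldots$}, not a single anchor. This $R_2$ already carries $(\ell-1)m^{k-2}-1$ colors and is, crucially, a $\includes$-chain (property \ref{item:R1} of Lemma \ref{lem:strategy}), hence independent; declaring $R_2\overlaps R_3$ then raises $\omega$ by only one (so $\omega\leq(k-1)+1=k$) while forcing \emph{every} color on $R_3$ to avoid \emph{every} color on $R_2$. The doubling compares two nested same-level returns $R_1\includes R_2$: either $R_1\cup R_2$ already has $\ell m^{k-2}-1$ colors, or they share at least $(\ell-2)m^{k-2}$ colors, all present on $R_1$ and all disjoint from the $2m^{k-2}-1$ fresh colors on $R_3$, whence $R_1\cup R_3$ suffices. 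Iterating $\ell=2,\ldots,2m$ yields $2m^{k-1}-1$. Your single anchors are the degenerate $|R_2|=1$ case, which is exactly why the exclusion collapses to one color per step instead of a full palette.
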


We will generalize the strategy of Presenter forcing the use of $c$ colors in $2^{c-1}$ rounds of the game $\gamIOV(2)$, described in \cite{KPW15,PKK+13}.
The strategy that we will describe presents a set of vertices with relations $\includes$, $\overlaps$ and $\parallel$ that partition the order of presentation $\prec$ and satisfy the conditions \ref{item:A1}--\ref{item:A4}.
The graph $G$ is defined on these vertices by the relation $\overlaps$, that is, so that $xy\in E(G)$ if and only if $x\overlaps y$ or $y\overlaps x$.
The strategy ensures $\omega(G)\leq k$, so that all conditions of the definition of $\gamABS(k)$ are satisfied.

For convenience, we extend the notation $\includes$, $\overlaps$ and $\parallel$ to sets of vertices in a natural way.
For example, $X\includes Y$ denotes that $x\includes y$ for all $x\in X$ and $y\in Y$.
The strategy is expressed in terms of a recursive procedure {\tt present}, initially called as \present{$k$, $2m$, $m$, $\emptyset$, $\emptyset$}.
When a call to \present{$k$, $\ell$, $m$, $A_1$, $A_2$} occurs, the following context and conditions are assumed:
\begin{itemize}
\item some set of vertices, call it $P$, has been already presented in the game,
\item relations $\includes$, $\overlaps$ and $\parallel$ among the vertices in $P$ have been declared,
\item $A_1$ and $A_2$ are disjoint subsets of $P$ such that $A_1\includes A_2$,
\item $2\leq\ell\leq 2m$.
\end{itemize}
As a result of the call to \present{$k$, $\ell$, $m$, $A_1$, $A_2$} considered, the game is continued in such a way that the following happens:
\begin{itemize}
\item a new set of vertices is presented, call it $S$,
\item relations $\includes$, $\overlaps$ and $\parallel$ are declared between $P$ and $S$ in a fixed way---so that $A_1\includes S$, $A_2\overlaps S$, and $P\setminus(A_1\cup A_2)\parallel S$,
\item relations $\includes$, $\overlaps$ and $\parallel$ are declared among the vertices in $S$,
\item an independent set $R\subseteq S$ is picked so that Algorithm has used many (at least $\ell m^{k-2}-1$) colors on $R$.
\end{itemize}
Here is the procedure; it uses \textbf{return} statements to pass the set $R$ to the caller:

\begin{center}
\begin{algorithm}[H]
\BlankLine
\textbf{procedure \present{$k$, $\ell$, $m$, $A_1$, $A_2$}}\par
\uIf{$k=1$}{
  present a new vertex $y$ and declare that $x\includes y$ for every $x\in A_1$, $x\overlaps y$ for every $x\in A_2$, and $x\parallel y$ for every $x\notin A_1\cup A_2$ that has been presented before\;
  \Return $\{y\}$\;
}
\uElseIf{$\ell=2$}{
  \Return\ \present{$k-1$, $2m$, $m$, $A_1$, $A_2$}\;
}
\Else{
  $R_1\coloneq{}$\present{$k$, $\ell-1$, $m$, $A_1$, $A_2$}\;
  $R_2\coloneq{}$\present{$k$, $\ell-1$, $m$, $A_1\cup R_1$, $A_2$}\;
  \uIf{\upshape Algorithm has used at least $\ell m^{k-2}-1$ colors on $R_1\cup R_2$}{
    \Return $R_1\cup R_2$\;
  }
  \Else{
    $R_3\coloneq{}$\present{$k-1$, $2m$, $m$, $A_1\cup R_1$, $A_2\cup R_2$}\;
    \Return $R_1\cup R_3$\;
  }
}
\end{algorithm}
\end{center}

Before analyzing the details and proving correctness of the procedure, we first explain its core idea.
It lies in the case $k\geq 2$ and $3\leq\ell\leq 2m$ (the outer \textbf{else} block).
The requirements on the recursive calls to {\tt present} imply that $R_1$, $R_2$ and $R_3$ are independent sets, $R_1\includes R_2$, $R_1\includes R_3$, $R_2\overlaps R_3$, and appropriately many colors have been used on each of $R_1$, $R_2$ and $R_3$ (the last one considered only when used in the inner \textbf{else} block).
If the number of colors used on $R_1\cup R_2$ is large enough, then $R_1\cup R_2$ is a good candidate for the set $R$ to be returned by the current call to {\tt present}.
Otherwise, $R_1\cup R_3$ is a good candidate, as there are appropriately many common colors used on $R_1$ and $R_2$, and the colors used on $R_3$ must be different.
See Figure \ref{fig:present} for an illustration.

\begin{figure}[t]
\centering
\begin{tikzpicture}[xscale=.6,yscale=.35]
  \begin{scope}[line cap=rect]
  \tikzstyle{dashed}=[line cap=butt,dash pattern=on 0pt off 5pt on 3pt off 3pt on 3pt off 3pt on 3pt off 100pt];
  \clip (-5,0) rectangle (17.5,8.3);
  \draw[line width=4pt,color=gray,opacity=.5,rounded corners=10pt] (-4.5,0)--(-3,8)--(11,8);
  \draw[line width=4pt,color=gray,opacity=.5,dashed] (11,8)--(12.5,8);
  \draw[line width=4pt,color=gray,opacity=.5,rounded corners=10pt] (-3.5,0)--(-3,2)--(11,2);
  \draw[line width=4pt,color=gray,opacity=.5,dashed] (11,2)--(12.5,2);
  \draw[line width=4pt,color=red,opacity=.5,rounded corners=10pt] (1,0)--(2,4)[rounded corners=5pt]--(3.5,4)--(4.5,6)--(8.5,6)--(9.5,5)--(11,5);
  \draw[line width=4pt,color=red,opacity=.5,dashed] (11,5)--(12.5,5);
  \draw[line width=4pt,color=blue,opacity=.5,rounded corners=10pt] (6,0)--(7,4)--(11,4);
  \draw[line width=4pt,color=blue,opacity=.5,dashed] (11,4)--(12.5,4);
  \end{scope}
  \node[inner sep=1pt,fill=white] at (-2,8) {$A_1$};
  \node[inner sep=1pt,fill=white] at (-2,2) {$A_2$};
  \node[diamond,aspect=2,inner sep=0pt,fill=white] at (4,5) {$R_1$};
  \node[inner sep=1pt,fill=white] at (9,4) {$R_2$};
  \fill[gray,opacity=.5] (0,0) rectangle (3,5);
  \fill[gray,opacity=.5] (5,0) rectangle (8,5);
  \draw (-1,0)--(-1,7)--(10.5,7)--(10.5,0);
  \draw (0,0)--(0,5)--(3,5)--(3,0);
  \draw (5,0)--(5,5)--(8,5)--(8,0);
  \draw (-5,0)--(17.5,0);
  \begin{scope}[yshift=-7pt]
  \draw[decorate,decoration=brace] (3,0)--(0,0);
  \draw[decorate,decoration=brace] (8,0)--(5,0);
  \end{scope}
  \node[inner sep=1pt,anchor=west] (a) at (-5,-1.5) {\present{$k$, $\ell-1$, $m$, $A_1$, $A_2$}};
  \node[inner sep=1pt,anchor=west] (b) at (-5,-2.9) {\present{$k$, $\ell-1$, $m$, $A_1\cup R_1$, $A_2$}};
  \draw[dotted] (6.5,-22pt)--(6.5,-2.9)--(b.east);
\end{tikzpicture}\\[2ex]
\begin{tikzpicture}[xscale=.6,yscale=.375]
  \begin{scope}[line cap=rect]
  \tikzstyle{dashed}=[line cap=butt,dash pattern=on 0pt off 5pt on 3pt off 3pt on 3pt off 3pt on 3pt off 100pt];
  \clip (-5,0) rectangle (17.5,8.3);
  \draw[line width=4pt,color=gray,opacity=.5,rounded corners=10pt] (-4.5,0)--(-3,8)--(16,8);
  \draw[line width=4pt,color=gray,opacity=.5,dashed] (16,8)--(17.5,8);
  \draw[line width=4pt,color=gray,opacity=.5,rounded corners=10pt] (-3.5,0)--(-3,2)--(16,2);
  \draw[line width=4pt,color=gray,opacity=.5,dashed] (16,2)--(17.5,2);
  \draw[line width=4pt,color=red,opacity=.5,rounded corners=10pt] (1,0)--(2,4)[rounded corners=5pt]--(3.5,4)--(4.5,6)--(13.5,6)--(14.5,5)--(16,5);
  \draw[line width=4pt,color=red,opacity=.5,dashed] (16,5)--(17.5,5);
  \draw[line width=4pt,color=red,opacity=.5,rounded corners=10pt] (6,0)--(7,4){[rounded corners=5pt]--(8.5,4)--(9.5,3)}--(13.75,3)--(14.75,0);
  \draw[line width=4pt,color=blue,opacity=.5,rounded corners=10pt] (11,0)--(12,4)--(16,4);
  \draw[line width=4pt,color=blue,opacity=.5,dashed] (16,4)--(17.5,4);
  \end{scope}
  \node[inner sep=1pt,fill=white] at (-2,8) {$A_1$};
  \node[inner sep=1pt,fill=white] at (-2,2) {$A_2$};
  \node[diamond,aspect=2,inner sep=0pt,fill=white] at (4,5) {$R_1$};
  \node[diamond,aspect=.7,inner sep=0pt,fill=white] at (9,3.5) {$R_2$};
  \node[inner sep=1pt,fill=white] at (14,4) {$R_3$};
  \fill[gray,opacity=.5] (0,0) rectangle (3,5);
  \fill[gray,opacity=.5] (5,0) rectangle (8,5);
  \fill[gray,opacity=.5] (10,0) rectangle (13,5);
  \draw (-1,0)--(-1,7)--(15.5,7)--(15.5,0);
  \draw (0,0)--(0,5)--(3,5)--(3,0);
  \draw (5,0)--(5,5)--(8,5)--(8,0);
  \draw (10,0)--(10,5)--(13,5)--(13,0);
  \draw (-5,0)--(17.5,0);
  \begin{scope}[yshift=-7pt]
  \draw[decorate,decoration=brace] (3,0)--(0,0);
  \draw[decorate,decoration=brace] (8,0)--(5,0);
  \draw[decorate,decoration=brace] (13,0)--(10,0);
  \end{scope}
  \node[inner sep=1pt,anchor=west] (a) at (-5,-1.5) {\present{$k$, $\ell-1$, $m$, $A_1$, $A_2$}};
  \node[inner sep=1pt,anchor=west] (b) at (-5,-2.9) {\present{$k$, $\ell-1$, $m$, $A_1\cup R_1$, $A_2$}};
  \node[inner sep=1pt,anchor=west] (c) at (-5,-4.3) {\present{$k-1$, $2m$, $m$, $A_1\cup R_1$, $A_2\cup R_2$}};
  \draw[dotted] (6.5,-22pt)--(6.5,-2.9)--(b.east);
  \draw[dotted] (11.5,-22pt)--(11.5,-4.3)--(c.east);
\end{tikzpicture}
\captionsetup{singlelinecheck=off}
\caption[]{A recursion step of \present{$k$, $\ell$, $m$, $A_1$, $A_2$} with $k\geq 2$ and $3\leq\ell\leq 2m$ illustrated by how an interval filament model of the resulting graph might look in each of the two cases:
\begin{itemize}
\item top---at least $\ell m^{k-2}-1$ colors have been used on $R_1\cup R_2$,
\item bottom---at most $\ell m^{k-2}-2$ colors have been used on $R_1\cup R_2$.
\end{itemize}
Thick lines illustrate bundles of pairwise non-intersecting interval filaments, as labeled.
Each gray box covers the filaments of the vertices presented in a direct recursive call to {\tt present}; these filaments stay inside the box (unless they belong to the result set $R_i$) and cross all filaments piercing the box from left to right (e.g.\ those in $A_2$).
The white box plays the analogous role for the call to \present{$k$, $\ell$, $m$, $A_1$, $A_2$} itself.}
\label{fig:present}
\end{figure}

Our goal is to show that the initial call to \present{$k$, $2m$, $m$, $\emptyset$, $\emptyset$} yields a strategy that obeys the rules of the game $\gamABS(k)$ and forces the use of at least $2m^{k-1}-1$ colors.
This is achieved by the following lemma.

\begin{lemma}
\label{lem:strategy}
As a result of a call to\/ \present{$k$, $\ell$, $m$, $A_1$, $A_2$}, a set of vertices\/ $S$ is presented, relations\/ $\includes$, $\overlaps$ and\/ $\parallel$ are declared, and a set\/ $R\subseteq S$ is returned so that
\begin{enumerate}
\item\label{item:strat-A} $A_1\includes S$, $A_2\overlaps S$, and\/ $P\setminus(A_1\cup A_2)\parallel S$, where\/ $P$ denotes the set of vertices that have been presented before the call to\/ \present{$k$, $\ell$, $m$, $A_1$, $A_2$},
\item\label{item:strat-axioms} any\/ $x,y,z\in S$ satisfy the conditions \ref{item:A1}--\ref{item:A4} of the definition of\/ $\gamABS(k)$,
\item\label{item:strat-R} any\/ $x,y\in S$ satisfy the following conditions:
\begin{enumeratex}{B}
\item\label{item:R1} if\/ $x\prec y$ and\/ $x,y\in R$, then\/ $x\includes y$,
\item\label{item:R2} if\/ $x\includes y$ and\/ $y\in R$, then\/ $x\in R$,
\item\label{item:R3} if\/ $x\parallel y$, then\/ $x\in S\setminus R$,
\end{enumeratex}
\item\label{item:strat-omega} the graph defined on\/ $S$ by the relation\/ $\overlaps$ has clique number at most\/ $k$,
\item\label{item:strat-colors} Algorithm has used at least\/ $\ell m^{k-2}-1$ colors on the vertices in\/ $R$.
\end{enumerate}
\end{lemma}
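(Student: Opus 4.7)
The proof will proceed by induction on $k$, with an inner induction on $\ell$ for each fixed $k \geq 2$, verifying conditions (i)--(v) simultaneously and mirroring the recursive structure of the procedure.

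For the base case $k = 1$, the procedure presents a single vertex $y$ and returns $R = S = \{y\}$; conditions (i)--(iv) are immediate, and (v) holds because the only way \present{$1, \ell, m, A_1, A_2$} is ever invoked is with $\ell = 2m$ (reached from \present{$2, 2, m, \ast, \ast$}), so $\ell m^{k-2} - 1 = 1$ is matched by the single color assigned to $y$. The case $k \geq 2, \ell = 2$ inherits (i)--(iii) verbatim from \present{$k-1, 2m, m, A_1, A_2$}; (iv) is trivial since $\omega \leq k-1 \leq k$, and (v) follows because $2m \cdot m^{k-3} - 1 = \ell m^{k-2} - 1$.

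The substantive case is $k \geq 2, \ell > 2$. Write $S_i, R_i$ for the data produced by the $i$-th recursive call (with $i = 3$ present only in the else-branch). The inductive hypothesis yields (i)--(v) for each call; applying (i) to the second and third calls pins down the cross relations inside $S$:
\[ R_1 \includes (S_2 \cup S_3), \quad (S_1 \setminus R_1) \parallel (S_2 \cup S_3), \quad R_2 \overlaps S_3, \quad (S_2 \setminus R_2) \parallel S_3. \]
Condition (i) for the outer call is then a routine chaining of these three inductive statements. Conditions (ii) (axioms A1--A4) and (iii) (R1--R3) reduce to a finite case analysis over the block partition $\{R_1, S_1 \setminus R_1, R_2, S_2 \setminus R_2, R_3, S_3 \setminus R_3\}$: within-block triples invoke the inductive A1--A4 or R1--R3, while cross-block triples are resolved by the four relations above. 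Many cross configurations are vacuous (for instance, $y \includes z$ with $y \in S_2$, $z \in S_3$ cannot occur, since the only $S_2 \to S_3$ relations are $\overlaps$ and $\parallel$), and the nontrivial cases are settled by invoking the inductive R2 to force $x$ into the appropriate $R_i$. Condition (iv) uses that the only cross $\overlaps$ edges in $S$ are between $R_2$ and $S_3$; combined with the inductive R1 (which forces any two vertices of $R_2$ to be $\includes$-related, hence not $\overlaps$-related), any clique in $G[S]$ not entirely contained in some $S_i$ consists of at most one vertex of $R_2$ together with a clique in $S_3$, giving $\omega(G[S]) \leq 1 + (k-1) = k$.

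The heart of the argument, and the step I expect to be the main obstacle, is condition (v) in the else-branch. Let $X_i$ denote the set of colors used by Algorithm on $R_i$. By induction, $|X_1|, |X_2| \geq (\ell-1) m^{k-2} - 1$ and $|X_3| \geq 2 m^{k-2} - 1$. The else-guard gives $|X_1 \cup X_2| \leq \ell m^{k-2} - 2$, so $|X_1 \setminus X_2| \leq m^{k-2} - 1$. The decisive observation is that $R_2 \overlaps R_3$: indeed $R_2 \subseteq A_2 \cup R_2$ and condition (i) for the third call gives $(A_2 \cup R_2) \overlaps S_3 \supseteq R_3$, so every pair in $R_2 \times R_3$ is an edge of the presented graph and Algorithm must color $R_2$ and $R_3$ with disjoint palettes; hence $X_2 \cap X_3 = \emptyset$, which forces $X_1 \cap X_3 \subseteq X_1 \setminus X_2$. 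A one-line inclusion--exclusion then yields
\[ |X_1 \cup X_3| \geq |X_1| + |X_3| - |X_1 \setminus X_2| \geq (\ell-1) m^{k-2} - 1 + 2 m^{k-2} - 1 - (m^{k-2} - 1) = \ell m^{k-2} - 1, \]
exactly the bound required for $R = R_1 \cup R_3$. The round count follows from the recurrence $T(k, \ell) \leq 2 T(k, \ell - 1) + T(k-1, 2m)$, which solves to $T(k, 2m) = 2^{O_k(m)}$, and the scenario count $2^{2^{O_k(m)}}$ follows because the strategy tree branches only at the if/else tests, of which there is at most one per recursive call. The remaining bookkeeping is systematic; the only genuinely novel ingredient is the isolation of $R_2 \overlaps R_3$ and its use to partition color palettes, without which (v) cannot be propagated through the third branch.
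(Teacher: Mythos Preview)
Your proof is correct and follows essentially the same inductive scheme as the paper: the same cross-block relations $R_1\includes S_2\cup S_3$, $(S_1\setminus R_1)\parallel S_2\cup S_3$, $R_2\overlaps S_3$, $(S_2\setminus R_2)\parallel S_3$ drive the case analysis for (ii)--(iv), and the key observation $R_2\overlaps R_3$ (hence $X_2\cap X_3=\emptyset$) carries (v). Your inclusion--exclusion for (v) is phrased via $|X_1\setminus X_2|$ rather than the paper's $|X_1\cap X_2|$, but the two computations are equivalent; note that your final paragraph on round and scenario counts actually pertains to Lemma~\ref{lem:subtree-strategy} rather than to the present lemma.
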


\begin{proof}
The recursion tree of calls to {\tt present} is finite, because every call to {\tt present} with first parameter $k$ and second parameter $\ell$ makes recursive calls to {\tt present} with $k$ smaller by $1$ or with $k$ unchanged and $\ell$ smaller by $1$.
The proof of the lemma goes by induction on the recursion tree.
That is, we prove the lemma for a particular call to \present{$k$, $\ell$, $m$, $A_1$, $A_2$}, called the \emph{current call} henceforth, assuming that the lemma holds for every recursive call to {\tt present} triggered as a result of the current call.
Even though the current call occurs in the context of a set of vertices $P$ presented before the call, only part \ref{item:strat-A} of the statement of the lemma is concerned about $P$.
In particular, the lemma considered just for the current call does \emph{not} assert the conditions \ref{item:A1}--\ref{item:A4} for all triples of vertices in $P\cup S$; they are to be considered at higher levels of induction.

First, consider the case that the current call is a leaf of the recursion tree, which happens if and only if $k=1$.
In that case, \ref{item:strat-A} is a direct consequence of how the relations $\includes$, $\overlaps$ and $\parallel$ are declared when a new vertex is presented by the procedure, and \ref{item:strat-axioms}--\ref{item:strat-colors} hold trivially, because they are concerned about a set $S$ that consists of just one vertex.

Next, if $k\geq 2$ and $\ell=2$, then \ref{item:strat-A}--\ref{item:strat-colors} follow directly from the induction hypothesis applied to the recursive call with $k-1$ and $2m$ in place of $k$ and $\ell$.

Finally, for the rest of the proof, consider the case that $k\geq 2$ and $3\leq\ell\leq 2m$.
The current call yields two or three recursive calls to {\tt present}, as a result of which three sets of vertices $S_1$, $S_2$ and $S_3$ are presented, respectively (we let $S_3=\emptyset$ if there is no third recursive call).
Thus $A_1\cup A_2\prec S_1\prec S_2\prec S_3$ and $S=S_1\cup S_2\cup S_3$.
The induction hypothesis \ref{item:strat-A} applied to the recursive calls implies \ref{item:strat-A} for $S$ as well as the following:
\begin{equation*}
\label{eq:S}
R_1\includes S_2\cup S_3,\quad S_1\setminus R_1\parallel S_2\cup S_3,\quad R_2\overlaps S_3,\quad S_2\setminus R_2\parallel S_3.\tag{$*$}
\end{equation*}

To prove \ref{item:strat-axioms} for $S$, choose any $x,y,z\in S$ with $x\prec y\prec z$.
If $x,y,z\in S_i$, then all \ref{item:A1}--\ref{item:A4} follow directly from the induction hypothesis \ref{item:strat-axioms} for the recursive calls.
If $x\in S_i$ and $y\in S_j$ with $i<j$, then, by \eqref{eq:S}, the relation between $x$ and $y$ is the same as the relation between $x$ and $z$, whence all \ref{item:A1}--\ref{item:A4} follow.
It remains to consider the case that $x,y\in S_i$ and $z\in S_j$ with $i<j$.
To this end, we use \eqref{eq:S} and the induction hypothesis \ref{item:strat-R} applied to the recursive calls.
\begin{enumeratex}{A}
\item Suppose $x\includes y$ and $y\includes z$.
It follows from $y\includes z$ and \eqref{eq:S} that $y\in R_1$ and $z\in S_2\cup S_3$.
This and $x\includes y$ imply $x\in R_1$, by \ref{item:R2}.
Hence $x\includes z$, by \eqref{eq:S}.
\item Suppose $x\includes y$ and $y\overlaps z$.
It follows from $y\overlaps z$ and \eqref{eq:S} that $y\in R_2$ and $z\in S_3$.
This and $x\includes y$ imply $x\in R_2$, by \ref{item:R2}.
Hence $x\overlaps z$, by \eqref{eq:S}.
\item Suppose $x\overlaps y$ and $y\includes z$.
It follows from $y\includes z$ and \eqref{eq:S} that $y\in R_1$ and $z\in S_2\cup S_3$.
This and $x\overlaps y$ imply $x\in S_1\setminus R_1$, by \ref{item:R1}.
Hence $x\parallel z$, by \eqref{eq:S}.
\item If $x\parallel y$, then $x\in S_i\setminus R_i$, by \ref{item:R3}.
This and $z\in S_j$ with $i<j$ imply $x\parallel z$, by \eqref{eq:S}.
\end{enumeratex}

To prove \ref{item:strat-R} for $R$, choose any $x,y\in S$ with $x\prec y$.
If $x,y\in S_i$, then all \ref{item:R1}--\ref{item:R3} follow directly from the induction hypothesis \ref{item:strat-R} applied to the recursive calls and from the fact that $R\cap S_i=R_i$ or $R\cap S_i=\emptyset$.
It remains to consider the case that $x\in S_i$ and $y\in S_j$ with $i<j$.
To this end, we use \eqref{eq:S} and the fact that the procedure {\tt present} returns $R=R_1\cup R_2$ or $R=R_1\cup R_3$.
\begin{enumeratex}{B}
\item If $x,y\in R$, then $x\in R_1$ and $y\in R_j$, by the definition of $R$, so $x\includes y$, by \eqref{eq:S}.
\item If $x\includes y$, then $x\in R_1$, by \eqref{eq:S}, so $x\in R$, by the definition of $R$.
\item If $x\parallel y$, then $x\in S_i\setminus R_i$, by \eqref{eq:S}, so $x\in S\setminus R$, by the definition of $R$.
\end{enumeratex}

We have $\omega(G[S])=\max\{\omega(G[S_1]),\omega(G[S_2]),\omega(G[S_3])+1\}\leq k$, by \eqref{eq:S} and by the property \ref{item:R1} of $R_2$.
Hence we have \ref{item:strat-omega} for $S$.

Finally, we prove \ref{item:strat-colors} for $R$.
If Algorithm has used at least $\ell m^{k-2}-1$ colors on $R_1\cup R_2$, then the call returns $R=R_1\cup R_2$, so \ref{item:strat-colors} holds.
It remains to consider the opposite case---that at most $\ell m^{k-2}-2$ colors have been used on $R_1\cup R_2$ and the call returns $R=R_1\cup R_3$.
By \ref{item:strat-colors} applied to $R_1$ and $R_2$, Algorithm has used at least $(\ell-1)m^{k-2}-1$ colors on each of $R_1$ and $R_2$.
It follows that at least $(\ell-2)m^{k-2}$ common colors have been used on both $R_1$ and $R_2$.
By \ref{item:strat-colors} applied to $R_3$, Algorithm has used at least $2m^{k-2}-1$ colors on $R_3$.
Since $R_2\overlaps R_3$, these colors must be different from the common colors used on both $R_1$ and $R_2$.
Therefore, at least $\ell m^{k-2}-1$ colors have been used on $R_1\cup R_3$, which proves \ref{item:strat-colors} for $R$.
\end{proof}

\begin{proof}[Proof of Lemma \ref{lem:subtree-strategy}]
By Lemma \ref{lem:strategy} \ref{item:strat-axioms}, \ref{item:strat-omega} and \ref{item:strat-colors}, the strategy of Presenter described by a call to \present{$k$, $2m$, $m$, $\emptyset$, $\emptyset$} obeys the rules of the game $\gamABS(k)$ and forces Algorithm to use at least $2m^{k-1}-1$ colors in total.
It remains to prove that the number of presentation scenarios for all possible responses of Algorithm is $2^{2^{\smash{O_k(m)}}}$.

The only conditional instruction in the procedure {\tt present} whose result is not determined by the values of $k$, $\ell$ and $m$ but depends on the coloring chosen by Algorithm is the test whether ``Algorithm has used at least $\ell m^{k-2}-1$ colors on $R_1\cup R_2$''.
We call it simply a \emph{test}.

Let $s_{k,\ell}$ and $c_{k,\ell}$ denote the maximum number of vertices that can be presented and the maximum number of tests that can be performed, respectively, as a result of a call to \present{$k$, $\ell$, $m$, $A_1$, $A_2$}.
It easily follows from the procedure that
\begin{gather*}
\begin{alignedat}{4}
s_{1,\ell}&=1, &\qquad s_{k,2}&=s_{k-1,2m} &\quad &\text{for $k\geq 2$},\\
c_{1,\ell}&=0, &\qquad c_{k,2}&=c_{k-1,2m} &\quad &\text{for $k\geq 2$},
\end{alignedat}\\
\begin{alignedat}{2}
s_{k,\ell}&\leq 2s_{k,\ell-1}+s_{k-1,2m} &\quad &\text{for $k\geq 2$ and $3\leq\ell\leq 2m$},\\
c_{k,\ell}&\leq 2c_{k,\ell-1}+c_{k-1,2m}+1 &\quad &\text{for $k\geq 2$ and $3\leq\ell\leq 2m$}.
\end{alignedat}
\end{gather*}
This yields the following by straightforward induction:
\begin{gather*}
\begin{alignedat}{3}
s_{1,2m}&=1, &\qquad s_{k,2m}&\leq(2^{2m-1}-1)s_{k-1,2m} &\quad &\text{for $k\geq 2$},\\
c_{1,2m}&=0, &\qquad c_{k,2m}&\leq(2^{2m-1}-1)(c_{k-1,2m}+1)-1 &\quad &\text{for $k\geq 2$},
\end{alignedat}\\
\begin{aligned}
s_{k,2m}&\leq(2^{2m-1}-1)^{k-1},\\
c_{k,2m}&\leq(2^{2m-1}-1)^{k-1}-1.
\end{aligned}
\end{gather*}

For fixed $k$ and $m$, the outcome of the call to \present{$k$, $2m$, $m$, $\emptyset$, $\emptyset$} (that is, the sequence of vertices presented and the relations $\includes$, $\overlaps$ and $\parallel$ declared) is entirely determined by the outcomes of the tests performed as a result of that call.
Since at most $c_{k,2m}$ tests are performed, the number of possible outcomes of the call is at most $2^{c_{k,2m}}$.
Since at most $s_{k,2m}$ vertices are presented, each outcome of the call gives rise to at most $s_{k,2m}$ presentation scenarios, each corresponding to an initial segment of the sequence of vertices presented.
Therefore, the total number of presentation scenarios possible with this strategy is at most $2^{c_{k,2m}}s_{k,2m}$, which is $2^{2^{\smash{O_k(m)}}}$.
\end{proof}

Lemmas \ref{lem:subtree-strategy}, \ref{lem:game-lower} and \ref{lem:subtree-game} yield a construction of clean subtree overlap graphs (string graphs) with $\chi=\Theta_\omega((\log\log n)^{\omega-1})$.
This completes the proof of Theorem \ref{thm:subtree}~\ref{item:clean-subtree-lower}.

The graphs constructed above also satisfy $\chi_\omega=\Theta_\omega(\log\log n)$, because every color class of a $K_{\omega(G)}$-free coloring of a clean subtree overlap graph $G$ induces a subgraph with chromatic number $O_{\omega(G)}((\log\log n)^{\smash[t]{\omega(G)}-2})$, by Theorem \ref{thm:subtree}~\ref{item:clean-subtree-upper}.
All intersection models of the graphs constructed above require that some pairs of curves intersect many times.
This is because these graphs contain vertices whose neighborhoods have chromatic number $\Theta_\omega((\log\log n)^{\omega-2})$, while the neighborhood of every vertex of an intersection graph of $1$-intersecting curves (that is, curves any two of which intersect in at most one point) has bounded chromatic number \cite{SW15}.
We wonder whether it is possible to construct intersection graphs of $1$-intersecting curves with bounded clique number and with chromatic number asymptotically greater than $\log\log n$ (that $\chi=\Theta(\log\log n)$ can be achieved follows from the results of \cite{PKK+13,PKK+14}).

\section*{Acknowledgments}

We thank Martin Pergel for familiarizing us with subtree overlap graphs and interval filament graphs and for asking whether these classes of graphs are $\chi$-bounded.
We also thank anonymous reviewers for their very helpful corrections and comments.

\end{document}